\newcommand{\R}{\mathbb R}
\newcommand{\Z}{\mathbb Z}
\newcommand{\C}{\mathbb C}
\newcommand{\N}{\mathbb{N}}
\newcommand{\T}{\mathbb{T}}
\newtheorem{thm}{Theorem}[section]
\newtheorem{lem}[thm]{Lemma}
\newtheorem{cor}[thm]{\bf Corollary}
\newtheorem{rem}[thm]{\bf Remark}
\theoremstyle{definition}
\newtheorem{defn}[thm]{Definition}
\theoremstyle{statement}
\numberwithin{equation}{section}
\begin{document}
	
\title[]{Anderson localized states for the quasi-periodic nonlinear Schr\"odinger  equation on $\mathbb Z^d$}

\author[]{Yunfeng Shi}
\address[Y.S.]{School of Mathematics,
Sichuan University,
Chengdu 610064,
China}
\email{yunfengshi@scu.edu.cn}
\author[]{W.-M. Wang}
\address[W.W.] {CNRS and D\'epartment De Math\'ematique ,
Cergy Paris Universit\'e,
Cergy-Pontoise Cedex 95302,
France}
\email{wei-min.wang@math.cnrs.fr}

\date{\today}

\keywords{Anderson localization, discrete nonlinear Schr\"odinger equation, quasi-periodic potential, Diophantine estimates, semi-algebraic sets, Lyapunov-Schmidt decomposition, Green's function estimates, Newton iteration}

\begin{abstract} We establish large sets of Anderson localized states for the quasi-periodic nonlinear Schr\"odinger equation on $\mathbb Z^d$, thus extending  Anderson localization from the linear  (cf. Bourgain [Geom. Funct. Anal., 17(3):682--706,  2007])  to a nonlinear  setting,  and  the random (cf. Bourgain-Wang [J. Eur. Math. Soc., 10(1):1--45, 2008]) to a deterministic  setting. Among the main ingredients are a new Diophantine estimate of quasi-periodic functions in arbitrarily  dimensional phase space,  and the application of Bourgain's geometric lemma  in  [Geom. Funct. Anal.,  17(3):682--706,  2007]. 

\end{abstract}

\maketitle
	\section{Introduction and Main Result}
	Let
	\begin{align*}
\bm\alpha&=(\alpha_1,\cdots,\alpha_d)\in [0,1]^d,\\
\bm \theta&=(\theta_1,\cdots, \theta_d)\in [0,1]^d,
\end{align*}
and consider the nonlinear Schr\"odinger  equation with a quasi-periodic potential on $\mathbb Z^d$: 
	\begin{align}\label{NLS}
iu_{t}+(\varepsilon \Delta+V(\bm\theta+\bm n \bm \alpha)\delta_{\bm n, \bm n'})u+\delta |u|^{2p}u=0, 
\end{align}
where $\varepsilon$ and $\delta$ are parameters in $[0, 1]$, $\bm n\in\mathbb Z^d$,
$p\in\N$ and $\Delta (\bm n, \bm n')=\delta_{|\bm n-\bm n'|_1, 1}$  denotes the {adjacency  Laplacian}  with $|\bm n|_1:=\sum\limits_{\ell=1}^d|n_\ell|$
and $$\bm n\bm \alpha=(n_1\alpha_1,\cdots, n_d\alpha_d)\in \R^d;$$
the potential $V$ is a trigonometric polynomial: 
	\begin{align*}
	V(\bm \theta)=\sum_{ \bm\ell \in \Gamma_K}v_{\bm\ell}\cos2\pi( \bm\ell  \cdot\bm\theta),\   \bm\ell\cdot \bm \theta=\sum_{s=1}^d\ell_s\theta_s, 
	\end{align*}
	with  $\Gamma_K\subset [-K, K]^d\setminus\{\bm 0\},\ K\geq 1$  
and 
	$$\bm v=(v_{\bm \ell})_{\bm \ell\in\Gamma_K}\in \R^{\#\Gamma_K}\setminus\{\bm 0\},$$
	where $\#(\cdot)$ denotes the cardinality of a set. We further assume that the set $\Gamma_K$  is {\it maximal} so that: 
	\begin{equation}\label{(1.2)}
	\begin{aligned}
	&\text {if } \bm\ell\in\Gamma_K, {\rm then},  \ell_s\neq 0\ {\rm for}\  \forall \ 1\leq s\leq d; \\
	&\text {and if }\ \bm\ell, \bm\ell'\in\Gamma_K, \ {\rm then}\ \bm\ell+\bm\ell'\neq \bm 0.
	\end{aligned}
	\end{equation}

	We assume that $V$ is non-degenerate:  For any fixed $1\leq \ell\leq  d$ and $(\theta_s)_{s\neq \ell}$,   $f(\theta_\ell):=V((\theta_s)_{s\neq \ell}, \theta_\ell)$
	is not a constant function in $\theta_\ell.$ 
	Note that the first condition in 
	\eqref{(1.2)} then says that, in fact, each term in the polynomial is non-degenerate.
		

When $\delta=0$, it is known  from the breakthrough paper \cite{Bou07} (cf. also the recent refinement \cite{JLS20}) that the linear operator 
\begin{equation}\label{SO}
\mathcal H=\varepsilon \Delta+V(\bm n\bm \alpha+\bm \theta)\delta_{\bm n, \bm n'}\  {\rm on}\  \mathbb Z^d,
\end{equation}
exhibits Anderson localization, namely pure point spectrum with exponentially decaying eigenfunctions,
 for small $\varepsilon$, on a large set in $(\bm\alpha, \bm \theta)$. 
It follows that  when $0<\varepsilon \ll 1$, the linear Schr\"odinger equation 
 \begin{equation}\label{ls}
 iu_{t}+(\varepsilon \Delta+V(\bm n\bm \alpha+\bm \theta)\delta_{\bm n, \bm n'})u=0 
 \end{equation}
 has only Anderson localized states, i.e., roughly speaking, wave packets localized about the origin
 remain localized for all time.  
 
The present paper addresses the persistence question when $\delta\neq 0$, namely the existence of Anderson localized type solutions for the nonlinear Schr\"odinger equation \eqref{NLS}
when $0<\varepsilon, \delta\ll 1$, under appropriate conditions on $\bm\alpha$ and $\bm\theta$. Since both $\varepsilon$ and $\delta$ are small, we start from the (decoupled) equation  
\begin{equation}\label{decoup}
iu_{t}+V(\bm n \bm\alpha+\bm\theta)\delta_{\bm n, \bm n'}u=0.
\end{equation}
It has solutions of the form 
\begin{equation}\label{qp0}
u^{(0)}(t, \bm n)=\sum_{\bm n\in \Z^d} a_{\bm n} e^{i\mu_{\bm n} t},
\end{equation}
where 
$$\mu_{\bm n}={V(\bm n\bm\alpha+\bm \theta)}$$
and $a_{\bm n}$ decays rapidly. The solutions \eqref{qp0}, in general, have infinite number of frequencies and are {\it almost-periodic} in time. 

We study the persistence of the above type of solutions with finite (but arbitrary) number of frequencies in time, i.e.,  the {\it quasi-periodic} in time solutions.
Denoting the frequency by $\bm \omega$, $\bm \omega= (\omega_1, \omega_2,\cdots, \omega_b)$, as an Ansatz, we seek solutions to \eqref{NLS} in the form of a convergent series:
\begin{align}\label{Ant}
	u(t,\bm n)=\sum_{(\bm k, \bm n)\in\Z^b\times\Z^d}\hat u (\bm k,\bm n) e^{i\bm k\cdot\bm \omega t}.
	\end{align} 
Note that solutions 
of the above form are consistent with the nonlinearity in \eqref{NLS}.
Denote by ${\rm meas}(\cdot)$ the Lebesgue measure of a set.  Let $|\cdot|$ denote  the supremum norm.  We have


\begin{thm}\label{mthm}
Fix (any) distinct  $\bm n_1,\cdots,\bm n_b\in \Z^d$ and let 
\begin{align}\label{u0}
u^{(0)}(t,\bm n)=\sum_{\ell=1}^ba_\ell e^{i\omega_\ell^{(0)}t}\delta_{\bm n_{\ell}, \bm n}
\end{align}
 be a solution to \eqref{decoup},
where $\bm a=(a_\ell)_{\ell =1}^b\in [1,2]^b$ and
$$\bm \omega^{(0)}=(\omega_\ell^{(0)})_{\ell=1}^b=\left({V(\bm n_{\ell}\bm\alpha+\bm \theta)}\right)_{\ell=1}^b. $$
Then for $0<\varepsilon\leq\delta\leq\log^{-1}\frac1\varepsilon\leq \delta_0(b,d,K,V,\max\limits_{1\leq \ell\leq b}|\bm n_\ell|)\ll1$,  there is a set $\mathcal{W}\subset [0,1]^{2d}$ in $(\bm\alpha,\bm\theta)$, satisfying
	${\rm meas}([0,1]^{2d}\setminus\mathcal{W})\leq \log^{ -c} \frac{1}{\varepsilon+\delta}$ $(c=c(b,d)>0)$  such that,  for any $(\bm \alpha, \bm \theta)\in \mathcal{W}$, 
	there exist a diffeomorphism $\bm\omega=\bm\omega(\bm a)$ on $[0,1]^b$  and a set in $\bm a$,   $\mathcal{R}\subset [1,2]^b$ of ${\rm meas}([1,2]^b\setminus\mathcal{R})\leq  (\varepsilon+\delta)^{c}$  so that  
	the following holds true:
	If $\bm a\in \mathcal{R}$ and $\bm\omega=\bm\omega(\bm a)$, then  $|\bm \omega-\bm \omega^{(0)}|\lesssim \delta$  and 
	\begin{align*}
	u(t,\bm n)=\sum_{\bm k\in\Z^b}\hat u (\bm k,\bm n)  e^{i\bm k\cdot\bm\omega t}
	\end{align*}
			is a solution to \eqref{NLS}. Moreover, one has
			\begin{align*}
			 \hat u (\bm e_\ell, \bm n_{\ell})&=a_\ell, \ 1\leq \ell\leq b,\\
			\sum _{(\bm k, \bm n)\notin \mathcal{S}_+} |\hat u (\bm k,\bm n)|e^{|\bm k|+|\bm n|}&<\sqrt{\varepsilon+\delta},
			\end{align*}
			where $\mathcal{S}_+=\{(\bm e_\ell,\bm n_{\ell})_{\ell=1}^b\}$ with $\bm e_\ell \ (1\leq\ell \leq b)$ the standard basis vectors for $\Z^b$  (i.e., $\bm e_\ell=(\delta_{\ell,j})_{j=1}^b$). 
			\end{thm}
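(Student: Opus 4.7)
The plan is a Lyapunov--Schmidt reduction combined with a Nash--Moser/Newton iteration performed in the Fourier variables $(\bm k, \bm n) \in \Z^b \times \Z^d$. Substituting the Ansatz \eqref{Ant} into \eqref{NLS} reduces the problem to finding zeros of
\[
F(\hat u, \bm\omega)(\bm k, \bm n) = (\bm k \cdot \bm\omega - V(\bm n \bm\alpha + \bm\theta))\, \hat u(\bm k, \bm n) - \varepsilon (\Delta \hat u)(\bm k, \bm n) - \delta\, \widehat{|u|^{2p} u}(\bm k, \bm n),
\]
starting from the initial datum $\hat u^{(0)}$ supported on the finite set $\mathcal S_+$. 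The diagonal of the linearization $DF$ at $(\hat u^{(0)}, \bm \omega^{(0)})$ vanishes precisely on $\mathcal S_+$, so the iteration naturally splits into a $P$-equation on $\mathcal S_+$, which will determine $\bm \omega = \bm \omega(\bm a)$ by an amplitude-to-frequency modulation argument, and a $Q$-equation on the complement, which will solve for the remaining Fourier coefficients.

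The core of the argument is the $Q$-equation, which I would treat by a Newton iteration on dyadic scales $N_j$. At each scale one must invert the linearized operator $\mathcal L_{N_j}$ restricted to the Fourier box of side $N_j$ in $\Z^{b+d}$, and prove off-diagonal exponential decay $|\mathcal L_{N_j}^{-1}(\bm x, \bm y)| \leq e^{-\rho|\bm x - \bm y|}$. After freezing the current approximate solution, $\mathcal L_{N_j}$ is essentially a quasi-periodic Schr\"odinger-type operator on $\Z^{b+d}$ of the form studied in \cite{Bou07}, with the extra $b$ "time" coordinates contributing the linear potential $-\bm k \cdot \bm \omega$, the spatial directions carrying $V(\bm n \bm \alpha + \bm \theta)$, and the lattice Laplacian $\varepsilon \Delta$ as the off-diagonal coupling. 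The inductive Green's function step then proceeds by the resolvent identity combined with Bourgain's geometric lemma from \cite{Bou07}, which bounds the number and forces sufficient separation of the singular boxes inside a larger box. The non-degeneracy of $V$ together with the first condition in \eqref{(1.2)} guarantees the required separation in each coordinate direction.

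In parallel with the Newton scheme, at every scale one must excise a semi-algebraic set of parameters $(\bm\alpha, \bm\theta)$ on which the necessary eigenvalue separation fails. Here the new Diophantine estimate for quasi-periodic functions in arbitrary-dimensional phase space announced in the abstract is decisive: it replaces the one-dimensional Diophantine input used in \cite{Bou07} and provides the sub-logarithmic measure bound $\mathrm{meas}([0,1]^{2d} \setminus \mathcal W) \lesssim \log^{-c}\frac{1}{\varepsilon+\delta}$ after summing over scales. The $P$-equation is then inverted in $\bm a$ by an implicit function argument, after removing a further set $[1,2]^b \setminus \mathcal R$ of measure $\lesssim (\varepsilon+\delta)^c$ on which the modulation Jacobian degenerates, which fixes $\bm \omega(\bm a)$ and the values $\hat u(\bm e_l, \bm n_l) = a_l$.

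The main obstacle will be the Green's function estimate at each scale. Unlike in the linear problem \cite{Bou07}, the diagonal potential is updated at every Newton step by the self-consistent nonlinear correction coming from $\delta |u|^{2p} u$, so the resonance pattern evolves with the iteration; keeping the exponential decay of $\mathcal L_{N_j}^{-1}$ and the quantitative measure estimates simultaneously under control through all scales is where the combination of Bourgain's geometric lemma with the higher-dimensional Diophantine estimate must carry the full weight of the proof. The earlier random analog of this step in \cite{} was facilitated by independence; here the quasi-periodic structure must be exploited in its place, which is the novel technical point.
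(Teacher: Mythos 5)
Your high-level plan matches the paper's: Lyapunov--Schmidt decomposition combined with a multi-scale Newton iteration in the Fourier variables $(\bm k,\bm n)\in\Z^b\times\Z^d$, Green's function estimates for the linearized operator, semi-algebraic excision of bad parameters, Bourgain's geometric lemma from \cite{Bou07}, and the new Diophantine estimate for quasi-periodic trigonometric polynomials on $\T^d$. (The paper labels the range equation on the complement of $\mathcal S$ the $P$-equation and the bifurcation equation on $\mathcal S$ the $Q$-equation, i.e.\ opposite to your naming; this is purely conventional.) However, there are two substantive misconceptions in your sketch and one significant structural omission.

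First, you attribute the measure bound $\mathrm{meas}([0,1]^{2d}\setminus\mathcal W)\lesssim \log^{-c}\frac{1}{\varepsilon+\delta}$ to ``summing over scales'' of the higher-dimensional Diophantine estimate. In fact, the Diophantine set $\mathcal M$ carries the much smaller power-law estimate $\mathrm{meas}([0,1]^{2d}\setminus\mathcal M)\leq (\varepsilon+\delta)^{c'}$ (Theorem~\ref{clusthm}); the dominant $\log^{-c}$ loss comes from the set $\mathcal A=\cap_N\mathcal A_N$ produced by Bourgain's geometric lemma (Lemma~\ref{Boulem}), which satisfies $\mathrm{meas}(\mathcal A)=1-O(\log^{-c}\frac{1}{\varepsilon})$. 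The paper's $\mathcal W$ is $\{\mathcal A\times[0,1]^d\}\cap\mathcal M$, so the $\log$ loss is inherited from the linear Anderson localization input, not from the new Diophantine machinery.

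Second, you say $[1,2]^b\setminus\mathcal R$ is removed because ``the modulation Jacobian degenerates.'' It does not: for $\bm a\in[1,2]^b$ one has $\det(\partial\bm\omega/\partial\bm a)\sim\delta^b\prod_l(2pa_l^{2p-1})$, which is bounded below. The set $\mathcal R$ is excised because the large-deviation/Green's-function estimates hold only for $\bm\omega$ outside an $N$-dependent semi-algebraic set $\Omega\setminus\Omega_N$; these excisions are transported from $\bm\omega$ to $\bm a$ via the diffeomorphism $\bm\omega=\bm\omega^{(r)}(\bm a)$ and Bourgain's semi-algebraic projection lemma (Lemma~\ref{proj}). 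This is where the power-law bound $(\varepsilon+\delta)^c$ on $\mathrm{meas}([1,2]^b\setminus\mathcal R)$ actually originates.

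Third, your sketch treats the Green's function analysis as a single Newton-type iteration, whereas the paper's large deviation theorem (Theorem~\ref{ldtthm}) is proved in \emph{three} regimes: small scales handled by Neumann series, an \emph{intermediate} range $\log^{1/\rho_1}\frac{1}{\varepsilon+\delta}\lesssim N\leq(\varepsilon+\delta)^{-c}$ where the finite-scale Diophantine estimates in Theorem~\ref{clusthm} and a Schur-complement/Rouch\'e argument control resonances, and large scales $N>(\varepsilon+\delta)^{-c}$ where matrix-valued Cartan, semi-algebraic sets, the weak second Melnikov condition on $\bm\omega$, and Bourgain's geometric lemma (applied in the $\bm n$-directions to a split $\mathcal R_1\cup\mathcal R_2$ of the block) take over. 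The intermediate regime is needed precisely because, for fixed $(\bm\alpha,\bm\theta)$, the frequency $\bm\omega$ only varies in an $O(\delta)$ window, leaving a gap before Bourgain-type parameter excision becomes effective. Omitting this three-regime structure, and the weak second Melnikov conditions that restrict resonances to the $\bm k$-direction, is a gap that would prevent the sublinear bound required by Cartan's lemma from being established.
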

			
			\begin{rem} Theorem~\ref{mthm} readily generalizes to potentials $V$, which are given by finite Fourier series with both cosine and sine terms.
			\end{rem}

\begin{rem} 
Note that we do not require the linear equation \eqref {ls} to satisfy Anderson localization.  However, linear Anderson localization does hold
on a large subset of $\mathcal W$ (cf. \cite{Bou07, JLS20}).

\end{rem}

\subsection {Ideas of the proof} The multi-dimensional phase space, namely $\bm \theta$ being a vector
instead of a scalar as in e.g., \cite{SW23} poses a major challenge. Diophantine properties
of the eigenvalues of the linear Schr\"odinger operator \eqref{SO} plays an essential role in the proof. 
To $O(\delta)$, this could be replaced by Diophantine properties of the potential $V(\bm n\bm\alpha+\bm \theta)$
at different lattice sites $\bm n\in\mathbb Z^d$. We need to show, that the values of the function at these chosen sites are,
in some sense,  {\it linearly independent}.  This is, however, a priori, not obvious, since $V$ is a {\it given function}
on $\mathbb R^d$. To prove linear independence, we use a generalized Wronskian approach  and bound the determinant away from zero by carefully removing some $(\bm \alpha, \bm \theta)$.
This method is applicable to {\it any} trigonometric polynomials, generalizing the approach in \cite{SW23} for the cosine function on $\mathbb R$. It is independent of the main body of the proof and could be of independent interest. Such arguments may be closely related to  Diophantine approximations on manifold developed by Kleinbock and Margulis \cite{KM98}.

The other main ingredient of the proof is Bourgain's approach to Anderson localization of linear quasi-periodic 
Schr\"odinger operators in $d$-dimensions \cite{Bou07}, such as the $\mathcal H$ in \eqref{SO}.  This seems to be, so far,
the only method available to deal with the $d$-dimensional phase space, i.e., $\bm \theta \in [0, 1]^d$, $d\geq3$. 
Among other innovations, it uses semi-algebraic geometry arguments (e.g., Yomdin-Gromov algebraic lemma) to control the resonances. The difficulty of $d$-dimensional phase
space manifests as well in our paper, in that for the linear analysis, semi-algebraic geometry already seems indispensable to
control the resonances; while in previous works, see \cite{BW08, Wan16, LW22, SW23}, this role is filled by Diophantine properties.
To control the resonances in time, we also need a weak second Melnikov type estimates. This could be readily obtained, 
similar to \cite{Bou05, BW08, LW22, SW23}.  In this paper,  however,  we have to  overcome   much more serious resonances  in space (cf. Figure \ref{graph}),  which  seems  novel.  The proof is then accomplished by doing 
multi-scale analysis   with some  new  ideas.

{The general problem discussed here is the persistency of quasi-periodic solutions of linear or integrable equations after Hamiltonian perturbation. This
subject is closely related to the well-known ``KAM theory''  of invariant tori in
smooth dynamical systems.  Results along this line were first obtained  by Kuksin \cite{Kuk87} and Wayne  \cite{Way90},  stimulating   considerable progress, cf. e.g.,  \cite{KP96,CY00, BG01,  Bou05b, EK10, LY10, LY11, GT11, BBM13, PP15, EGK16, BBHM18, BKM18, CLSY18, Yua21, BHM23} to
name but a few.   The  existence  of quasi-periodic solutions can also be obtained  using the direct Craig-Wayne-Bourgain method \cite{CW93, Bou94, Bou98, Bou05} (cf. \cite{BW08, BB13, Wan16, HSSY, Wan21,Wan21b, LW22, SW23, KLW23} for more recent results), which turns out to be  more robust  when dealing with multi-dimensional Hamiltonian PDEs (cf.  Chapters 18--20 of \cite{Bou05}).  }

Finally, we  mention the work \cite{Yua02} in which a  KAM scheme  was first  developed to prove the  existence of quasi-periodic solutions for some nonlinear discrete  equations without external parameters (i.e., regarding  the initial states  as parameters).  Later in \cite{GYZ14}, the authors  also applied the KAM  method  to obtain the existence of quasi-periodic solutions for the nonlinear quasi-periodic Schr\"odinger equation on $\Z$.

\subsection {The nonlinear random Schr\"odinger equation}
When the potential $V$ with $V(\bm n)=V(\bm n \bm \alpha+\bm \theta)$ is replaced
by a random potential, for example,  with $V(\bm n)$, a family of
independently identically distributed random variables (with the uniform distribution),  it is known
from \cite{BW08} 
that the nonlinear random Schr\"odinger equation 
\begin{align*}
iu_{t}+\varepsilon\Delta u +Vu+\delta |u|^{2p}u=0\ {\rm on} \ \mathbb Z^d 
\end{align*}
has large sets of Anderson localized states for small $\varepsilon$ and $\delta$. 
When $d=1$, it is shown further in the important work \cite{LW22} that large sets of Anderson localized states persist for all $\varepsilon\neq 0$ 
(similar results could be proven for the 
nonlinear wave equation).  
Recall that  the first proof of  the Anderson localization for the linear random Schr\"odinger equation 
\begin{align*}
iu_{t}+\varepsilon\Delta u+Vu=0\   {\rm on }\  \mathbb Z^d 
\end{align*}
for small $\varepsilon$ was based on multi-scale analysis type Green's function estimates  developed in \cite{FS83}, see also \cite{AM93}. 

It is generally believed that the linear random Schr\"odinger equation and the linear 
quasi-periodic Schr\"odinger equation should share common localization features in the perturbative regime, despite
the quasi-periodic problems being more delicate and the results more
difficult to obtain. The use of semi-algebraic geometry and Cartan's Lemma \cite{BGS02, Bou07, JLS20}, for example,
certainly renders  the quasi-periodic issue more involved than that of the random, and this seems to continue to the nonlinear case, 
as mentioned earlier. Theorem~\ref{mthm} generalizes this circle of ideas to the nonlinear setting 
by providing a concrete example where the quasi-periodic and the random have indeed similar behavior.

\subsection{Structure of the paper} In Section \ref{ldtsec}, we focus on the  linear estimates, in the form of large deviation theorem. The paper concludes
by constructing the quasi-periodic in time solutions in Section \ref{nonsect}. The proof of the Diophantine type estimates and some important lemmas on resolvent identities    are presented in the Appendix. 


\section{Linear analysis}\label{ldtsec}
	In this section,  we investigate  properties of linearized operators on $\Z^{b+d}$. Of particular importance is the {\it large deviation theorem (LDT)}  for the corresponding Green's functions.  
	\subsection{The operator on the lattice}
	For a vector $\bm x\in\R^r$, we denote by $|\bm x|$ (resp. $|\bm x|_2$) the supremum norm (resp. the Euclidean norm). For an operator (or a matrix), we denote by $\|\cdot\|$  its operator norm. 
	
	We define the lattice:
	$$\Z^{b+d}_{{\rm pm}}=\Z^{b+d}\times\{+, -\}.$$
Let  $\mathcal{S}_+=\{(\bm e_\ell, \bm n_\ell)_{\ell=1}^b\}\subset\Z^{b+d}$ and we   identify $ \mathcal{S}_+$ with $\{(\bm e_\ell, \bm n_\ell)_{\ell=1}^b\}\times\{+\}\subset \Z^{b+d}_{{\rm pm}}$.
Similarly, let $\mathcal{S}_-=\{(-\bm e_\ell, \bm n_\ell)_{\ell=1}^b\}\subset \Z^{b+d}$, which is then identified with $\{(-\bm e_\ell, \bm n_\ell)_{\ell=1}^b\}\times\{-\}\subset \Z^{b+d}_{{\rm pm}}$. We let $\mathcal S=\mathcal S_+\cup\mathcal S_-$ and  $$\Z^{b+d}_{{\rm pm}, *}=\Z_{\rm pm}^{b+d}\setminus \mathcal{S}.$$

	
	In the following,  we study on $\Z^{b+d}_{{\rm pm},*}$
	the operator 
	\begin{align}\label{hsigm}
	{H}(\sigma)={D}(\sigma)+\varepsilon (\Delta\oplus\Delta)+\delta S,\ \sigma\in\R, 
	\end{align}
	where 
	\begin{align*}
{D}(\sigma)&={\rm diag}_{(\bm k,\bm n)\in\Z^{b+d}}\left(\begin{array}{cc}
	{-\sigma-\bm k\cdot\bm\omega+\mu_{\bm n}} & {0} \\
	{0}& {\sigma+\bm k\cdot\bm\omega+\mu_{\bm n}} \\
	\end{array} \right),\ \mu_{\bm n}=V(\bm n\bm \alpha+\bm\theta)\\
\end{align*}
	and the operator $S$ refers typically to the linearized operator of the nonlinear  perturbation. So we may let $S$ satisfy the following properties:
	\begin{itemize}
	\item  First $$S:\ \ell^2(\Z^{b+d}_{{\rm pm}})\to \ell^2(\Z^{b+d}_{{\rm pm}})$$
	is a bounded self-adjoint operator. 
	\item For all $\bm k,\bm k', \bm k''\in\Z^b$, $\bm n,\bm n'\in\Z^d$ and $\xi,\xi'\in\{+, -\}$,  we have the T\"oplitz property in the  $\bm k$-variable: 
	\begin{align*}
	S((\bm k'+\bm k, \bm n, \xi); (\bm k''+\bm k, \bm n', \xi'))=S((\bm k', \bm n, \xi ); (\bm k'', \bm n', \xi')). 
	\end{align*}
	\item 
	We have for some $C_2>0$ and $\gamma\in(\frac12, 10), $
	\begin{align*}
	|S((\bm k, \bm n, \xi); (\bm k', \bm n', \xi'))|\leq  C_2(1+|\bm k-\bm k'|)^{C_2} e^{-\gamma|\bm k-\bm k'|-\gamma|\bm n|}\delta_{\bm n,\bm n'}. 
	\end{align*}
	\end{itemize} 
	The frequency $\bm \omega$ satisfies 
	\begin{equation}\label{Omega}
	\bm\omega\in\Omega=\bm\omega^{(0)}+[-C\delta, C\delta]^b,
	\end{equation}
	 where $C>0$ and $\bm\omega^{(0)}=\bm \omega^{(0)}(\bm \alpha, \bm \theta)$ is defined in Theorem \ref{mthm}.

\subsection{The (generalized) elementary regions on the lattice}
For some technical reason, we will introduce the definitions of   {\it elementary  regions}  and {\it generalized elementary regions}  on $\Z^{b+d}$ originated from \cite{BGS02, Bou07}. We refer to  \cite{Liu22}  (cf. Section 2 in  \cite{Liu22}) for some  important clarifications and   improvements on those concepts.  

Denote by $\Lambda_N(\bm x)$ ($\bm x\in\Z^{r},\ r=b+d,d$), $$\Lambda_N(\bm x)=\{\bm y\in\Z^{r}:\ |\bm y-\bm x|\leq N\},$$ the cube of radius $N$  centered at  $\bm x$.	 In particular, we write  $\Lambda_N=\Lambda_N(\bm 0)$.  For $\Lambda\subset\Z^{r}$, we define its  diameter  to be  ${\rm diam}\  \Lambda=\sup\limits_{\bm x, \bm y\in\Lambda}|\bm x-\bm y|$. 
For $\bm x\in\Z^{r}$ and $\bm w\in\R_+^{r}$, let $R_{\bm w}(\bm x)=\{\bm y\in\Z^{r}:\ |y_\ell-x_\ell|\leq w_l , \, 1\leq\ell\leq r\}$ denote the rectangle.

A  \textit{generalized elementary region} is defined to be a set $\Lambda$ of the form 
$$\Lambda=R_{\bm w}(\bm x)\setminus(R_{\bm w}(\bm x)+\bm z),$$
where  $\bm z\in\Z^{r}$ is arbitrary.  The size of  a generalized elementary region  is simply its diameter. The set of all {\it generalized elementary regions}  of size at most $M$  will be denoted by $\mathcal{E}_M$. 

An {\it elementary region}  $Q_N$ of size $N$ and centered at $\bm 0$ is one of the following regions 
$$Q_N=\Lambda_N\ {\rm or}\ Q_N=\Lambda_N\setminus\{\bm x\in\Z^r:\ x_\ell \square_\ell 0,\ 1\leq\ell\leq r\},$$
 where  $\square_\ell \in \{<, >, \emptyset\}$ and at least two $\square_\ell$'s are not $\emptyset.$ Denote by $\mathcal{ER}_{\bm 0}(N)$ the set of all elementary regions of size $N$ and centered at $\bm 0$. Let $$\mathcal {ER}(N)=\{\bm x+Q_N:\ \bm x\in\Z^r,\ Q_N\in \mathcal{ER}_{\bm 0}(N)\}.$$

Let   $\mathcal E_{M}^{L}$  denote  the  set of all $M$-size generalized elementary regions of  width at least $L\leq M$, namely,  $\Lambda\in \mathcal {E}_{M}^{L}$ iff $\Lambda\in \mathcal {E}_{M}$ and,   for any  $\bm x\in\Lambda, 0<L'<L$, there is some $\Lambda'\in\mathcal{ER}(L')$ so that $\bm x\in \Lambda'\subset\Lambda$, ${\rm dist}(\bm x, \Lambda\setminus\Lambda')\geq L'/2$,  where  ${\rm dist} (\cdot, \cdot)$ is induced by the supremum norm.  So we have  $\mathcal {ER}(N) \subset \mathcal E_{2N}^{N}$. 

With a slight abuse of notation, we also use $\mathcal {ER}_{\bm 0}(N), \mathcal {ER}(N), \mathcal E_N, \mathcal E_N^L , \Lambda_N(\bm x),\ \Lambda_N$   to denote $\mathcal {ER}_{\bm 0}(N)\times\{+,-\}, \mathcal {ER}(N)\times\{+,-\}, \mathcal E_N\times\{+,-\}, \mathcal E_N^L \times\{+,-\}, \Lambda_N(\bm x)\times\{+,-\}, \Lambda_N\times\{+,-\}$. Similarly,  for any $\Lambda\subset \Z^r$, denote by $R_\Lambda$ the restriction to
$\Lambda\times\{+,-\}$. 

\subsection{Green's functions and LDE}
We now define Green's functions  on subsets of $\Z_{{\rm pm},*}^{b+d}$. Let $\Lambda\subset\Z_{{\rm pm},*}^{b+d}$ be a non-empty set, $R_\Lambda$ 
 the restriction operator to $\Lambda$, and define the $(\# \Lambda) \times (\#\Lambda)$ matrix: 
 \begin{equation}\label{rest}
 (R_\Lambda {H}(\sigma)R_\Lambda)((\bm x, \xi);(\bm x', \xi'))={H}(\sigma)((\bm x, \xi);(\bm x', \xi')), \ (\bm x, \xi), (\bm x', \xi')\in\Lambda,
 \end{equation}
 where $H(\sigma)$ is defined by \eqref{hsigm}. 
Define  the Green's function  to be (if it exists) 
 $$G_\Lambda(\sigma)=(R_\Lambda {H}(\sigma)R_\Lambda)^{-1}.$$
	
Before addressing the LDT for Green's functions, let us first give the definition of large deviation estimates (LDE): 
\begin{defn}[{\bf LDE}]\label{LDEdefn}
Let $\rho>0$, $0<\gamma'<\gamma$ and $M\in\N$.  We say ${H}(\sigma)$ defined by \eqref{hsigm} satisfies the $(\rho, \gamma',M)$-LDE  if 
there exists a set $\Sigma_M\subset \R$ with $${\rm meas}(\Sigma_M)\leq e^{-M^{\rho}},$$ so that for $\sigma\notin \Sigma_M$ the following estimates hold:  If $\Lambda\in(\bm 0,\bm n)+ \mathcal{ER}_{\bm 0}(M)$ satisfies  $|\bm n|\leq 10M$, then
\begin{align*}
\|G_{\Lambda}(\sigma)\|&\leq e^{M^{\frac34}},\\
|G_{\Lambda}(\sigma)((\bm x,  \xi); (\bm x', \xi'))|
&\leq e^{-\gamma'|\bm x-\bm x'|}\ {\rm for}\ |\bm x-\bm x'|\geq M^{\frac 89}\ {\rm and }\  \xi,\xi'\in\{\pm\}, 
\end{align*}

where $\bm x=(\bm k',\bm n'),  \bm x'=(\bm k'',\bm n'')$. 

\end{defn}

We will show in this section that the $(\rho, \frac{\gamma}{2},N)$-LDE hold  for  some  small $$\rho=\rho(b,d)>0$$  and all $N\gg1$,
under certain non-resonant conditions on $\bm\alpha, \bm\theta$ and $\bm\omega$.  This leads to the LDT, Theorem~\ref{ldtthm} below. Since we view $\bm\omega$ as an $O(\delta)$ perturbation of $\bm\omega^{(0)}$, the properties of $\bm\omega^{(0)}$ become essential in the proof of LDT.  So, in the following,  we first establish non-resonant properties (called Diophantine estimates) of $\bm\omega^{(0)}$. Then the proof of LDT will follow from a multi-scale analysis scheme in the spirit of \cite{BGS02, Bou05, Bou07}.


	\subsection{Diophantine estimates}\label{NRomega0}
		Recall that 
	\begin{align*}
	\bm \omega^{(0)}&=\bm\omega^{(0)}(\bm\alpha,\bm\theta)=(V(\bm n_\ell\bm \alpha+\bm \theta))_{\ell=1}^b,\\
	\mu_{\bm n}&=\mu_{\bm n}(\bm\alpha,\bm\theta)=V(\bm n\bm \alpha+\bm \theta),\ \bm n\in\Z^d.
	\end{align*}
The main purpose of this section is to obtain  lower bounds on 
	\begin{align*}
	&|\bm k\cdot\bm\omega^{(0)}|\ {\rm for}\ \bm k\in\Z^b\setminus\{\bm 0\},\\
	&|\pm\bm k\cdot\bm\omega^{(0)}+\mu_{\bm n}| \ {\rm for}\ (\bm k,\bm n)\in\Z^{b+d}\backslash {\mathcal S} ,\\
	&|\bm k\cdot\bm \omega^{(0)}+\mu_{\bm n}-\mu_{\bm n'}| \ {\rm for} \ \bm n\neq \bm n'\in\Z^d,
	\end{align*}
	under certain restrictions on $(\bm\alpha, \bm\theta).$  As mentioned in the Introduction, due to the $d$-dimensional phase space, such Diophantine-type estimates are 
	highly non-trivial. We use a generalized Wronskian approach and work out the details in the Appendix (cf. Theorem \ref{nlsthm} and Corollary \ref{dccor}). These estimates 
	can be read independently, and are key to the existence of quasi-periodic solutions to the nonlinear equation.
Applying  Corollary \ref{dccor} 
then  leads to 
\begin{thm}\label{clusthm}
There exist some $0<c_1=c_1(b,d,K,V)<\frac{1}{100b}, C_1=C_1(b,d, K, V)>1$ such that, if  $0<\varepsilon+\delta\leq \delta_0(b,d,K,V,\max\limits_{1\leq\ell\leq b}|\bm n_\ell|)\ll1$, then there is some $\mathcal M \subset [0, 1]^{2d}$
satisfying  
\begin{align}\label{mmeas}
{\rm meas}([0,1]^{2d}\setminus\mathcal M)\leq \log^{-1}\frac{1}{\varepsilon+\delta}, 
\end{align}
so that the following properties hold true for $(\bm \alpha, \bm \theta)\in\mathcal M$  and $L_{\varepsilon,\delta}:=100(\varepsilon+\delta)^{-c_1}$. 
  \begin{itemize}
\item [(1)] For all $\log \frac{1}{\varepsilon+\delta}\leq L\leq L_{\varepsilon,\delta}$ and all $(\bm k,\bm n)\in \Lambda_{L}\setminus\mathcal S,$  we have 
\begin{align}\label{mk1thm}
 \min_{\xi=\pm1}|\xi\bm k\cdot\bm \omega^{(0)}+\mu_{\bm n}|>L^{-C_1}.
\end{align}
\item[(2)]  We have for any $\xi=\pm1,$
\begin{align}\label{sublthm}
\sup_{\sigma\in\R}\#\left\{(\bm k,\bm n)\in \Lambda_{L_{\varepsilon,\delta}}:\   |\xi(\sigma+\bm k\cdot\bm \omega^{(0)})+\mu_{\bm n}|<\frac{(\varepsilon+\delta)^{\frac{1}{8b}}}{4}\right\}\leq b.
\end{align}
\end{itemize}
\end{thm}
\begin{rem}
The estimate in (1) plays an essential role in the initial iteration steps for  the nonlinear analysis.  
The conclusion (2) of this theorem will be only used in the proof of intermediate scales LDE. 
\end{rem}

\begin{proof}
The proof is similar to  that in \cite{SW23}. First, the measure estimate \eqref{mmeas} is a consequence of Corollary \ref{dccor}.  Next,  the inequality  \eqref{mk1thm} in (1)  follows directly from (iii) of  Corollary \ref{dccor}. 

So,  it suffices to establish \eqref{sublthm} of (2).  We prove it by the contradiction. Without loss of generality,  we consider the case $\xi=+1$.  Assume  that there are $b+1$ distinct   $\{(\bm k_\ell,  \bm n_{\ell}')\}_{1\leq \ell\leq b+1}$ satisfying 
\begin{align}\label{sings}
|\sigma+\bm k_\ell\cdot\bm\omega^{(0)}+\mu_{\bm n_{\ell}'}|<\frac{(\varepsilon+\delta)^{\frac{1}{8b}}}{4}.
\end{align}
We claim that all  $\bm k_{\ell}$ ($1\leq \ell\leq b+1$) are distinct. Actually, if there are $\bm k_{\ell_1}=\bm k_{\ell_2}$ for some $1\leq \ell_1\neq \ell_2\leq b+1$, then it must be that $\bm n_{\ell_1}'\neq \bm n_{\ell_2}'$. As a result, we obtain using \eqref{sings} that
\begin{align*}
|\mu_{\bm n_{\ell_1}'}-\mu_{\bm n_{\ell_2}'}|&\leq |\sigma+\bm k_{\ell_1}\cdot\bm\omega^{(0)}+\mu_{\bm n_{\ell_1}'}|\\
&\ \ \ +|\sigma+\bm k_{\ell_2}\cdot\bm\omega^{(0)}+\mu_{\bm n_{\ell_2}'}|\\
&\leq \frac{(\varepsilon+\delta)^{\frac{1}{8b}}}{2},
\end{align*}
which contradicts  (i) of Corollary \ref{dccor}. Next, we claim that  $\{\bm n_{\ell}'\}_{1\leq \ell\leq  b+1}\subset \{\bm n_{\ell}\}_{1\leq \ell\leq b}.$ In fact, if $\bm n_{\ell_1}'\notin  \{\bm n_{\ell}\}_{1\leq \ell\leq b}$ for some $1\leq \ell_1\leq b+1$, we can assume that there is $\ell_2\neq \ell_1$ so that $\bm n_{\ell_2}'\neq \bm n_{\ell_1}'$. Otherwise, we may have 
\begin{align*}
|(\bm k_{\ell_1}-\bm k_{\ell_2})\cdot\bm\omega^{(0)}|\leq\frac{(\varepsilon+\delta)^{\frac{1}{8b}}}{2},
\end{align*}
which contradicts  (ii) of Corollary \ref{dccor}, since $\bm k_{\ell_1}\neq \bm k_{\ell_2}$ as claimed above. 
So  we obtain for some  $\ell_2\neq \ell_1$ and $1\leq \ell_2\leq b+1$  with $\bm n_{\ell_2}'\neq \bm n_{\ell_1}'$ that
\begin{align*}
|(\bm k_{\ell_1}-\bm k_{\ell_2})\cdot\bm\omega^{(0)}+\mu_{\bm n_{\ell_1}'}-\mu_{\bm n_{\ell_2}'}|\leq\frac{(\varepsilon+\delta)^{\frac{1}{8b}}}{2},
\end{align*}
which contradicts (iv) of Corollary \ref{dccor}.  So  it suffices to assume $\{\bm n_{\ell}'\}_{1\leq \ell\leq b+1}\subset \{\bm n_{\ell}\}_{1\leq \ell\leq b}.$ However, in this case,  we have  by  the pigeonhole principle  that there are $1\leq \ell_1\neq \ell_2\leq b+1$ so that $\bm n_{\ell_1}'=\bm n_{\ell_2}'$. Then we get 
\begin{align*}
|(\bm k_{\ell_1}-\bm k_{\ell_2})\cdot \bm \omega^{(0)}|\leq \frac{(\varepsilon+\delta)^{\frac{1}{8b}}}{2},
\end{align*}
which contradicts  (ii) of Corollary \ref{dccor}  as shown above.  This proves \eqref{sublthm}. 
\end{proof}

\subsection{Bourgain's geometric lemma} To proceed with the analysis, we will also need Anderson localization-type properties for the linear quasi-periodic Schr\"odinger operator 
$\mathcal H$ in \eqref{SO}, namely, 
$$\mathcal H=\mathcal H(\bm \alpha, \bm\theta)=\varepsilon\Delta+V(\bm\theta+\bm n\bm\alpha)\delta_{\bm n, \bm n'}.$$
 This was proven by Bourgain in the remarkable work \cite{Bou07}, where he established Anderson localization for general  quasi-periodic Schr\"odinger operators on $\Z^d$ with non-degenerate analytic potentials on $\T^d$ (identified with $[0,1]^d$).
This result was later significantly  extended  by Jitomirskaya-Liu-Shi \cite{JLS20} (cf. also \cite{Shi22, Liu22}) to  potentials on $\T^b$ for $b\geq d.$ 

In the present paper, due to the $d$-dimensional phase space, we need to apply Bourgain's  arguments. 
Using short-range property of $S$ in the $\bm n$-variable, Bourgain's geometric lemma is made available to handle LDE for Green's functions at large scales. {\it This is a main new aspect not present in \cite{SW23}}.  The general scheme can be contrast with that in Chapter 19 of  \cite{Bou05}, where  Bourgain made a great breakthrough and first proved the existence of quasi-periodic solutions  for the nonlinear Schr\"odinger equation  on arbitrarily dimensional torus (a KAM theorem was  later established  by Eliasson-Kuksin  in an important paper  \cite{EK10}).  Among others, Bourgain \cite{Bou05} employed heavily the {\it separation property} of eigenvalues  of the Laplace operator on the higher dimensional  torus, while  the {\it separation property}  is missing  in the present setting (indeed, the spectrum of $\mathcal H$ may contain  an interval).

{We start with  introducing  the definition of semi-algebraic sets. 
\begin{defn}
A set $X\subset \mathbb{R}^r$ is called {\it semi-algebraic} if it is a finite union of sets defined by a finite number of polynomial equalities and inequalities. More precisely, let $\{P_1,\cdots,P_k\}\subset\mathbb{R}[x_1,\cdots,x_r]$ be a family of real polynomials whose degrees are bounded by $p$. A (closed) semi-algebraic set ${X}$ is given by an expression
\begin{equation}\label{smd1221}
	{X}=\bigcup\limits_{s}\bigcap\limits_{\ell\in\mathcal{K}_s}\left\{\bm x\in\mathbb{R}^r: \ P_{\ell}(\bm x)\varsigma_{s\ell}0\right\},
\end{equation}
where $\mathcal{K}_s\subset\{1,\cdots,k\}$ and $\varsigma_{s\ell}\in\{\geq,\leq,=\}$. Then we say that ${X}$ has degree at most $kp$. In fact, the degree of ${X}$ which is denoted by $\deg {X}$, means the  smallest $kp$ over all representations as in (\ref{smd1221}).
\end{defn}}

Denote by $\mathcal{ER}_{\Z^d}(M)$ (resp. $\mathcal{ER}_{\Z^d, \bm 0}(M)$)  the set of all $M$-size elementary regions  on $\Z^d$ (resp. centered at $\bm 0$).   We have 
\begin{lem}[cf. the \textsc{Claim} (page 694)  in \cite{Bou07} and Theorem 2.7 in \cite{JLS20}]\label{Boulem}
There are small constants $0<\kappa_1<\kappa_2<1$  depending only on $d$ so that the following holds true. 
For $0<\varepsilon\leq \varepsilon_0(V,d)\ll1 $ and  $N\geq N_1:=10^{-\frac{32}{\kappa_1^2}}\log^{\frac{4}{\kappa_1}}(\log \frac1\varepsilon)$, 
 there is a semi-algebraic set  
$\mathcal{A}_N\subset [0,1]^d$ satisfying $$\deg \mathcal A_N\leq N^{4d},\ {\rm meas}(\bigcap_{N\geq N_1}\mathcal{A}_N)=1-O(\log^{-c}\frac1\varepsilon),\ c=c(d)>0, $$   so that the following properties hold true for $\bm\alpha\in \mathcal{A}_N.$  
\begin{itemize}
\item[(1)] For all $E\in\R$ and all  $\bm\theta\in[0,1]^d$,  there is  $L\in[N^{\kappa_1}, N^{\kappa_2}]$ so that, 
for all $Q\in\mathcal{ER}_{\Z^d}(L_1) $ satisfying $L_1\sim (\log N)^{\frac{4}{\kappa_1}}$ and 
$$ Q\subset [-L,L]^d\setminus[-L^{\frac{1}{10d}}, L^{\frac{1}{10d}}]^d,$$
one has 
\begin{align*}
\|\mathcal{H}^{-1}_{Q}(E;\bm\theta)\|&\leq e^{\sqrt{L_1}},\\
|\mathcal{H}^{-1}_{Q}(E;\bm\theta)(\bm n; \bm n')|&\leq e^{-\frac12|\log\varepsilon|\cdot|\bm n-\bm n'|}\ {\rm for}\ |\bm n-\bm n'|\geq  L_1^{\frac 89},
\end{align*}
where 
\begin{align*}
\mathcal{H}_{Q}(E;\bm\theta)&=R_{Q}\left(\varepsilon\Delta+V(\bm\theta+\bm n\bm\alpha)\delta_{\bm n, \bm n'}\right)R_{Q}-E\\
&:=\mathcal{L}_{Q}(\bm \theta)-E. 
\end{align*}
\item[(2)] There is some $\Theta_N=\Theta_N(\bm \alpha)\subset[0,1]^d$ satisfying 
$$ {\rm meas}([0, 1]^d\setminus\Theta_N)\leq e^{-N^{\frac{\kappa_1}{4}}}$$
so that, if $\bm \theta\in\Theta_N$, then one has for all $\Lambda\in \mathcal{ER}_{\Z^d, \bm 0}(N),$
\begin{align*}
\|\mathcal{H}^{-1}_{\Lambda}(\bm \alpha, \bm\theta)\|&\leq e^{\sqrt{N}},\\
|\mathcal{H}^{-1}_{\Lambda}(\bm \alpha, \bm\theta)(\bm n; \bm n')|&\leq e^{-\frac12|\log\varepsilon|\cdot|\bm n-\bm n'|}\ {\rm for}\ |\bm n-\bm n'|\geq {N^{\frac 89}}. 
\end{align*}
Denote 
\begin{align}\label{wpdefn}
\nonumber&\mathcal{A}=\bigcap_{N\geq N_1}\mathcal{A}_N,\ \tilde\Theta_N(\bm \alpha)=\bigcap_{|\bm n|\leq e^{N^{\frac{\kappa_1}{5}}}}\left\{\bm\theta:\ \bm\theta+\bm n\bm\alpha\in\Theta_N\right\}, \\
&\mathcal W'=\left\{(\bm \alpha, \bm\theta)\in[0,1]^{2d}:\ \bm\alpha\in\mathcal A,\ \bm\theta\in\bigcap_{N\geq N_1}\tilde\Theta_N(\bm\alpha)\right\}. 
\end{align}
Then 
\begin{align}\label{Wpri}
{\rm meas}(\mathcal W')\geq 1-\log^{-c}\frac1\varepsilon. 
\end{align}

\end{itemize}
\end{lem}
\begin{rem}
In this  lemma, the constants $\kappa_1, \kappa_2$ correspond to $c_3, c_4$  of  \cite{JLS20}, respectively. Originally, the off-diagonal exponential decay distance (cf. e.g.,  decay estimates in (2) of Lemma \ref{Boulem})  in \cite{Bou07, JLS20} is $|\bm n-\bm n'|\geq \frac {N}{10}$. However, it can be improved to the present sublinear scale $N^{\frac89}.$ This issue is important for the nonlinear analysis, and can be resolved with little effort, cf.  Remark \ref{rmkJLS}  and Remark \ref{rmkLiu} in the Appendix  for more details. 
\end{rem}

\begin{rem}\label{Boulemrm}
The proof of this lemma is based on analysis of semi-algebraic sets  (cf. \cite{Bou07, JLS20}), and applies to general non-degenerate  analytic potentials.  
\end{rem}

\begin{rem} 
We mention that the set $\mathcal W$ on which Theorem~\ref{mthm} holds satisfies
\begin{align}\label{ALset}
\mathcal W=\mathcal W'\cap\mathcal M,
\end{align}
where $\mathcal M$ is the set on which the Diophantine estimates in Theorem~\ref{clusthm} hold, and $\mathcal W'$ is defined by \eqref{wpdefn}.  So, we have 
$${\rm meas}(\mathcal W)\geq 1-\log^{-c}\frac{1}{\varepsilon+\delta},\ c=c(d)>0$$
as  shown in  Theorem \ref{mthm}. 
  For $(\bm \alpha, \bm\theta)\in \mathcal W$, the conclusion  in (2)  plays an essential role in the nonlinear analysis. 

\end{rem}

{\begin{proof}

 It suffices to establish  the measure bounds. The proof is a small modification of  that of Theorem 4.1 in \cite{JLS20} (cf. pages 475--476). Indeed, the only difference is that we will apply Theorem 3.7 in \cite{JLS20} starting from $N_1=10^{-\frac{32}{\kappa_1^2}}\log^{\frac{4}{\kappa_1}}(\log \frac1\varepsilon)$ rather than $\log\log \frac1\varepsilon$ as in \cite{JLS20}.  This change makes sense since the large deviation estimates in \cite{JLS20} hold  for the initial scales of $N_0(V,d)\leq N\leq N_2:=\frac{1}{10}\log^2\frac1 \varepsilon$ (via the Neumann series argument), and for $f(x)=e^{x^{\frac{\kappa_1}{4}}},$
$$[N_1, (f(N_1))^{\frac {8}{\kappa_1}}]\subset [N_0, N_2].$$

This  modification then  leads to the following  measure estimate: 
\begin{align*}
{\rm meas}(\bigcap_{N\geq N_1}\mathcal{A}_N)& \geq 1-\sum_{N\geq N_1} f(N)^{-\kappa_1}\\
&\geq 1-\sum_{N\geq N_1} e^{-\kappa_1N^{\frac{\kappa_1}{4}}}\\
&\geq 1- C(d)e^{-\frac{\kappa_1}{2}N_1^{\frac{\kappa_1}{4}}}\\
&\geq  1-C(d) \log^{-\frac{\kappa_1}{2}\times10^{-\frac{8}{\kappa_1}}} \frac1\varepsilon. 
\end{align*}

For the proof of \eqref{Wpri},  it  follows directly   from  applying   the Fubini's theorem and $${\rm meas}(\bigcap_{N\geq N_1}\tilde\Theta_N(\bm\alpha))\geq 1-C(d) \log^{-\frac{1}{5}\times10^{-\frac{8}{\kappa_1}}} \frac1\varepsilon.$$ 

\end{proof}
}


\subsection{Large deviation theorem} In this section, we establish that the LDE  (cf.  Definition~\ref{LDEdefn})   hold for {\it all} sufficiently large scales by a multi-scale analysis. From \eqref{Omega},
however, the variation of the frequency, $\bm \omega$ is only $O(\delta)$ and not $O(1)$, hence 
	as we will see shortly, the proof of LDT will be accomplished in three steps instead of the usual two: 
	\begin{itemize}
	\item The first step deals with the small scales, for which we establish LDE for all $\bm\omega\in\Omega$ and scales 
	$$N_0\leq N \leq 10^{-\frac{1}{\rho}}\log^{\frac{1}{\rho}} \frac{1}{\varepsilon+\delta},$$
	for some $\rho>0.$  In this step,  we only use the Neumann series argument, which leads to the  above  scales  (cf. Lemma \ref{inilem}). 
	\item 
	 Next, we establish  intermediate scales LDE, i.e., scales in the range
	$$10^{-\frac{1}{\rho}}\log^{\frac{1}{\rho}} \frac{1}{\varepsilon+\delta}\leq  N\leq (\varepsilon+\delta)^{-c_1}$$
	{for all $\bm\omega\in\Omega$}, where $c_1$ is defined in Theorem \ref{clusthm}.  Note that Theorem \ref{clusthm} is available in the above interval of scales. The proof of such  intermediate scales LDE  is based on  preparation type  theorem together with clustering  properties of the spectrum of the diagonal matrix $D(0)$ (cf. \eqref {sublthm}, Theorem~\ref{clusthm}). 
	\begin{rem}  The intermediate scales are needed here because for a {\it fixed} $(\bm \alpha,\bm \theta)$, hence {\it fixed}    $\bm \omega^{(0)}$, $\bm \omega$ only varies in an interval
	of size $O(\delta)$, see \eqref{Omega}, as mentioned above. Consequently, there is a gap in the range of scales before Bourgain-type analysis becomes applicable.
	Fortunately,  the Diophantine estimates for $(\bm \alpha,\bm \theta)\in \mathcal M$ can control the resonances directly as in  \eqref {sublthm},
	leading to LDE for the range of scales in the gap, which we term the {\it intermediate scales}.

	\end{rem}
	
	\item For the large scales (i.e., $N>(\varepsilon+\delta)^{-c_1}$) LDE, we will apply matrix-valued Cartan's lemma  and semi-algebraic sets theory  of \cite{BGS02, Bou07}. Of particular importance is Bourgain's geometric lemma (cf. Lemma~\ref{Boulem}).  This step requires both the Diophantine condition and the so called weak second Melnikov's condition on $\bm\omega.$ 
	
	To be coherent, since the variation in $\bm \omega$ is only $O(\delta)$, the excised measure in $\Omega$ needs to be less than
	$\delta^b$. So we make a particular choice of the exponent  and impose the following Diophantine condition: 
	\begin{align}
\nonumber&\ \ \ {\rm DC}_{\bm \omega}(N)\\
\label{dcomg}&=\left\{\bm\omega\in\Omega:\ |\bm k\cdot\bm\omega|\geq e^{-N^{\rho^4}} \  {\rm for}\ \forall\ 0<|\bm k|\leq 100N^2\right\}.
\end{align}
Then we  have for $\varepsilon\leq \delta$, 
\begin{align} \label{meas}             
{\rm meas}\left(\Omega\setminus\left(\bigcap_{N>(\varepsilon+\delta)^{-c_1} }{\rm DC}_{\bm \omega}(N)\right)\right) \ll\delta^b.
\end{align}

The weak second Melnikov's condition is imposed in order to restrict the time $\bm k$-direction  where resonances could possibly occur. This greatly reduces the number 
of resonances, and paves the way for the application of Lemma~\ref{mcl}.
\begin{rem} The condition on the size of $\varepsilon$ relative to $\delta$ stems again from the modulation in $\bm \omega$ being $O(\delta)$; while 
the excised measure depends on both $\varepsilon$ and $\delta$, as we will see in the proofs. This also ensures, in addition, that the amplitude-frequency map: 
$\bm a\to \bm\omega(\bm a)$ is non-degenerate in the nonlinear analysis, so that we may indeed vary $\bm \omega$ in the linear analysis.

\end{rem}
\end{itemize}

	{Before stating  the  LDT, let us summarize various parameters and the relation between  $\varepsilon, \delta$. 
	\begin{itemize}
	\item Throughout this paper, we first fix  constants $$\kappa_1=\kappa_1(d), \kappa_2=\kappa_2(d)\in(0,1)$$  given in   Bourgain's geometric  lemma  (cf. Lemma \ref{Boulem}). 
	\item The most important parameter $\rho>0$ is then fixed  so that it depends only on $b,d$  and satisfies 
	\begin{align}\label{rhodefn}
	\rho=c_2 \kappa_1\leq \frac{\kappa_1}{10^4(b+d)^2},
	\end{align}
	where $c_2=c_2(b,d)$ is given by Lemma \ref{lsclem1}.  
	\item  The  parameters $$0<c_1=c_1(b,d, K, V)<\frac{1}{100b}, \ C_1=C_1(b,d,K,V)>1$$   are  fixed, which appear in  the Diophantine estimates  (cf. Theorem \ref{clusthm}).  Also,   the parameter   $c_1$ is the power-law exponent   of the upper bound of intermediate scales.  
	
	\item  Fix   $$ \kappa=\frac12\min\left\{\vartheta, \ \frac{1}{10}\right\}\leq \frac{1}{20},$$
	where $\vartheta>0$ is an absolute constant  appeared in the coupling lemma of \cite{Liu22} (cf. Lemma \ref{Liulem1}). Then $\kappa$ (cf. \eqref{kappa}) describes the changes  of off-diagonal exponential decay rates of Green's functions along the multi-scale    iterations.

	\item For convenience,  we assume the decay rate of $S$ satisfies  $\gamma\in (\frac 12, 10)$.  It turns out  that   the $(\rho, \gamma_{\infty}, N)$-LDE   hold true for {\it all}  $N\geq \log^{3}\frac{1}{\varepsilon+\delta}$ and for some $\gamma_\infty\in[\frac\gamma 2, \gamma].$ The constant $C_2>0$ is  the power-law index in $S.$

	\item  With these parameters having been  fixed, we aim to find  $$\delta_0=\delta_0(b, d, K, V, \max\limits_{1\leq \ell\leq b}|\bm n_\ell|, \kappa_1, \kappa_2, \rho, c_1, C_2)>0,$$ so that, if $0<\varepsilon+\delta\leq \delta_0$, then  LDT  can be established. Since we will use  the multi-scale analysis method to prove LDT,   various conditions on  $\delta_0$  are required, and it is unclear a priori  if such restrictions are meaningful.  Fortunately, we  will  divide the  LDT proof  into 3 steps, and at each step, we can find a  desired  $\delta_\ell$  ($\ell=1,2,3$). Finally, it suffices to take  
	$\delta_0=\min\{\delta_1, \delta_2, \delta_3\}.$   
	\item As we will see later, we further assume $\delta\leq \log^{-1}\frac1\varepsilon$ when proving large scales LDE. 
\end{itemize}
		}

	Let $\rho, \kappa_1, \kappa_2, c_1, \gamma, C_2$ be fixed as  above.  We have  
	{\begin{thm}[{\bf LDT}]\label{ldtthm}
	There  is some 
	$$\delta_0=\delta_0(b, d, K, V, \max\limits_{1\leq \ell\leq b}|\bm n_\ell|, \kappa_1, \kappa_2, \rho, c_1, C_2)>0$$  so that the following holds true for $0<\varepsilon+\delta\leq \delta_0.$
	\begin{itemize}
	\item[(1)] For $\log^{3}\frac{1}{\varepsilon+\delta}< N\leq 10^{-\frac1\rho}\log^{\frac{1}{\rho}}\frac{1}{\varepsilon+\delta},$ the $(\rho, \gamma_N, N)$-LDE hold true for all  $(\bm \alpha, \bm\theta)\in[0,1]^{2d}$ (and thus for all $\bm\omega\in\Omega$) with  $\gamma_N\equiv\gamma_0=\gamma-\log^{-2}\frac{1}{\varepsilon+\delta}\in[\frac\gamma2, \gamma].$
	
	\item[(2)]  For $10^{-\frac1\rho}\log^{\frac{1}{\rho}}\frac{1}{\varepsilon+\delta}\leq N\leq (\varepsilon+\delta)^{-c_1},$  the $(\rho, \gamma_N, N)$-LDE hold true for $(\bm \alpha, \bm\theta)\in \mathcal W$ (cf. \eqref{ALset}) and all $\bm\omega\in\Omega$ (depending on $(\bm \alpha, \bm \theta)$)  with  $\gamma_N=\gamma_0-N^{-\vartheta}\in[\frac\gamma2,\gamma].$
	\item[(3)] Fix $(\bm \alpha, \bm\theta)\in \mathcal W $ (cf. \eqref{ALset}) and let  $\varepsilon\leq \delta\leq \log^{-1}\frac1\varepsilon$.  Let  $N>(\varepsilon+\delta)^{-c_1}$, $N_1\sim N^{\rho^2}$  and $N_2\sim N^{2\rho}$. Then there are some $\widetilde\Omega_N\subset \Omega$ (independent of $S$) satisfying ${\rm meas}(\Omega\setminus\widetilde\Omega_N)\leq e^{-\frac15 N^{\frac{3\kappa_1\rho^2}{4}}}$ and some $\gamma_N\in[\frac\gamma2, \gamma]$ so that the $(\rho, \gamma_N, N)$-LDE hold true for $\bm \omega\in\Omega_N:=\widetilde\Omega_N\cap {\rm DC}_{\bm\omega}(N)\cap\Omega_{N_2}$  with ${\rm DC}_{\bm\omega}(N)$ given by \eqref{dcomg} and 
	\begin{align*}
\widetilde\Omega_N=\bigcap_{1\leq \ell,\ell'\leq N^{4(b+d)}, \xi=\pm1, \xi'=\pm1, 0<|\bm k|\leq 2N^2}\widetilde\Omega_{\bm k,\ell,\ell',\xi,\xi'},
\end{align*}
where
\begin{align*}
\widetilde\Omega_{\bm k,\ell,\ell',\xi,\xi'}:=\left\{\bm\omega\in\Omega:\ |\bm k\cdot\bm \omega+\xi {\lambda_\ell}-\xi'{\lambda_{\ell'}}|>10e^{-\frac 14 N^{\frac{3\kappa_1\rho^2}{4}}}\right\},
\end{align*}
	$\gamma_N\geq \gamma_{N_1}-N^{-\kappa}$  
and $\lambda_\ell= \lambda_\ell(\bm \alpha, \bm \theta)$ (${1\leq \ell \leq N^{4(b+d)}})$ are  all eigenvalues of the operators  $\mathcal H_{Q}(\bm \alpha, \bm\theta)$ for  all $Q\subset[-100N^2, 100N^2]^d$ satisfying   
$$Q\in\bigcup_{\tilde N\in [\frac14N^{\kappa_1\rho^2}, N^{\kappa_2\rho^2}]}\mathcal{ER}_{\Z^d}(\tilde N).$$ 
\end{itemize}
	\end{thm}}
	
	\begin {rem} \label{LDTrem1}
	Note that the sets $\Omega_N$ are {\it independent of} $S$. This will be important for later nonlinear applications. Moreover,  we have    $\Omega_{N}\subset\Omega_{N-1}\subset\cdots\subset\Omega_{N_0}=\Omega$ and $\Omega_N$ is a semi-algebraic set of ${\rm deg}\ \Omega_N\leq N^{10(b+d)}$ and ${\rm meas}(\Omega_{N^{2\rho}}\setminus\Omega_N)\leq e^{-\frac12 N^{\rho^4}}.$ 
	\end {rem}

{\begin{rem}\label{LDTrem2}
One may observe that this LDT is for a fixed operator $H$;
while the linearized operator is varying in a Newton scheme. This issue, however, can be remedied by using that the Green's function estimates in LDE (at scale $N$) are stable under perturbations 
of order  $e^{-N^2}$ using Lemma \ref{lwlem}. The sup-exponentially decaying error (cf. \eqref{tq1}) in the Newton scheme can ensure this.  More precisely, let 
\begin{align*}
&\gamma\geq \gamma_1\geq \gamma_2\geq \cdots\geq \frac\gamma2,\\
&\gamma\geq \gamma_1'\geq \gamma_2'\geq\cdots\geq \frac\gamma2,\\ 
&\gamma_{\ell}'\leq \min\{\gamma_{[\ell^{\rho^2}]+1}, \gamma_{[\ell^{\rho^2}]+1}'\}-\ell^{-\kappa}.
\end{align*}
We can replace  $H$ in LDT with  a sequence of operators  $H^{(\ell)}(\sigma)=D{(\sigma)}+\delta S_\ell$ ($\ell\geq1$) satisfying (we hide the dependence on $\xi,\xi'\in\{\pm\}$)
\begin{align*}
|S_\ell((\bm k ,\bm n); (\bm k', \bm n'))|
&\leq C_2(1+|\bm k-\bm k'|)^{C_2}e^{-\gamma_\ell|\bm k-\bm k'|-\gamma_{\ell}|\bm n|}\delta_{\bm n, \bm n'},\\
\|H^{(\ell+1)}-H^{(\ell)}\|&\leq e^{-\ell^{\frac{2}{\rho^2}}}. 
\end{align*}
  It is remarkable that in our LDT,  $\Omega_N$ does not depend on $S$.  So with this modification,  the conclusion of LDT becomes   for $\bm \omega\in\Omega_N,$  $H^{(\ell)}$ has $(\rho, \gamma_N', N)$-LDE  for all $\ell\geq N\geq \log^3{\frac{1}{\varepsilon+\delta}}$ ($\Sigma_N$ depending only on $S_\ell$, $1\leq \ell  \leq N$). 

Indeed, assume we have established the LDE at scale $N_1=N^{\rho^2}$ for all $H^{(\ell)}$ with $\ell\geq N_1$. Then it suffices to establish $(\rho, \gamma_N', N)$-LDE for $H^{(N)}$,  since for all $\ell>N$, $$\|H^{(\ell)}-H^{(N)}\|\leq \sum\limits_{\ell\geq N}e^{-\ell^{\frac{2}{\rho^2}}}<e^{-\frac{N^{\frac{2}{\rho^2}}}{2}}\ll e^{-100N},$$ 
and  $(\rho, \gamma_N', N)$-LDE for $H^{(\ell)}$ can be obtained via the perturbation argument (cf. Lemma \ref{lwlem}).      
Next,  we fix $S_\ell\equiv S_{N_1}$ for $\ell\in[N_1, N]$.  Similar to the proof of Lemma \ref{lsclem}, we can show  using the modified resolvent identities Lemmas \ref{Liulem1}, \ref{JLSlem1}, \ref{JLSlem2}  (e.g.,  taking account of the power-law factor in the estimate of    $S_\ell$) that $H^{(N_1)}$ satisfies $(\rho, \gamma_N'', N)$-LDE with 
\begin{align*}
\gamma_N''\geq\min\{\gamma_{N_1}, \gamma_{N_1}'\}-N^{-\kappa}\geq \gamma_{N}'.
\end{align*}
Finally, using again the perturbation argument  Lemma \ref{lwlem}  and $$\|H^{(N_1)}-H^{(N)}\|\leq e^{-\frac{N_1^{\frac{2}{\rho^2}}}{2}}\leq e^{-\frac{N^2}{5}}\ll e^{-100N},$$
we get that $H^{(N)}$ satisfies $(\rho, \gamma_N', N)$-LDE by the assumptions on  $\gamma_\ell,\gamma_\ell'$.



\end{rem}}
\begin{rem}
In the large scales case, we assume  $\delta\in[\varepsilon, \log^{-1}\frac{1}{\varepsilon}]$. The condition $\delta\geq \varepsilon$ is necessary since \eqref{meas}
as mentioned above. The restriction $\delta\leq \log^{-1}\frac{1}{\varepsilon}$, however,  is due to the fact that we need to  use Bourgain's geometric lemma  in the  form   of Lemma \ref{Boulem}  to handle  the large $\bm |\bm n|$ regime estimates.     In fact,  if we assume  that  $\mathcal W$ (cf. \eqref{ALset})  satisfies    ${\rm meas}([0,1]^{2d}\setminus\mathcal W)=o(1)$ (rather than the quantitative version   ${\rm meas}([0,1]^{2d}\setminus\mathcal W)=O(\log^{-c}\frac{1}{\varepsilon+\delta})$ in the present work), we can remove the condition $\delta\leq \log^{-1}\frac{1}{\varepsilon}$ (cf. Remark \ref{remn1} for details). 
\end{rem}

This  LDT  will be proved in detail in the following three sections.
	
\subsubsection{The small scales LDE}

This section is devoted to the proof of LDE for small scales, i.e., 
\begin{align*}
\log^{3}\frac{1}{\varepsilon+\delta}\leq N\leq 10^{-\frac1\rho}\log^{\frac{1}{\rho}}\frac{1}{\varepsilon+\delta}. 
\end{align*}
 In this case,  we only use the standard Neumann series argument. As a result, we can establish LDE  for all $\bm\omega\in\Omega.$ We have 

\begin{lem}\label{inilem}
Let  
\begin{align*}
&\ \ \ \Sigma_N\\
&=\bigcup_{\Lambda\in (\bm 0,\bm n)+\mathcal{ER}_{\bm 0}(N),\ |\bm n|\leq 10N}\left\{\sigma\in\R:\ \min_{ (\bm k,\bm n, \xi)\in\Lambda}|\xi(\sigma+\bm k\cdot\bm \omega)+\mu_{\bm n}|\leq e^{-2N^{\rho}}\right\}.
\end{align*}
Let $0<\rho<\frac{1}{10^4}$. Then there is some $\delta_1=\delta_1(b,d,C_2, \rho)>0$ so that if $0<\varepsilon+\delta\leq \delta_1,$   for $$\log^{3}\frac{1}{\varepsilon+\delta}\leq N\leq  10^{-\frac1\rho} \log^{\frac{1}{\rho}}\frac{1}{\varepsilon+\delta},$$
we have 
\begin{align}\label{inieq1}
{\rm meas}(\Sigma_N)\leq e^{-N^{\rho}}.
\end{align}
Moreover, if $\sigma\notin \Sigma_N$, we have  for all  $\Lambda\in (\bm 0,\bm n)+\mathcal{ER}_{\bm 0}(N)$ with $|\bm n|\leq 10N$,
\begin{align}
\label{inieq2}\|G_{\Lambda}(\sigma)\|&\leq e^{N^{\frac34}},\\
\label{inieq3}|G_{\Lambda}(\sigma)((\bm x,\xi);(\bm x',\xi'))|&\leq e^{-\gamma_0|\bm x-\bm x'|}\ {\rm for}\ |\bm x-\bm x'|\geq N^{\frac89}, 
\end{align}
where 
\begin{align}\label{gamma0}
\gamma_0=\gamma-\log ^{-2}\frac{1}{\varepsilon+\delta}\in[\frac\gamma2,\gamma]. 
\end{align}
\end{lem}
\begin{rem}
In this lemma,  we make no restrictions  on $\bm\omega.$
\end{rem}

\begin{proof}
We first establish the measure bound \eqref{inieq1}. 
We have 
\begin{align*}
{\rm meas}(\Sigma_N)&\leq C(b,d)N^{C(b,d)}e^{-2N^{\rho}}
\leq e^{-N^{\rho}},
\end{align*}
where in the last inequality we use   $0<\varepsilon+\delta\leq \delta_1^{(1)}(b,d,\rho)\ll1$  to ensure  $N\geq \log^{3}\frac{1}{\varepsilon+\delta}\geq N_0(b,d,\rho)\gg1.$ 

Next, we let $\sigma\notin \Sigma_N$ and fix any $\Lambda\in(\bm 0, \bm n)+\mathcal{ER}_{\bm 0}(N)$ with $|\bm n|\leq 10N$.  Let $A=D_{\Lambda}(\sigma), B=\delta S_\Lambda$. We have 
$$ |A^{-1}(\bm x, \bm x')|\leq e^{N^\rho}\delta_{\bm x, \bm x'},\ |B(\bm x, \bm x')|\leq C_2\delta (1+|\bm x-\bm x'|)^{C_2}e^{-\gamma |\bm x-\bm x'|}.$$
Note that 
\begin{align*}
CN^{2(b+d)+C_2} \delta e^{N^\rho}&\leq C(b,d, C_2)N^{2(b+d)+C_2}(\varepsilon+\delta)^{\frac{9}{10}}\\
&\leq C(b,d, C_2)10^{-\frac{2(b+d)+C_2}{\rho}}(\varepsilon+\delta)^{\frac{9}{10}}\log^{-\frac{2(b+d)+C_2}{\rho}}\frac{1}{\varepsilon+\delta}\\
&\leq \frac{1}{2}
\end{align*}
assuming  $0<\varepsilon+\delta\leq \delta_1^{(2)}(b,d, C_2, \rho)\ll1.$
So we can apply  Lemma \ref{lwlem} with $A=D_{\Lambda}(\sigma), B=\delta S_\Lambda, \epsilon_2=C_2\delta, \epsilon_1=e^{-N^\rho}, M=0, X=\Lambda$ to obtain (we hide the dependence on $\xi, \xi'$) 
\begin{align*}
\|G_\Lambda(\sigma)\|&\leq 2e^{N^\rho}<e^{N^{\frac34}}\ ({\rm since}\ 0<\rho<\frac{1}{10^4}),\\
|G_\Lambda(\sigma)(\bm x, \bm x')|&\leq e^{N^{\rho}}e^{-\gamma|\bm x-\bm x'|}. 
\end{align*} 
Then for $|\bm x-\bm x'|\geq N^{\frac89},$
\begin{align*}
|G_\Lambda(\sigma)(\bm x, \bm x')|&\leq  e^{-|\bm x-\bm x'|(\gamma-\frac{N^\rho}{|\bm x-\bm x'|})}\\
&\leq e^{-|\bm x-\bm x'|(\gamma-N^{-\frac{7}{8}})}\ \ ({\rm since}\ 0<\rho<\frac{1}{10^4})\\
&:=e^{-\gamma'|\bm x-\bm x'|},
\end{align*}
which implies 
$$\gamma'>\gamma_0= \gamma-\log^{-2}\frac{1}{\varepsilon+\delta}.$$
Then both  \eqref{inieq2} and \eqref{inieq3}  hold true   if we assume that  $$0<\varepsilon+\delta\leq \delta_1=\min\{\delta_1^{(1)}(b,d,C_2, \rho), \delta_1^{(2)}(b,d,C_2, \rho)\},$$ and $\sigma\notin\Sigma_N.$


\end{proof}

\subsubsection{The intermediate scales LDE}
This section and the next deal with intermediate scales LDE.  We will show that the LDE hold  true for all $\bm\omega\in\Omega$  and scales in the range
$$10^{-\frac1\rho}\log^{\frac{1}{\rho}}\frac{1}{\varepsilon+\delta}\leq N\leq (\varepsilon+\delta)^{-c_1},$$  
where $c_1$ is given by Theorem \ref{clusthm}. 
In this case,  the analysis in Section~\ref{NRomega0} becomes essential.  
\begin{lem}\label{intlem}
Let $0<\rho<\frac{1}{10^4}$. There is some $\delta_2=\delta_2(b,d, K, V, \rho, c_1, C_2)>0$ so that the following holds for $0<\varepsilon+\delta\leq\delta_2.$
Assume $(\bm \alpha, \bm\theta)\in \mathcal{M}$ (cf. Theorem \ref{clusthm})  and fix  $$ 10^{-\frac1\rho}\log^{\frac{1}{\rho}}\frac{1}{\varepsilon+\delta}\leq N\leq (\varepsilon+\delta)^{-c_1}.$$ Then for each $\bm\omega\in\Omega$, 
there exists a set $\Sigma_N\subset\R$ with 
\begin{align*}
{\rm meas}(\Sigma_N)\leq (\varepsilon+\delta)^{-\frac{9}{10}}e^{-\frac{1}{2b}N^{\frac{3}{8}}}<e^{-N^{\rho}},
\end{align*}
so that if $\sigma \notin\Sigma_N$, then   for all   $\Lambda\in (\bm 0, \bm n)+\mathcal{ER}_{\bm 0}(N)$ with  $|\bm n|\leq 10N$,
\begin{align*}
\|G_{\Lambda}(\sigma)\|&\leq e^{N^{\frac34}},\\
|G_{\Lambda}(\sigma)((\bm x,\xi);(\bm x',\xi'))|&\leq e^{-\gamma_N|\bm x-\bm x'|}\ {\rm for}\ |\bm x-\bm x'|\geq N^{\frac89},
\end{align*}
where 
$$\gamma_N=\gamma_0-N^{-\vartheta}\in[\frac\gamma2,\gamma],$$
with $\gamma_0$  given by \eqref{gamma0} and $\vartheta>0$ by  Lemma \ref{Liulem1}. 
\end{lem}
\begin{rem}
Note that this lemma also holds for all  $\bm\omega\in\Omega$. 
\end{rem}
\begin{proof}
The proof is based on clustering property (cf. (2) of Theorem \ref{clusthm}), Schur's completement argument,  the Rouch\'e  theorem and the iteration of  resolvent identities. So we first choose $0<\varepsilon+\delta\leq c(b,d,K,V, \max\limits_{1\leq \ell\leq b}|\bm n_\ell|)\ll1$ so that Theorem \ref{clusthm}  holds true. Then  the proof   can be decomposed into two steps. \\

{\noindent{\bf Step 1.}  
We  define 
\begin{align}\label{intsgmn}
\Sigma_N=\bigcup_{\sqrt{N}\leq N'\leq N,\ \Lambda\in (\bm 0, \bm n)+\mathcal{ER}_{\bm 0}(N'),\ |\bm n|\leq 10N'}\Sigma_\Lambda,
\end{align}
where
\begin{align*}
\Sigma_\Lambda=\left\{\sigma\in\R:\  \|G_\Lambda(\sigma)\|\geq  e^{({\rm diam}\ \Lambda)^{\frac34}}\right\}.
\end{align*}
We first show that 
\begin{align}\label{set}
\Sigma_\Lambda\subset \bigcup_{\xi=\pm 1, \bm i=(\bm k, \bm n)\in\Lambda} I_{\xi, \bm  i},
\end{align}
where 
\begin{align*}
I_{\xi, \bm i}=\left\{\sigma\in\R:\ |\xi(\sigma+\bm k\cdot\bm \omega^{(0)})+\mu_{\bm n}|\leq \frac{1}{100}(\varepsilon+\delta)^{\frac{1}{7b}}\right\}.
\end{align*}
In fact, if $\sigma\notin  \bigcup\limits_{(\xi,\bm i)\in\Lambda} I_{\xi, \bm i},$ then for each $\bm \omega\in\Omega,$
\begin{align*}
&\ \ \ \ \inf_{(\xi,\bm i)\in\Lambda} |\xi(\sigma+\bm k\cdot\bm \omega)+\mu_{\bm n}|\\
&\geq \inf_{(\xi,\bm i)\in\Lambda} |\xi(\sigma+\bm k\cdot\bm\omega^{(0)})+\mu_{\bm n}|-\sup_{\bm k\in\Lambda}|\bm k\cdot(\bm \omega-\bm \omega^{(0)})|\\
&\geq \frac{1}{200}(\varepsilon+\delta)^{\frac{1}{7b}}.
\end{align*}
Since  $N\geq 10^{- \frac1\rho} \log^{\frac{1}{\rho}} \frac{1}{\varepsilon+\delta}$, using Neumann series argument  implies 
\begin{align*}
\|G_\Lambda(\sigma)\|\leq C(\varepsilon+\delta)^{-\frac{2}{7b}}<e^{({\rm diam}\ \Lambda)^{\frac34}}
\end{align*}
when  $\sigma\notin  \bigcup\limits_{(\xi,\bm i)\in\Lambda} I_{\xi, \bm i}.$ This proves \eqref{set} and 
\begin{align*}
\Sigma_\Lambda=\bigcup_{(\xi,\bm i)\in\Lambda}(\Sigma_\Lambda\cap I_{\xi,\bm i}).
\end{align*}

In the following,  we estimate ${\rm meas}(\Sigma_\Lambda\cap I_{\xi, \bm i}).$ For this purpose, we  fix any $\xi^*\in\{+,-\}$, $\bm i^*=(\bm k^*, \bm n^*)$ and denote 
$$ I^*=I_{\xi^*, \bm i^*}=\left[\sigma^*-\frac{1}{100}(\varepsilon+\delta)^{\frac{1}{7b}}, \sigma^*+\frac{1}{100}(\varepsilon+\delta)^{\frac{1}{7b}}\right],$$
where 
\begin{align}\label{sgm*}
\sigma^*=-\bm k^*\cdot\bm \omega^{(0)}-{\xi^*}\mu_{\bm n^{*}}.
\end{align}
We will study $G_\Lambda(\sigma)$ for $\sigma\in I^*.$ From (2) of Theorem \ref{clusthm}, we have for every  $\sigma^*\in\R$,  there exists $B_*=B_*^+\cup B_*^-\subset \Lambda$ with $\# B_*^{\pm}\leq b$ so that
\begin{align*}
\inf_{(\xi,\bm i)\in\Lambda\setminus B_*}|\xi(\sigma^*+\bm k\cdot\bm\omega^{(0)})+\mu_{\bm n}|\geq \frac{(\varepsilon+\delta)^{\frac{1}{8b}}}{4}.
\end{align*}
Similar to the proof of \eqref{set},  if (we denote $\mathbb D_r(z):=\{w\in\C:\ |w-z|\leq r\}$)
$z\in \mathbb{D}_{(\varepsilon+\delta)^{\frac{1}{8b}}/10}(\sigma^*),$ 
then
\begin{align*}
\inf_{(\xi,\bm i)\in\Lambda\setminus B_*}|\xi(z+\bm k\cdot\bm \omega)+\mu_{\bm n}|\geq \frac{(\varepsilon+\delta)^{\frac{1}{8b}}}{8}.
\end{align*}
So we have an analytic extension of ${H}_\Lambda(z)$ to $z\in\mathbb{D}_{(\varepsilon+\delta)^{\frac{1}{8b}}/10}(\sigma^*)$  and 
\begin{align*}
\sup_{(\bm x, \xi),(\bm x', \xi')\in \Lambda,\ z\in\mathbb{D}_{(\varepsilon+\delta)^{\frac{1}{8b}}/10}(\sigma^*)}|{D}_\Lambda(z)((\bm x,\xi); (\bm x', \xi'))|\leq C(b)(\varepsilon+\delta)^{-c_1}. 
\end{align*}
 More importantly, if we denote $\Lambda^c=\Lambda\setminus B_*$, then by the  Neumann series argument,  we have  for all $z\in \mathbb{D}_{(\varepsilon+\delta)^{\frac{1}{8b}}/10}(\sigma^*),$
\begin{align*}
\|G_{\Lambda^c}(z)\|\leq (\varepsilon+\delta)^{-\frac{1}{4b}}.
\end{align*}
In the following,  we estimate $G_{\Lambda}(z)$ using the Schur complement reduction.  We define 
\begin{align*}
S(z)=R_{B_*}{H}(z)R_{B_*}-R_{B_*}{H}(z)R_{\Lambda^c}G_{\Lambda^c}(z)R_{\Lambda^c}{H}(z)R_{B_*}.
\end{align*}
Obviously, we have
\begin{align}\label{schureq1}
R_{B_*}{H}(z)R_{B_*}
&=R_{B_*}{D}(z)R_{B_*}+O(\varepsilon+\delta) 
\end{align}
and 
\begin{align}\label{schureq2}
R_{B_*}{H}(z)R_{\Lambda^c}G_{\Lambda^c}(z)R_{\Lambda^c}{H}(z)R_{B_*}=O((\varepsilon+\delta)^{2-\frac{1}{4b}}).
\end{align}
Now we consider $\det S(z)$. Combining \eqref{schureq1} and \eqref{schureq2} yields 
\begin{align*}
&\ \ \ \det S(z)\\
&= \prod_{ \bm i=(\bm k, \bm n)\in B_*^+}(z+\bm k\cdot\bm\omega+\mu_{\bm n})\prod_{\bm i=(\bm k,\bm n)\in B_*^-}(-z-\bm k\cdot\bm\omega+\mu_{\bm n})\\
&\ \ \ +O((\varepsilon+\delta)^{3/4})\\
&=(-1)^{\#B_*^-}\prod_{\ell=1}^{\# B_*}(z-\sigma_\ell)+O((\varepsilon+\delta)^{3/4}),
\end{align*}
where $\sigma_\ell\in\{-\bm k\cdot \bm \omega- \mu_{\bm n}\}_{(\bm k, \bm n)\in B_*^+}\cup\{-\bm k\cdot \bm \omega+\mu_{\bm n}\}_{(\bm k, \bm n)\in B_*^-}.$ 
Recalling the definition of $B_*$, we must have 
\begin{align*}
|z-\sigma_\ell|\leq \frac45(\varepsilon+\delta)^{\frac{1}{8b}}.
\end{align*}
From \eqref{sgm*}, there exists at least one $\sigma_\ell$ so that $|\sigma_\ell-\sigma^*|=O((\varepsilon+\delta)^{1-c_1}).$
At this stage,  we need a useful result. 
\begin{lem}\label{pigelem}
{ Let $\varphi=\left(\frac{14}{15}\right)^{\frac{1}{5b}}<1$. 
There is $0\leq \ell_*\leq 5b-1$ so that
\begin{align*}
\left(\mathbb{A}_{\ell_*+1}\setminus\mathbb{A}_{\ell_*}\right)\cap \{\sigma_\ell\}_{1\leq \ell\leq \# B_*}=\emptyset, 
\end{align*}
where for $0\leq \ell\leq 5b$,
\begin{align*}
\mathbb{A}_\ell=\mathbb{D}_{(\varepsilon+\delta)^{\frac{\varphi^{\ell}}{7b}}}(\sigma^*).
\end{align*}
}
\end{lem}
\begin{proof}[Proof of Lemma \ref{pigelem}]
It suffices to note that 
\begin{align*}
\mathbb{A}_{5b}\setminus\mathbb{A}_{0}=\bigcup_{\ell=0}^{5b-1}\left(\mathbb{A}_{\ell+1}\setminus \mathbb{A}_{\ell}\right). 
\end{align*}
The proof then follows from the pigeonhole principle since $\#B_*\leq 2b$. 
\end{proof}

Let $r_*=(\varepsilon+\delta)^{\frac{\varphi^{\ell_*}}{7b}}$, where $\varphi$ and $\ell_*$ are defined in Lemma \ref{pigelem}. From Lemma \ref{pigelem}, we can restrict our considerations on  $\mathbb{A}_*$ with 
\begin{align*}
\mathbb{A}_*=\mathbb{D}_{\frac{r_*+r_*^\varphi}{2}}(\sigma^*). 
\end{align*}
Then
\begin{align*}
\mathbb{D}_{(\varepsilon+\delta)^{\frac{1}{7b}}}(\sigma^*)\subset\mathbb{A}_{\ell_*}\subset\mathbb{A}_*\subset \mathbb{D}_{(\varepsilon+\delta)^{\frac{1}{7.5b}}}(\sigma^*).
\end{align*}
Let 
\begin{align*}
\mathcal K=\{1\leq \ell\leq \# B_*:\  \sigma_\ell\in \mathbb{D}_{10r_*}(\sigma^*)\}.
\end{align*}
We have $\mathcal K\neq \emptyset$ and for $\ell\not\in \mathcal K$,
\begin{align*}
\sigma_\ell\notin\mathbb{A}_*,\ {\rm dist}(\sigma_\ell,\partial \mathbb{D}_{20r_*}(\sigma^*))\geq{\rm dist}(\sigma_\ell, \partial\mathbb{A}_*)\geq \frac{r_*^\varphi}{4}\gg(\varepsilon+\delta)^{\frac{1}{7b}}.
\end{align*}
As a result, we can write for $z\in\mathbb{A}_*,$
\begin{align}\label{Rouceq1}
\frac {\det S(z)}{\prod_{\ell\notin \mathcal K}(z-\sigma_\ell)}=\pm{\prod_{\ell\in \mathcal K}(z-\sigma_\ell)}+O((\varepsilon+\delta)^{\frac13}).
\end{align}
It holds that 
\begin{align*}
{\prod_{\ell\in \mathcal K}|z-\sigma_\ell|}\big|_{\partial \mathbb{D}_{20r_*}(\sigma^*)}\geq (\varepsilon+\delta)^{\frac{\#\mathcal K}{7b}}\geq (\varepsilon+\delta)^{2/7}\gg (\varepsilon+\delta)^{\frac13}.
\end{align*}
Applying Rouch\'e  theorem shows that  the function $\frac {\det S(z)}{\prod_{\ell\notin \mathcal K}(z-\sigma_\ell)}$ has exactly $\#\mathcal K$ zeros (denoted by $z_\ell,\ \ell\in \mathcal K$)  in $ \mathbb{D}_{20r_*}(\sigma^*).$   From  $\sigma_\ell\in \mathbb{D}_{10r_*}(\sigma^*)$, we also get
\begin{align}\label{Rouceq2}
|\sigma_\ell-z_\ell|\le 30r_*.
\end{align}
From similar analysis, we can obtain that \eqref{Rouceq1} also has exactly $\#\mathcal{K}$ zeros in $\mathbb{A}_*$, which implies that $\{z_\ell\}_{\ell\in\mathcal{K}}$ are all zeros of \eqref{Rouceq1} in $\mathbb{A}_*$. So on $\mathbb{A}_*$, we have 
\begin{align}\label{Rouceq3}
\frac {\det S(z)}{\prod_{\ell\notin \mathcal K}(z-\sigma_\ell)}=g(z)\prod_{\ell\in \mathcal K}(z-z_\ell),
\end{align}
where $g$ is analytic on $\mathbb{A}_*$ with $\inf\limits_{z\in\mathbb{A}_*}|g(z)|>0.$ Using \eqref{Rouceq2} and \eqref{Rouceq3}, we have
\begin{align*}
g\big|_{\partial \mathbb{A}_*}=1+O((\varepsilon+\delta)^{\frac{1}{21}}).
\end{align*}
Using the maximum principle then leads to 
\begin{align*}
g\big|_{ \mathbb{A}_*}=1+O((\varepsilon+\delta)^{\frac{1}{21}}).
\end{align*}
We have established on $\mathbb{A}_*,$
\begin{align*}
\det S(z)=g(z)\prod_{\ell\notin \mathcal K}(z-\sigma_\ell)\prod_{\ell\in \mathcal K}(z-z_\ell).
\end{align*}
In particular, one has for all $z\in\mathbb{A}_*,$
\begin{align*}
|\det S(z)|&\geq \frac 45 (\varepsilon+\delta)^{\frac{\#B_*-\#\mathcal K}{7b}}\prod_{\ell\in \mathcal K}|z-z_\ell|\\
&\geq  (\varepsilon+\delta)^{\frac{2}{7}}\prod_{\ell\in \mathcal K}|z-z_\ell|
\end{align*}
assuming $0<\varepsilon+\delta\leq c(b)\ll1.$ Thus combining Hadamard's inequality and Cramer's rule implies for $z\in\mathbb{A}_*,$
\begin{align*}
\|S^{-1}(z)\|&\leq C(b)    (\varepsilon+\delta)^{-{2bc_1}}(\varepsilon+\delta)^{-\frac{2}{7}}\prod_{\ell\in \mathcal K}|z-z_\ell|^{-1}\\
&\leq (\varepsilon+\delta)^{-\frac{1}{3}}\prod_{\ell\in \mathcal K}|z-z_\ell|^{-1}\ ({\rm since}\  c_1<\frac{1}{100b}).
\end{align*}
Applying the Schur's complement argument   shows that for $z\in \mathbb{A}_*,$
\begin{align*}
\|G_\Lambda(z)\|&\leq 4(1+\|G_{\Lambda^c}(z)\|)^2(1+\|S^{-1}(z)\|)\\
&\leq (\varepsilon+\delta)^{-\frac{7}{8}}\prod_{\ell\in \mathcal K}|z-z_\ell|^{-1}.
\end{align*}
Recall that $I^*\subset \mathbb{A}_*$. So for $\sigma\in I^*,$ we have
\begin{align*}
\|G_\Lambda(\sigma)\|&\leq (\varepsilon+\delta)^{-\frac{7}{8}}\prod_{\ell\in \mathcal K}|\sigma-z_\ell|^{-1} 
\end{align*}
and 
\begin{align*}
\Sigma_\Lambda\cap I^*\subset \bigcup_{\ell\in \mathcal K}\left\{\sigma\in I^*:\ |\sigma-z_\ell|\leq (\varepsilon+\delta)^{-\frac{7}{8}}e^{-\frac{1}{2b}({\rm diam}\ \Lambda)^{\frac34}}\right\}.
\end{align*}
This implies 
\begin{align*}
{\rm meas}(\Sigma_\Lambda\cap I^*)\leq C(\varepsilon+\delta)^{-\frac{7}{8}}e^{-\frac{1}{2b}({\rm diam}\ \Lambda)^{\frac34}},
\end{align*}
and subsequently
\begin{align*}
{\rm meas}(\Sigma_\Lambda)\leq (\varepsilon+\delta)^{-\frac{8}{9}}e^{-\frac{1}{2b}({\rm diam}\ \Lambda)^{\frac34}},
\end{align*}
which combined with \eqref{intsgmn} shows
\begin{align*}
{\rm meas}(\Sigma_N)\leq (\varepsilon+\delta)^{-\frac{9}{10}}e^{-\frac{1}{2b}N^{\frac{3}{8}}}.
\end{align*}}\\

\noindent {\bf Step 2.} In this step,  we prove the off-diagonal  exponential  decay of $G_\Lambda(\sigma)((\bm x,\xi);(\bm x', \xi'))$ assuming $\sigma\not\in\Sigma_{N}$ and $\Lambda\in(\bm 0, \bm n)+\mathcal{ER}_{\bm 0}(N)$ with $|\bm n|\leq 10N.$  By Theorem \ref{clusthm}, we know that there exists $B\subset \Lambda$ so that 
\begin{align*}
\inf_{ (\xi, \bm k, \bm n)\in\Lambda\setminus B}|\xi(\sigma+\bm k\cdot\bm\omega)+\mu_{\bm n}|\geq \frac{(\varepsilon+\delta)^{\frac{1}{8b}}}{8}.
\end{align*}
Using the Neumann series argument  (cf. Lemma \ref{lwlem}) yields that for any $\Lambda'\subset\Lambda$ with $\Lambda'\cap B=\emptyset,$
\begin{align}
\label{intexp1}\|G_{\Lambda'}(\sigma)\|&\leq 50(\varepsilon+\delta)^{-\frac{1}{4b}},\\
\label{intexp2}|G_{\Lambda'}(\sigma)((\bm x,\xi);(\bm x', \xi'))|&\leq 50(\varepsilon+\delta)^{-\frac{1}{4b}}e^{-\gamma|\bm x-\bm x'|}\ {\rm for}\ \bm x\neq \bm  x'.
\end{align}
We say that $\mathcal{ER}(\sqrt{N})\ni Q,\ Q\subset \Lambda$ is $\sqrt{N}$-regular if both \eqref{intexp1} and \eqref{intexp2} hold with $\Lambda'=Q.$  Otherwise, $Q$ is called $\sqrt{N}$-singular.  Let $\mathcal{F}$ be any family of pairwise disjoint $\sqrt{N}$-singular sets   contained in $\Lambda$. We obtain 
\begin{align*}
\#\mathcal{F}\leq 2b. 
\end{align*}
We have established the sublinear bound. Then applying the coupling lemma, i.e.,  Lemma \ref {Liulem1}  of   \cite{Liu22}   shows 
\begin{align*}
|G_\Lambda(\sigma)((\bm x,\xi);(\bm x', \xi'))|\leq e^{-\gamma_N|\bm x-\bm x'| } \ {\rm for}\  |\bm x-\bm x'|>N^{\frac89},
\end{align*}
where  
\begin{align*}
\gamma_N&\geq \gamma- N^{-\frac89}\cdot \log\frac{1}{\varepsilon+\delta}-N^{-\vartheta}\\
&\geq \gamma- {\log^{-\frac{8}{9\rho}+1} \frac{1}{\varepsilon+\delta}}-N^{-\vartheta}\\
&\geq  \gamma_0-N^{-\vartheta}\ ({\rm since}\ 0<\rho<\frac{1}{10^4}). 
\end{align*}
This proves the off-diagonal  exponential decay estimates. 

We  have completed the proof of Lemma \ref{intlem}.

\end{proof}

	\subsubsection{The large scales LDE}
	In this section,  we will prove the LDE for 
	\begin{align*}
	(\varepsilon+\delta)^{-c_1}<N<\infty.
	\end{align*}
To perform the multi-scale scheme, we need two small scales $N_1<N_2$ with
\begin{align*}
N_2=N_1^{\frac{2}{\rho}},\  N=N_1^{\frac{1}{\rho^2}}. 
\end{align*}
We assume the LDE hold  for $\bm\omega\in \Omega_{N_2}\subset\Omega_{N_1}$. 


Then we have 
\begin{lem}\label{lsclem}
There are some $0<c_2=c_2(b,d)\leq \frac{1}{10^4(b+d)^2}$ and  some  $$\delta_3=\delta_3(b, d, K, V, \max\limits_{1\leq \ell\leq b}|\bm n_\ell|, \kappa_1, \kappa_2, \rho,  c_1, C_2)>0$$
so that the following holds true for $0<\varepsilon+\delta\leq \delta_3$, $\varepsilon\leq \delta\leq \log^{-1}\frac{1}{\varepsilon}$ and $0<\rho\leq c_2\kappa_1$ with $\kappa_1$ given by Lemma \ref{Boulem}. 
Let $(\bm \alpha, \bm \theta)\in\mathcal W$  (cf. \eqref{ALset})  and $N$ satisfy  
\begin{align*}
	(\varepsilon+\delta)^{-c_1}<N<\infty.
	\end{align*}
 Then there is a semi-algebraic set \  $\widetilde\Omega_N\subset\Omega$  (cf. Lemma \ref{mellem})  with ${\rm meas}(\Omega\setminus\widetilde \Omega_N)\leq e^{-\frac15N^{\frac{3\kappa_1\rho^2}{4}}}$  and $\deg \ \widetilde \Omega_N\leq N^{5(b+d)}$ so that,  for $\bm\omega\in{\rm DC}_{\bm\omega}(N)\cap\widetilde\Omega_N\cap{\Omega_{N_2}}$ (with ${\rm DC}_{\bm\omega}(N)$ given by \eqref{dcomg}),     there is some 
$\Sigma_N\subset\R$ with ${\rm meas}(\Sigma_N)\leq e^{-N^{\rho}}$, so that,  if $\sigma\notin\Sigma_N$, then for all $\Lambda\in(\bm 0, \bm n)+\mathcal{ER}_{\bm 0}(N)$ with $|\bm n|\leq 10N,$
\begin{align*}
\|G_{\Lambda}(\sigma)\|&\leq e^{N^{\frac{3}{4}}},\\
|G_{\Lambda}(\sigma)((\bm x,\xi);(\bm x', \xi'))|&\leq e^{-\gamma_N|\bm x-\bm x'|}\ {\rm for}\ |\bm x-\bm x'|\geq N^{\frac{8}{9}},
\end{align*}
where $[\frac\gamma2, \gamma]\ni\gamma_N\geq\gamma_{N_1}-N^{-\kappa}$   (cf. \eqref{kappa}). 
\end{lem}

We begin with a useful definition. 

\begin{defn}[{\bf $\sigma$-good}]\label{sgmbd}
For  $\Lambda\in \mathcal{ER}(L)$, we say $\Lambda$ is $\sigma$-{\it good} if 
\begin{align*}
\|G_{\Lambda}(\sigma)\|&\leq e^{L^{\frac{3}{4}}},\\
|G_{\Lambda}(\sigma)(\bm x,\xi);(\bm x', \xi'))|&\leq e^{-\gamma_L |\bm x-\bm x'|}\ {\rm for}\ |\bm x-\bm x'|\geq L^{\frac{8}{9}},
\end{align*}
where $\gamma_L\in[\frac \gamma 2, \gamma]$, and $\sigma$-{\it bad}, if one of the inequalities is violated. 
\end{defn}
The proof of Lemma \ref{lsclem} is based on matrix-valued Cartan's lemma (Lemma \ref{mcl}) and iterations of resolvent identities.  To apply Cartan's lemma, one needs to establish   the sublinear bound on $\sigma$-bad  $\Lambda\in (\bm k, \bm n)+\mathcal{ER}_{\bm 0}(N_1)$  contained in $\Lambda_N$: If  $|\bm n|\leq 10N_1$, the sublinear bound can be obtained via the T\"oplitz property of $H$ in the $\bm k$-direction   together with the semi-algebraic sets theory (cf. Lemma \ref{lsclem1}); for $|\bm n|>10N_1$, we will use the short-range property of $S$ and Bourgain's geometric lemma (Lemma \ref{Boulem}) to prove the sublinear bound. Once the sublinear bound is derived, one can use the Cartan's lemma and resolvent identities to finish the proof of large scales LDE. 

We need the following ingredients to prove Lemma~\ref{lsclem}. 

\begin{lem}\label{lsclem1} There is  some $0<c_2=c_2(b,d)\leq \frac{1}{10^4(b+d)^2}$   so that the following holds true. Assume that LDE hold  true at scale $N_1$ with $$N_1\geq N_0(b,d,\rho,K,V).$$ If  $0<\rho \leq c_2$,  then for all $\sigma\in\R$,  the number of  $\sigma$-bad   $\Lambda\in (\bm k,\bm n)+\mathcal{ER}_{\bm 0}(N_1)$ satisfying $|\bm k|\leq N^2$ and $|\bm n|\leq 10N_1$ is at most $N^{\frac{1}{10^3(b+d)^2}}.$ 
 \end{lem}
\begin{proof}
First, we define $\widetilde \Sigma_{N_1}$ to be the set of  $\sigma\in\R$ so that,  there is    $\Lambda\in (\bm0, \bm n)+\mathcal{ER}_{\bm 0}(N_1)$ satifying  $|\bm n|\leq 10N_1$,  which is $\sigma$-bad.  We then obtain using LDE at scale $N_1,$
\begin{align*}
{\rm meas}(\widetilde \Sigma_{N_1})\leq C(b,d)N_1^{C(b,d)}e^{-N_1^{\rho}}<e^{-\frac12 N_1^{\rho}}
\end{align*}
for  $N_1\geq N_0(b,d,\rho).$
Using the Hilbert-Schmidt norm and  Cramer's rule, we can identify  $\widetilde \Sigma_{N_1}$ with  a semi-algebraic set of $\deg \widetilde \Sigma_{N_1}\leq N_1^{C(b,d)}$.  Then using Basu-Pollack-Roy Theorem \cite{BPR96} on Betti numbers of a semi-algebraic set (cf. also Proposition 9.2 of \cite{Bou05}), we have a decomposition of $$\widetilde \Sigma_{N_1}\subset\bigcup_{1\leq \ell\leq N_1 ^{C(b,d)}}I_\ell,$$
where each $I_\ell$ is an interval of length $|I_\ell|\leq e^{-\frac12 N_1^{\rho}}.$  Now assume that $\Lambda\in(\bm k,\bm n)+\mathcal{ER}_{\bm 0}(N_1)$ and $\Lambda'\in(\bm k',\bm n)+\mathcal{ER}_{\bm 0}(N_1)$, which  are all $\sigma$-bad for  some  $|\bm n|\leq 10N_1, \bm k\neq \bm k'$ and $|\bm k|, |\bm k'|\leq N^2$.  Then from  the T\"oplitz property of ${H}(\sigma)$ in the $\bm k$-direction, we obtain 
$$\sigma+\bm k\cdot\bm \omega,\ \sigma+\bm k'\cdot\bm\omega\in\widetilde \Sigma_{N_1}. $$
As a result, by $\bm k\neq \bm k'$ with $|\bm k|, |\bm k'|\leq N^2$, $\bm\omega\in{\rm DC}_{\bm\omega}(N)$ (cf. \eqref{dcomg}) and $N\sim N_1^{\frac{1}{\rho^2}}$,  we have for $N_1\geq N_0(b,d, \rho, K,V),$
\begin{align*}
{e^{-\frac12 N_1^{\rho}}}=e^{-\frac12 N^{\rho^3}}< e^{-N^{\rho^4}}\leq |(\bm k-\bm k')\cdot\bm\omega|,
\end{align*}
which implies $\bm k$ and $\bm k'$ can not stay in the same interval $I_\ell$ for   $1\leq \ell\leq N_1^{C(b,d)}.$  Thus we have shown that the number of $\sigma$-bad   $\Lambda\in (\bm k,\bm n)+\mathcal{ER}_{\bm 0}(N_1)$ with $|\bm n|\leq 10N_1$ and $|\bm k|\leq N^2$ is at most
\begin{align*}
N_1^{C(b,d)}\leq N^{\rho^2 \cdot C(b,d)} \leq N^{\frac{1}{10^3(b+d)^2}},
\end{align*}
assuming $0<\rho\leq c_2(b,d)\leq \frac{1}{10^4(b+d)^2}. $ This proves Lemma \ref{lsclem1}. 
\end{proof}

 
 Next, we will deal with $|\bm n|>10N_1$ by combining short-range  property  of $S$ and Bourgain's geometric lemma (cf. Lemma \ref{Boulem}). For  $\Lambda\subset\Z^{b+d}$,  denote  by 
 $\Pi_b\Lambda $ (resp. $\Pi_d\Lambda$)    the projection of $\Lambda$ on $\Z^b$ (resp. $\Z^d$). For each $\bm k\in \Pi_b\Lambda, $ define 
$$\Lambda(\bm k)=\{\bm n\in\Z^d:\ (\bm k,\bm n)\in\Lambda\}.$$
We have 
 \begin{lem}\label{lsclem2} Assume that $(\bm\alpha, \bm\theta)\in\mathcal{\mathcal W'}$ (cf. Lemma \ref{Boulem}) and  $L\geq 10^{-\frac{32}{\kappa_1^2}}\log^{\frac{4}{\kappa_1}}(\log \frac1\varepsilon)$. We have
 { 
\begin{itemize}
\item[(1)] Fix $\Lambda\in (\bm k', \bm n')+\mathcal{ER}_{\bm 0}(L) $ with  $|\bm n'|>10L$.  There is a collection of $\{\lambda_\ell=\lambda_\ell(\bm \alpha, \bm\theta)\}_{1\leq \ell\leq L^{b+2d}}$ depending only on $\Lambda, \bm\alpha, \bm\theta, V$ so that,  if 
$$\min_{\bm k\in\Pi_b\Lambda, 1\leq \ell\leq L^{b+2d}, \xi=\pm1}|\sigma+\bm k\cdot\bm\omega+\xi{\lambda_\ell}|> e^{-\frac14 L^{\frac{3\kappa_1}{4}}},$$
 then $\Lambda$ is $\sigma$-good with $\gamma_L=\gamma$, where $\kappa_1$ is given in Lemma \ref{Boulem}. 
\item[(2)]Let $L\leq N\leq e^{\log^2 L}$. There exists some $\tilde\Sigma_{L}\subset \R$ with ${\rm meas}(\tilde\Sigma_{L})\leq  e^{-\frac15 L^{\frac{3\kappa_1}{4}}}$ so that,  if $\sigma\notin\tilde\Sigma_{L}$,  then each   $\Lambda\in (\bm k, \bm n)+\mathcal{ER}_{\bm 0}(L) $ satisfying $|(\bm k,\bm n)|\leq 100N^2$ and $|\bm n|>10L$ is $\sigma$-good with $\gamma_L=\gamma$. 
\end{itemize}}
  \end{lem}
\begin{proof}
(1)  
Let  $\bm k\in\Pi_b\Lambda$.  {\it It is important that  either $\Lambda({\bm k})\in \mathcal{ER}_{\Z^d}(L),$  or  $\Lambda({\bm k})$ is a rectangle  of  width   at least $L$ (cf. Lemma \ref{projlemz})}.   In each of the two cases, we have  $\Lambda({\bm k})\in\mathcal E_{2L}^{L}$. 

For   $Q\subset \Lambda({\bm k}), $ 
write 
\begin{align*}
 A_{\bm k, Q}=A_{\bm k, Q}(\sigma)&=R_{Q}({D}(\sigma)+\varepsilon(\Delta\oplus\Delta))R_{Q}\\
 &=(-\sigma-\bm k\cdot\bm \omega+\mathcal{L}_{Q}(\bm\theta)) \oplus (\sigma+\bm k\cdot\bm \omega+\mathcal{L}_{Q}(\bm\theta)), 
\end{align*}
where $\mathcal{L}_{Q}(\bm\theta):=\mathcal H_Q(\bm \alpha, \bm \theta)$ is defined in Lemma \ref{Boulem}. 
Then  the set of all eigenvalues of $A_{\bm k, Q}$ is given by 
$$\{\pm (\sigma+\bm k\cdot\bm \omega)+\lambda_{\ell, Q}(\bm \alpha, \bm\theta)\}_{\ell=1}^{\#Q},$$
where $\{\lambda_\ell:=\lambda_{\ell, Q}(\bm \alpha, \bm\theta)\}_{\ell=1}^{\#Q}$ denotes the spectrum  of $\mathcal{L}_{Q}(\bm\theta).$ 
So we have 
\begin{align*}
\|A_{\bm k, Q}^{-1}\|&\leq \max_{\xi=\pm1, 1\leq \ell\leq \#Q}|\xi (\sigma+\bm k\cdot\bm\omega)+\lambda_\ell|^{-1}.
\end{align*}
We now can  use Lemma \ref{Boulem}  and  resolvent identities. Since $\bm \alpha\in\mathcal{A}\subset\mathcal{A}_{L}$,   applying  (1)  of  Lemma \ref{Boulem} with $E_{\bm k}=\sigma+\bm k\cdot\bm \omega$  similar to \cite{JLS20} (cf. pages 471--472, the proof of Theorem 3.7) gives the following:  For each $\bm m\in\Lambda(\bm k)$,   there exist $\frac{1}{4}L^{\kappa_1}\leq \tilde{L} \leq L^{\kappa_2}$,  $Q_{\bm m} \in \mathcal{ER}_{\Z^d}(\tilde L)$ and $\tilde{Q}_{\bm m}$, such that
\begin{align*}
  &Q_{\bm m} \subset \bm m+[-\tilde{L}, \tilde{L}]^d, \   \bm m+[-\tilde{L}^{\frac{1}{10d}}, \tilde{L}^{\frac{1}{10d}}]^d\subset\tilde{Q}_{\bm m},\\
& \bm m\in Q_{\bm m} \subset \Lambda(\bm k), {\rm dist}(\bm m,  \Lambda({\bm k})\setminus Q_{\bm m})\geq \frac{ \tilde{L}}{2},\ {\rm diam}\ \tilde Q_{\bm m}  \leq 4 \tilde{L}^{\frac{1}{10d}}.
\end{align*}
Also  for any $ \bm n\in  Q_{\bm m}\setminus  \tilde{Q}_{\bm m}$,
there exist $L_1\sim (\log L)^{\frac{4}{\kappa_1}}$ and  some $\mathcal{ER}_{\Z^d}(L_1)\ni W \subset Q_{\bm m}\setminus  \tilde{Q}_{\bm m}$ such that
\begin{equation*}
 \bm n\in W,\  {\rm dist}(\bm n, Q_{\bm m}\setminus  \tilde{Q}_{\bm m}\setminus W)\geq \frac{L_1}{2},
\end{equation*}
and 
\begin{align*}
\|\mathcal{H}^{-1}_{W}(E_{\bm k};\bm\theta)\|&\leq e^{\sqrt{L_1}},\\
|\mathcal{H}^{-1}_{W}(E_{\bm k};\bm\theta)(\bm n; \bm n')|&\leq e^{-\frac12|\log\varepsilon|\cdot|\bm n-\bm n'|}\ {\rm for}\ |\bm n-\bm n'|\geq {L_1^{\frac89}}.
\end{align*}
Next,  assume that (set $Q=Q_{\bm m}$)
\begin{align*}
\min_{\bm m\in\Lambda(\bm k), \xi=\pm1, 1\leq \ell\leq \#Q_{\bm m}}|\sigma+\bm k\cdot\bm \omega+\xi{\lambda_\ell}|>e^{-\frac 14 \tilde L^{\frac{3}{4}}}.
\end{align*}
Then applying  Lemma \ref{JLSlem2} of \cite{JLS20}  yields  for $$A_{\bm k}:=A_{\bm k}(\sigma)=A_{\bm k, \Lambda(\bm k)}(\sigma),$$
the following estimates 
\begin{align*}
\|A^{-1}_{\bm k}\|&\leq e^{\frac 12 L^{\frac{3}{4}}},\\
|A_{\bm k}^{-1}(\bm n;\bm n')|&\leq e^{-\frac14|\log\varepsilon|\cdot|\bm n-\bm n'|}\ {\rm for}\ |\bm n-\bm n'|\geq {L}^{\frac{8}{9}}.
\end{align*}
Note that 
\begin{align*}
\tilde G_{\Lambda}(\sigma)=\bigoplus_{\bm k\in\Pi_b\Lambda}A_{\bm k}^{-1},
\end{align*}
where 
\begin{align*}
\tilde G_{\Lambda}(\sigma)=\left(R_{\Lambda} ({D}(\sigma)+\varepsilon(\Delta\oplus\Delta))R_{\Lambda}\right)^{-1}.
\end{align*}

By taking account of all $\bm k\in\Pi_b\Lambda,$ we obtain  that if 
\begin{align}\label{2mel}
 \min_{\bm k\in\Pi_b\Lambda,  \bm m\in\Lambda(\bm k), \xi=\pm1, 1\leq \ell\leq \#Q_{\bm m}}|\sigma+\bm k\cdot\bm \omega+\xi{\lambda_\ell}|> e^{-\frac 14 L^{\frac{3\kappa_1}{4}}},
\end{align}
then 
\begin{align*}
\|\tilde G_{\Lambda}(\sigma)\|&\leq e^{\frac12L^{\frac{3}{4}}},\\
|\tilde G_{\Lambda}(\sigma)((\bm k, \bm n, \xi);(\bm k', \bm n', \xi'))|&\leq \delta_{\bm k, \bm k'}e^{-\frac14|\log\varepsilon| \cdot|\bm n-\bm n'|}\ {\rm for}\ |\bm n-\bm n'|\geq L^{\frac{8}{9}}.
\end{align*}
Note that the  total number of $\lambda_\ell$ in \eqref{2mel}   is at most $$C(b, d)L^b L^dL^{\kappa_2}L^{\kappa_2d}\ll L^{b+2d}.$$

Finally, since  $|\bm n|>10L$,  we know that $S$ has the decay estimate
\begin{align*}
|S((\bm x,  \xi);(\bm x', \xi'))|\leq C_2(1+|\bm x-\bm x'|)^{C_2}e^{-6\gamma L}e^{-2\gamma |\bm x-\bm x'|}. 
\end{align*}
Then using the perturbation   Lemma \ref{lwlem}   implies  that $\Lambda$  is $\sigma$-good  assuming \eqref{2mel} holds true.  {In fact, we can apply Lemma \ref{lwlem} (we hide the  dependence on $\xi, \xi'$) with $B=\delta S_\Lambda(\bm x, \bm x')$  , $A^{-1}(\bm  x, \bm x')=\tilde G_{\Lambda}(\bm x, \bm x')$ and $c=2\gamma, \epsilon_1=e^{-\frac 12 L^{\frac34}}, M=L^{\frac89}, \epsilon_2=C\delta e^{-6\gamma L}$. Since   $\gamma\in(\frac12, 10)$, we obtain  for $L\geq N_0(b,d, C_2)\gg1,$
$$ C(b,d, C_2)L^{2(b+d)+C_2}\delta e^{-6\gamma L}e^{2\gamma L^{\frac89}}e^{\frac 12 L^{\frac34}} \leq \frac12.$$
This verifies all assumptions  of Lemma \ref{lwlem}. So using Lemma \ref{lwlem}, we get 
\begin{align*}
	\|G_{\Lambda}(\sigma)\|&\leq 2 e^{\frac12L^{\frac{3}{4}}}<e^{L^{\frac34}},
\end{align*}
and  for $|\bm x-\bm x'|\geq L^{\frac{8}{9}}$ (since $\frac 14 |\log \varepsilon|>2\gamma>1$), 
\begin{align*}
	|G_{\Lambda}(\sigma)((\bm x, \xi);(\bm x',  \xi'))|&\leq |\tilde G_{\Lambda}(\sigma)((\bm x, \xi);(\bm x',  \xi'))|+
	e^{\frac 12 L^{\frac34}}e^{-2\gamma |\bm x-\bm x'|} \\
	&\leq  e^{L^{\frac34}-L^{\frac89}}e^{-\gamma |\bm x-\bm x'|}\\ 
	&\leq e^{-\gamma |\bm x-\bm x'|}. 
\end{align*}}
 This proves (1) of Lemma \ref{lsclem2}.  
\begin{rem}\label{remn1}
\begin{itemize}
\item Let $L\leq N\leq e^{(\log L)^2}$. We would also like to remark that if we take into account of all these $\Lambda\in (\bm k, \bm n)+\mathcal{ER}_{\bm 0}(L)$ with $|(\bm k, \bm n)|\leq 100N^2$ and $|\bm n|>10L$ (the total number of  these $\Lambda$ is at most $C(b,d)N^{2(b+d)}$), then we get a sequence  of real numbers $\{{\lambda_\ell(\bm \alpha, \bm \theta)}\}_{1\leq \ell\leq N^{4(b+d)}}$. As a result,   there is a set $\tilde\Sigma_{L}\subset \R$  satisfying  $${\rm meas}(\tilde\Sigma_{L})\leq e^{C(b,d)\log^2 L}e^{-\frac14 L^{\frac{3}{4}\kappa_1}}\leq e^{-\frac15 L^{\frac{3}{4}\kappa_1}}$$ so that,  for  $\sigma\not\in\tilde\Sigma_{L}$, all   $\Lambda\in(\bm k, \bm n)+\mathcal{ER}_{\bm 0}(L)$ satisfying  $|(\bm k, \bm n)|\leq 100N^2$ and $|\bm n| >10L$ are $\sigma$-good. 
\item In the application, we typically choose  $L=N_1\geq 10^{-\frac{32}{\kappa_1^2}}\log^{\frac{4}{\kappa_1}}(\log \frac1\varepsilon)$. It then follows from $  N_1\sim N^{\rho^2}\geq (\varepsilon+\delta)^{-c_1\rho^2}$  that  a restriction  like $\delta \leq \log^{-1}\frac1\varepsilon$ is needed. 
\end{itemize}
\end{rem}

(2) The conclusion is just  that   in Remark~\ref{remn1}. 

This completes the proof of  Lemma \ref{lsclem2}.
\end{proof}
By combining Lemma \ref{lsclem1} and Lemma \ref{lsclem2},  we are ready to prove LDE at the scale $N$ (indeed all scales in $[N, N^2]$).  We first recall an important lemma. 

\begin{lem}[Matrix-valued Cartan's lemma,  cf. Proposition 14.1 in \cite{Bou05}]\label{mcl}
Let $T(\sigma)$ be a self-adjoint $N\times N$ matrix-valued function of a parameter $\sigma\in[-\beta, \beta]$  satisfying the following conditions:
\begin{itemize}
\item[(i)] $T(\sigma)$ is real analytic in $\sigma$ and has a holomorphic extension to
\begin{align*}
\mathbb{D}=\left\{z\in\mathbb{C}: \ |\Re z|\leq \beta,\ |\Im z|\leq \beta_1\right\}
\end{align*}
satisfying $\sup\limits_{z\in \mathbb{D}}\|T(z)\|\leq K_1,\  K_1\geq 1.$

\item[(ii)] For each $\sigma\in[-\beta, \beta]$, there is a subset $Y\subset [1,N]$ with $\#Y\leq M$ such that 
\begin{align*}
\|(R_{[1,N]\setminus Y}T(\sigma)R_{[1,N]\setminus Y})^{-1}\|\leq K_2, \ K_2\geq 1.
\end{align*}
\item[(iii)] Assume 
\begin{align*}
\mathrm{meas}\left(\{\sigma\in[-{\beta}, {\beta}]: \ \|T^{-1}(\sigma)\|\geq K_3\}\right)\leq 10^{-3}\beta_1(1+K_1)^{-1}(1+K_2)^{-1}.
\end{align*}
\end{itemize}
Let $0<\epsilon\leq (1+K_1+K_2)^{-10 M}.$ 
 Then we have
\begin{align}\label{mc5}
\mathrm{meas}\left(\left\{\sigma\in\left[-{\beta}/{2}, {\beta}/{2}\right]:\  \|T^{-1}(\sigma)\|\geq \frac1\epsilon\right\}\right)\leq Ce^{\frac{-c\log \frac1\epsilon}{M\log( M+K_1+K_2+K_3)}},
\end{align}
where $C, c>0$ are some absolute constants.
\end{lem}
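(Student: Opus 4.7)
The plan is to reduce the matrix-valued problem to a scalar holomorphic-function problem via the Schur complement, and then invoke the classical scalar Cartan estimate. First, I would use hypothesis (ii): for each $\sigma\in[-\beta,\beta]$ there is a subset $Y=Y(\sigma)\subset [1,N]$ with $|Y|\leq M$ such that the complementary block $D(\sigma)=R_{[1,N]\setminus Y}T(\sigma)R_{[1,N]\setminus Y}$ satisfies $\|D(\sigma)^{-1}\|\leq K_2$. Since $Y$ varies with $\sigma$, I would enumerate the at most $\binom{N}{M}\leq N^{M}\leq(1+K_1+K_2)^{10M}$ possible choices of $Y$ and later union-bound over them; under the hypothesis $\epsilon\leq(1+K_1+K_2)^{-10M}$, this combinatorial factor will be absorbed into the final estimate.

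With $Y$ fixed, writing the block decomposition $T=\bigl(\begin{smallmatrix} A & B \\ B^\ast & D\end{smallmatrix}\bigr)$ and applying the Schur complement formula produces an $|Y|\times|Y|$ self-adjoint matrix $S(\sigma)=A(\sigma)-B(\sigma)D(\sigma)^{-1}B(\sigma)^\ast$, and a bound
\begin{equation*}
\|T(\sigma)^{-1}\|\leq C(1+K_1)(1+K_2)^{2}\bigl(1+\|S(\sigma)^{-1}\|\bigr).
\end{equation*}
I would then extend $S$ holomorphically to a slightly shrunken strip $\mathbb{D}'\subset \mathbb{D}$: the invertibility of $D$ on the real axis propagates into $\mathbb{D}'$ via the resolvent identity using (i), at a small cost in $\beta_1$ and in the uniform bound. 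On $\mathbb{D}'$ we obtain $\sup_{z}\|S(z)\|\leq K_\ast:=C(1+K_1)(1+K_2)$, so the problem is reduced to controlling $\|S^{-1}\|$ for an $M\times M$ holomorphic matrix whose dimension $M$ is small compared with the original $N$.

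The next step is to pass to the scalar function $\varphi(z)=\det S(z)$, which is holomorphic on $\mathbb{D}'$ with $\sup_{z\in \mathbb{D}'}|\varphi(z)|\leq K_\ast^{M}$. For scalar Cartan I need a lower bound at some base point. This is provided by hypothesis (iii): the excluded set has measure strictly smaller than $|[-\beta,\beta]|$, so there exists $\sigma_0\in[-\beta,\beta]$ with $\|T^{-1}(\sigma_0)\|\leq K_3$, and Hadamard's inequality together with the Schur reduction yields
\begin{equation*}
|\varphi(\sigma_0)|\geq \bigl(C\,K_3(1+K_1)(1+K_2)\bigr)^{-M}.
\end{equation*}
Applying the classical Cartan estimate (if $\psi$ is holomorphic on the unit disk with $\psi(0)=1$ and $\sup|\psi|\leq A$, then $\mathrm{meas}\{|\psi|\leq \eta\}\leq C e^{-c\log(1/\eta)/\log A}$) to the normalized function $\psi=\varphi/\varphi(\sigma_0)$ on the disk of radius $\beta_1/2$ around $\sigma_0$, and using the bound $\|S^{-1}\|\leq |\varphi|^{-1}\|\mathrm{adj}\,S\|\leq |\varphi|^{-1}K_\ast^{M-1}$, converts this into the desired measure estimate for $\{\|S^{-1}\|\geq C'/\epsilon\}$, hence for $\{\|T^{-1}\|\geq 1/\epsilon\}$. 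Summing over the $\leq(1+K_1+K_2)^{10M}$ choices of $Y$ and then covering $[-\beta/2,\beta/2]$ by disks of radius $\sim\beta_1$ produces the claimed bound $Ce^{-c\log(1/\epsilon)/(M\log(M+K_1+K_2+K_3))}$.

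The main obstacle I expect is the $\sigma$-dependence of the "bad" set $Y(\sigma)$ together with the need to maintain holomorphic invertibility of the complementary block $D(z)$ on a full complex neighborhood of the real segment, while (ii) is only postulated on the real line. Resolving this requires choosing the working strip $\mathbb{D}'$ carefully and using (i) plus the resolvent identity to extend the real-axis bound $\|D(\sigma)^{-1}\|\leq K_2$ into the strip with only a bounded loss; the other technical annoyance is checking that the combinatorial loss $N^M$ from enumerating the possible $Y$'s and the analytic loss from Cartan combine to produce the claimed exponential denominator $M\log(M+K_1+K_2+K_3)$, which is exactly why the quantitative hypothesis $\epsilon\leq(1+K_1+K_2)^{-10M}$ is included.
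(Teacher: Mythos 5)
Your proposal captures the right skeleton: reduce to an $M\times M$ Schur complement, pass to $\det S$, and invoke scalar Cartan — this is indeed the structure of Bourgain's proof of Proposition~14.1. But there is a genuine gap in how you handle the $\sigma$-dependence of the subset $Y$. You propose to enumerate all $\binom{N}{M}$ choices of $Y$ and union-bound, claiming $\binom{N}{M}\leq N^M\leq(1+K_1+K_2)^{10M}$. The last inequality is unfounded: $N$ is not bounded in terms of $K_1,K_2,M$ by any hypothesis of the lemma (e.g.\ $T=I_N$ has $\|T\|=1$ for any $N$), so the combinatorial factor $N^M$ can be arbitrarily large relative to every other quantity, and the hypothesis $\epsilon\leq(1+K_1+K_2)^{-10M}$ constrains $\epsilon$, not $N$. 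This union bound therefore destroys the conclusion.

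The correct way to treat the varying $Y$ — and what Bourgain's argument actually does — is to localize in $\sigma$ \emph{before} fixing $Y$: cover $[-\beta/2,\beta/2]$ by intervals of length $r\sim\beta_1(1+K_1)^{-1}(1+K_2)^{-1}$; by the quantitative hypothesis (iii), the set where $\|T^{-1}\|\geq K_3$ has measure $\leq 10^{-3}\beta_1(1+K_1)^{-1}(1+K_2)^{-1}\ll r$, so each such interval contains a base point $\sigma_0$ with $\|T^{-1}(\sigma_0)\|\leq K_3$; fix $Y=Y(\sigma_0)$ and propagate the bound $\|D(\cdot)^{-1}\|\leq 2K_2$ from $\sigma_0$ to a disk of radius $\sim r$ via the Neumann/resolvent identity together with the Cauchy estimate on $T'$ coming from (i). On each such disk you then run your Schur-complement and scalar-Cartan argument with a \emph{fixed} $Y$, and sum over the $O(\beta/r)$ disks. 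You already invoke the resolvent-identity extension at the end of your write-up; replacing the global enumeration over $Y$ by this local choice is exactly what closes the gap. The remaining steps of your proposal (the upper bound on $\det S$, the lower bound $|\det S(\sigma_0)|\geq (CK_3(1+K_1)(1+K_2))^{-M}$ from self-adjointness of the Schur complement, and the conversion of the Cartan estimate on $\det S$ into a bound on $\|T^{-1}\|$ via the adjugate) are sound.
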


Now, we can prove Lemma \ref{lsclem}.  
\begin{proof}[Proof of Lemma \ref{lsclem}]
Recalling 
\begin{align*}
N_2=N_1^{\frac 2\rho},\  N=N_1^{\frac 1{\rho^2}}, 
\end{align*}
we assume that for $L=N_1, N_2$ and  $\Lambda\in (\bm 0, \bm n)+\mathcal{ER}_{\bm 0}(L)$ with $|\bm n|\leq 10L$, $G_{\Lambda}(\sigma)$ satisfies the LDE.

Without loss of generality,  we  only  establish LDE for  the  Green's function on $\Lambda_{{\rm pm}, N}=\{(\bm x, \xi)\in\Z_{{\rm pm}, *}^{b+d}:\ |\bm x|\leq N \}= (\Lambda_{N}\times\{+, -\})\cap \Z_{{\rm pm}, *}^{b+d}$,
the other cases  (i.e., $\Lambda\in (\bm 0, \bm n)+{\mathcal{ER}_{\bm 0}(N_3)}$ for $|\bm n|\leq 10N_3, N_3\in[N, N^2]$) can be dealt with similarly. We define three intermediate scales between $[N_2, N]$: 
\begin{align}\label{L123}
L_1=N^{\frac{1}{100(b+d)^2}},\ L_2=N^{\frac{1}{50(b+d)^2}},\ L_3= N^{\frac{1}{10(b+d)}}. 
\end {align}
Note that from  $0<\rho\leq c_2(b,d)\leq {\frac{1}{10^4(b+d)^2}}$ (cf. Lemma \ref{lsclem1}), we have 
$$\log N_1\ll\log N_2\ll \log L_1= \frac{1}{100(b+d)^2}\log N.$$

We first cover  the $\Lambda_{{\rm pm}, N}$  with two overlapped subregions, $\Lambda_{{\rm pm}, N}=\mathcal{R}_1\cup \mathcal{R}_2$ satisfying 
\begin{align*}
\mathcal{R}_1=\{(\bm k, \bm n, \xi )\in\Lambda_{{\rm pm}, N}:\ |\bm n|\leq L_2\},\\
\mathcal{R}_2=\{(\bm k, \bm n, \xi )\in\Lambda_{{\rm pm}, N}:\ |\bm n|\geq L_1\}.
\end{align*}
We will apply matrix-valued Cartan's lemma (cf. Lemma \ref{mcl}) and Bourgain's geometric lemma \cite{Bou07} (cf. Lemma \ref{Boulem}) to obtain the sub-exponential  growth estimates on Green's functions.   Then we apply the  coupling lemmas  of  \cite{BGS02, JLS20, Liu22} (cf. e.g., Lemmas \ref{Liulem1}, \ref{JLSlem1}, \ref{JLSlem2})  based on the iteration of  resolvent identities    to prove the off-diagonal exponential decay of  $G_{\Lambda_{{\rm pm}, N}}(\sigma)$   if desired  $\sigma$ are removed.  

 Before we enter the details, let us first explain why we need to divide $\Lambda_{{\rm pm}, N}$ into the above two regions. This is quite different from that in \cite{SW23}. Based on Lemma~\ref{lsclem2}, we can impose the so called weak second Melnikov's  condition on $\bm \omega$ so that  all  $\sigma$-bad (cf. Definition \ref{sgmbd})   $\Lambda\in (\bm k, \bm n)+\mathcal{ER}_{\bm 0}(N_1)\subset \Lambda_{{\rm pm}, N}$ can stay only   in  a {\bf narrow strip region} (cf. Figure \ref{graph}), namely, there is some $\bm k_*\in [-N, N]^b$ so that 
$$|\bm k-\bm k_*|\leq 10N_1\leq N^{0+}.$$ 
This gives the desired control on $\sigma$-bad $\Lambda$ in the $\bm k$-direction. To establish the sublinear bound, it remains to handle the $\bm n$-direction. In \cite{SW23}, the $\bm n$-direction estimates can be obtained directly using  the fact that $\theta\in\R$, Diophantine $\bm\omega$ and LDT for the corresponding quasi-periodic Schr\"odinger operators. Such arguments inevitably fail  for $\bm\theta\in \R^d$: There is simply no one-dimensional interval decomposition  of $d$-dimensional  semi-algebraic sets, and more essentially,  the Diophantine  property of $\bm \omega$ cannot ensure the sublinear estimate in the $\bm n$-direction.  This motivates us to  restrict  $|\bm n|\leq L_2$ (i.e., the region $\mathcal{R}_1$). We can use the Cartan's lemma and resolvent identities to deal with  $\mathcal{R}_1$. For $\mathcal{R}_2$, we  take advantage of Bourgain's geometric lemma and the short-range property of $H$  using Lemma \ref{lsclem2}. Finally, it suffices to cover  $\Lambda_{{\rm pm}, N}$ with  {\bf four types} of   regions of sizes  $N_1, L_1^{\frac{9}{10}}, L_1, L_3$ (will be specified below) contained in $\mathcal{R}_1$, $\mathcal{R}_2$,  and apply the resolvent identities again.  See Figure \ref{graph}.  

We first deal with $\mathcal R_2$, and then $\mathcal R_1$.  
\ 
\\ 

\noindent{\bf \Large Analysis in $\mathcal{R}_2$}\\

In this region, we can apply directly  the conclusion  of Lemma \ref{lsclem2}.  In this case, we make no restriction on $\bm\omega.$  We let 
\begin{align*}
L_0 =L_1^{\frac {9}{10}}= N^{\frac{9}{1000(b+d)^2}}.
\end{align*}
For any $\Lambda\subset \mathcal R_2$ satisfying $\Lambda=(\bm k, \bm n)+\mathcal{ER}(L_0)$, we  have $|\bm n|>L_1>10L_0$.  Then applying  Lemma \ref{lsclem2} with $L=L_0$ and  since $0<\rho<c_2(b,d)\kappa_1\leq \frac{\kappa_1}{10^4(b+d)^2},$
we  get a set   $\tilde\Sigma_{N,  1}\subset \R$  satisfying 
$$ \mathrm{meas}(\tilde\Sigma_{N,1})\leq e^{-\frac15 L_0^{\frac{3\kappa_1}{4}}}\leq e^{-\frac15N^{\frac{27\kappa_1}{4000(b+d)^2}}}<e^{-N^{3\rho}},$$
so that, if $\sigma \notin \tilde\Sigma_{N,1}$,  all $\Lambda \in \mathcal{ER}(L_0)$ contained in  $ \mathcal R_2$  is $\sigma$-good (cf. Definition \ref{sgmbd}).  

We remark that in this case, we do not use Cartan's lemma. 
\ 
\\ 
{\noindent{\bf \Large Analysis in $\mathcal{R}_1$}} 
\\ 

In this region, we need to make further restriction on $\bm\omega$ called the weak second Melnikov's condition. More precisely, we have 
{\begin{lem}\label{mellem}
There is a sequence of real numbers $\{\lambda_{\ell}=\lambda_\ell(\bm \alpha, \bm \theta)\}_{1\leq \ell \leq N^{4(b+d)}}$ depending only on $V, \bm \alpha, \bm \theta$  so that the following holds true.  If $\bm \omega\in \widetilde\Omega_{N}$ with
\begin{align}
\nonumber&\ \ \ \widetilde\Omega_N\\
\label{tilomg}&:=\bigcap_{1\leq \ell,\ell'\leq N^{4(b+d)}, \xi=\pm1, \xi'=\pm1, 0<|\bm k|\leq 2N^2}\left\{\bm\omega\in\Omega:\ |\bm k\cdot\bm \omega+\xi {\lambda_\ell}-\xi'{\lambda_{\ell'}}|>10 e^{-\frac 14 N^{\frac{3\kappa_1\rho^2}{4}}}\right\},
\end{align}
then for any $\sigma\in\R,$  there is some $\bm k_*\in[-N^2, N^2]^b$  so that, all $\sigma$-bad $\Lambda\in(\bm k, \bm n)+\mathcal{ER}_{\bm 0}(N_1)$ satisfying $|\bm k|\leq N^2$ and $10N_1\leq |\bm n|\leq 100N^2$ must obey 
\begin{align}\label{krestr}
\Pi_b \Lambda\subset [\bm k_*-10N_1, \bm k_*+10N_1]^b.
\end{align}
In particular, we have 
\begin{align}\label{mtilomg}
{\rm meas}(\Omega\setminus\widetilde\Omega_N) \leq e^{-\frac 15 N^{\frac{3\kappa_1\rho^2}{4}}}. 
\end{align}

\end{lem} }

\begin{rem}
Indeed, in this lemma,  $\lambda_\ell(\bm \alpha, \bm \theta)$ (${1\leq \ell \leq N^{4(b+d)}})$ are  eigenvalues of the operator  $\mathcal H_{Q}(\bm \alpha, \bm\theta)$ for  all $Q\subset[-100N^2, 100N^2]^d$ satisfying   $Q\in\mathcal{ER}_{\Z^d}(\tilde N_1)$ for some $\tilde N_1\in [\frac14N_1^{\kappa_1}, N_1^{\kappa_2}]$. 
\end{rem}
\begin{proof}[Proof of Lemma \ref{mellem}]
The proof follows from Lemma \ref{lsclem2}. 
Applying  (1) of Lemma \ref{lsclem2}   (cf. also Remark \ref{remn1})  with $L=N_1$ and taking into account of all $\Lambda\in (\bm k, \bm n)+\mathcal{ER}_{\bm 0}(N_1)$ with $|(\bm k,\bm n)|\leq 100N^2$ and $|\bm n|>10N_1$ yield   a sequence  $\{{\lambda_\ell}\}_{\leq \ell\leq N^{4(b+d)}}\subset \R$  and each $\lambda_\ell$ depending only on $\bm\alpha, \bm \theta$ (but not on $\sigma,\bm\omega$), so that if $\Lambda=(\bm k, \bm n)+\mathcal{ER}_{\bm 0}(N_1)$ satisfies $|(\bm k, \bm n)|\leq 100N^2$,  $|\bm n|>10N_1$ and 
$$\min_{\bm k\in\Pi_b\Lambda, \xi=\pm1, 1\leq \ell\leq N^{4(b+d)}}|\sigma+\bm k\cdot\bm \omega+\xi {\lambda_\ell}|>e^{-\frac 14 N_1^{\frac{3\kappa_1}{4}}}= e^{-\frac 14 N^{\frac{3\kappa_1\rho^2}{4}}},$$
then $\Lambda$ is $\sigma$-good. So recalling \eqref{tilomg}, 
we have $${\rm meas}(\Omega\setminus \widetilde\Omega_N)\leq N^{C(b,d)}e^{-\frac 14 N^{\frac{3\kappa_1\rho^2}{4}}}\leq e^{-\frac 15 N^{\frac{3\kappa_1\rho^2}{4}}},$$
which implies \eqref{mtilomg}. 

In the following,   we  assume $\bm\omega\in \widetilde\Omega_N.$ 

Suppose now $\Lambda'$ with $\Lambda'\in(\bm k', \bm n')+\mathcal{ER}_{\bm 0}(N_1)$ satisfying  $|\bm k'|\leq N^2$ and  $10N_1<|\bm n'|\leq 100N^2$  is $\sigma$-bad.  Then there are some $\bm k_*\in \Pi_b\Lambda'\subset [-N^2, N^2]^b$, $\xi_*=\pm1$  and some $\ell_*\in[1, N^{4(b+d)}]$ so that 
\begin{align*}
|\sigma+\bm k_*\cdot\bm\omega+\xi_*{\lambda_{\ell_*}}|\leq e^{-\frac14 N_1^{\frac{3\kappa_1}{4}}}. 
\end{align*}
Now  let  $\Lambda''$ (satisfying  $\Lambda''\in(\bm k'', \bm n'')+\mathcal{ER}_{\bm 0}(N_1)$ with  $|\bm k''|\leq N^2$ and  $10N_1<|\bm n''|\leq 100N^2$)   be another  $\sigma$-bad region. 
  From $\bm\omega\in \widetilde\Omega_N,$ we must have  
\begin{align*}
\bm k_*\in \Pi_b\Lambda''.
\end{align*}
Otherwise, there must be some $\bm k'''\in \Pi_b\Lambda'', \xi'\in\{\pm1\}, \ell'$ so that,  $\bm k_*\neq \bm k'''$ and $$|\sigma+\bm k'''\cdot\bm\omega+\xi'{\lambda_{\ell'}}|\leq e^{-\frac14 N_1^{\frac{3\kappa_1}{4}}},$$ which contradicts $\bm\omega\in\widetilde \Omega_N$. 
We have established   that all $\sigma$-bad elementary $N_1$-regions $\Lambda$  with centers $(\bm k, \bm n)$ satisfying $|\bm k|\leq N^2$ and $10N_1<|\bm n|\leq 100N^2$  must satisfy  \eqref{krestr}. 

\end{proof}
In the following, we always assume $\bm\omega\in\widetilde \Omega_N$.  
Under this condition, we  first  give a geometric construction of the  a region $\Lambda^\dagger$,  on which the worst  resonances appear.  It turns out that this region   has the {\bf annulus structure} similar to   \cite{Bou07, JLS20}. 
\begin{lem}\label{ty2lem}
Let $\bm\omega\in\widetilde\Omega_N$ and $\sigma\in\R$. Let $\bm k_*\in[-N, N]^b$ (if it exists)  be defined by  Lemma \ref{mellem}. Then there are some $\Lambda^{\dagger}_1\subset\Lambda^{\dagger}\subset \mathcal R_1$ satisfying ${\rm diam} \ \Lambda_1^\dagger\leq 4L_1,  \bm k_*\in\Lambda_1^\dagger$ and 
\begin{align*} 
\Lambda^{\dagger}&=((\bm k_*+[-L_3+O(L_1), L_3+O(L_1)]^b)\times [-L_2, L_2]^d)\cap \mathcal R_1,
\end{align*}
so that the following holds true (recalling $L_1\ll L_2\ll L_3$ given by \eqref{L123}).    For any $\bm x\in \Lambda^{\dagger}\setminus \Lambda^{\dagger}_1$, there are some $L\in \{N_1, L_1, L_0\}$  and some $\mathcal{ER}(L)\ni W\subset\Lambda^{\dagger}\setminus \Lambda^{\dagger}_1$  so that  $$\bm x\in W,\  {\rm dist}(\bm x, \Lambda^{\dagger}\setminus \Lambda^{\dagger}_1\setminus W)\geq \frac{L}{2}.$$

\end{lem}
\begin{proof}

The proof is similar to that of Theorem 3.7 of \cite{JLS20} (cf. pages 471--472).  We initially set   $Q^\dagger=(\bm k_*+[-L_3, L_3]^b)\times [-L_2, L_2]^d$ and $Q_1=(\bm k_*+[-L_1, L_1]^b)\times [-L_1, L_1]^d$.  
Then we  slightly  change   $Q^\dagger, Q_1^\dagger$ to $\Lambda^\dagger, \Lambda_1^\dagger$ with  the diameter  modulations  of length  $2L_1$  if the boundary   of $Q^\dagger, Q_1^\dagger$   approaches to the boundary of $\mathcal R_1$ with  the distance of $0<L\leq 2L_1$. 


Next, let  $\bm x=(\bm k, \bm n)\in \Lambda^{\dagger}\setminus\Lambda_1^{\dagger}$.   For $|\bm n|\leq 10N_1$, we use $L_1$  size regions  to cover $\bm x$; for $10N_1<|\bm n|\leq L_1$, we use the  $N_1$ size regions  to cover $\bm x$; for $L_1< |\bm n|\leq L_2$, we use the $L_0$ size regions to cover $\bm x.$
\end{proof}

Next, we  show that  all $Q\in (\bm k, \bm 0)+\mathcal{ER}_{\bm 0}(L_1)$   satisfying $|\bm k|\leq N$ and $|\bm k-\bm k_*|>10N_1$ are $\sigma$-good for ``most'' $\sigma$, by using Cartan's lemma and coupling lemma of  \cite{Liu22}. More precisely, we have 
\begin{lem}\label{ty3lem}
Let $\bm\omega\in\widetilde \Omega_N\cap\Omega_{N_2}$. There is some $\tilde\Sigma_{N,2}\subset \R$ satisfying $${\rm meas}(\tilde\Sigma_{N,2})\leq e^{-N^{3\rho}},$$ so that for $\sigma\notin\tilde\Sigma_{N, 2}$,  the following holds true. If $Q\subset  \mathcal R_1$ satisfies 
$$Q\in \mathcal Q:=\bigcup_{|\bm k-\bm k_*|>10N_1, |\bm k|\leq N}\left\{(\bm k,\bm 0)+Q:\ Q\in\mathcal{ER}_{\bm 0}(L_1)\right\},$$
then $Q$ is $\sigma$-good with $\gamma_{L_1}=\gamma_{N_1}-N_1^{-\vartheta}$, where $\vartheta>0$ is an absolute constant defined in Lemma \ref{Liulem1}. 
\end{lem}
\begin{proof}[Proof of Lemma \ref{ty3lem}]
We choose  any $Q\in\mathcal{ER}(L_1)$ with $Q\in\mathcal Q$.   Then we take any $Q'\in \mathcal R_{L_*}^{{\sqrt{L_1}}}$ satisfying  $Q'\subset Q$  and $\sqrt{{L_1}}\leq L_*\leq 2L_1$. From Lemma \ref{lsclem1} and  $\bm\omega\in\widetilde\Omega_N$, we have for any $\sigma\in\R,$
$$\#\{\Lambda\in \mathcal{ER}(N_1):\ \Lambda\  {\rm is}\ \sigma{-\rm bad}, \  \Lambda\subset Q'\}\leq N^{\frac{1}{10^3(b+d)^2}}.$$
Denote by $Y_1(\sigma):=\bigcup\limits_{1\leq \ell \leq k} Q_{\bm x_\ell}, k\leq N^{\frac{1}{10^3(b+d)^2}}$ all those $\sigma$-bad $N_1$ size elementary regions. We claim that, there is some $Y=Y(\sigma)\subset Q'$ satisfying $\#Y\leq C(b,d)N_1^{b+d} N^{\frac{1}{10^3(b+d)^2}}$,  so that  if  $\bm x\in Q'\setminus Y$, then there is some $\sigma$-good $W\in\mathcal{ER}(N_1), W\subset Q'\setminus Y$ satisfying
$$\bm x\in W, \ {\rm dist}(\bm x, Q'\setminus Y\setminus W)\geq \frac{N_1}{2}.$$
Indeed, we pave $Q'$ with $\Lambda_\alpha$  for  $\Lambda_\alpha=Q'\cap \Lambda_{10N_1}(\bm x), \bm x\in 10N_1\Z^{b+d}$. Then we can enlarge $\Lambda_\alpha$  to $\tilde \Lambda_\alpha\subset Q'$ so that $\tilde\Lambda_\alpha$ has width  at least $N_1$ and diameter at most $C(b,d)N_1$, and $\bigcup\limits_{\alpha }\tilde\Lambda_\alpha$ remains a tiling of $Q'$. It suffices to take $$Y(\sigma)=\bigcup\limits_{\alpha:\ \tilde \Lambda_\alpha\cap Y_1(\sigma)\neq\emptyset}\tilde \Lambda_\alpha.$$  Since each $Q_{\bm x_\ell}$ can only intersect with  at most $C(b,d)$ many $\tilde\Lambda_\alpha$, we have $$\#Y(\sigma)\leq C(b,d)N_1^{b+d}N^{\frac{1}{10^3(b+d)^2}}.$$ 

We are ready to apply Cartan's lemma (cf. Lemma \ref{mcl}). From Lemma \ref{lwlem},  any $\sigma$-good $N_1$ region   remains essentially  $\sigma'$-good  if $|\sigma'-\sigma|\leq e^{-10\gamma N_1}$. So on an interval $I$ of length at most  $e^{-10\gamma N_1}$, we can  fix   $Y$ to associate  with  the midpoint of $I$.  
In addition, we can assume $|\sigma|\leq C(b,d)N$ since if $|\sigma|>CN$,  then $\|D_{Q'}^{-1}(\sigma)\|\leq 1$. In this case,  $Q'$ becomes  $\sigma$-good via the Neumann series argument.   
We  apply  Lemma~\ref{mcl}
with  \begin{align*}
T(\sigma)={{H}}_{Q'}(\sigma), \ \beta=\beta_1=e^{- 10\gamma N_1}.
\end{align*}
It remains to verify the assumptions of Lemma \ref{mcl}.
First, one has by the definition of the operator $H(\sigma),$
\begin{align*}
K_1=CN, C=C(V,b,d,C_2)>0. 
\end{align*}
We let $Y$ be defined in the  above claim. 
Then   using sublinear bound conclusion and  Lemma \ref{JLSlem1}, we have 
\begin{align*}\label{mb}
&M=\#Y\leq   C(b,d)N^{\rho^2(b+d)}N^{\frac{1}{10^3(b+d)^2}}<N^{\frac{1}{999(b+d)^2}}\leq  L_*^{ \frac14}\\
& ({\rm since}\  \rho\leq  c_2\leq \frac{1}{10^4(b+d)^2},\  L_*\geq N^{\frac{1}{200(b+d)^2}}), \\ 
&K_2=e^{2{N_1}^{\frac{3}{4}}}.
\end{align*}
To apply the matrix-valued Cartan's  lemma, we need the scale $$N_2=N_1^{\frac2\rho}.$$  Recall that the LDE hold  at  scale $N_2$. 
Applying the resolvent identity  Lemma \ref{JLSlem1} and using  (2) of Lemma \ref{lsclem2} with $L=N_2$ yield 
\begin{align*}
 \|T^{-1}(\sigma)\|  \leq e^{2{N_2}^{\frac{3}{4}}}=K_3 
\end{align*}
for $\sigma$ away from a set  of measure at most (since $0<\rho<c_2\kappa_1$)
$$CN^{C(b,d)}e^{-{N_2^{\rho}}}+e^{-\frac15{N_2^{\frac{3\kappa_1}{4}}}}\leq e^{-{N_2^{\rho}}/{2}}.$$
It follows from  $N_1=N_2^{\frac\rho2}$
that
\begin{align*}
10^{-3}\beta_1(1+K_1)^{-1}(1+K_2)^{-1}&\geq e^{-10\gamma N_1}N^{-2} e^{-2N_1^{\frac34}}\\
&> e^{-{N_2^{\rho}}/{2}}. 
\end{align*}
This verifies (iii) of Lemma  \ref{mcl}.
If $\epsilon=e^{- {L_*}^{\frac{1}{2}}}$,  then one has   $\epsilon<(1+K_1+K_2)^{-10M}.$  Cover $[-CN,  CN]$ with disjoint intervals of length $e^{- 10\gamma N_1}$. 
Define 
\begin{align*}
\tilde \Sigma_{Q'}=\{\sigma\in\R:\ \|G_{Q'}(\sigma)\|\geq e^{L_*^{\frac{3}{4}}}\}.
\end{align*}
Then by \eqref{mc5} of Lemma \ref{mcl}, one obtains 
\begin{equation*}
\mathrm{meas}(\tilde \Sigma_{Q'})\leq CN^Ce^{ 10\gamma N_1}e^{-\frac{c{{L_*}^{\frac{3}{4}}}}{CN_1N_2  L_*^{\frac{1}{4}}\log   N}}\leq e^{-{L_*}^{\frac13} }\ll  e^{-N^{3\rho}}.
\end{equation*}
Note that the number of $Q'\subset Q$ with   $Q'\in \mathcal{R}_{L_*}^{\sqrt{L_1}}, \sqrt{{L_1}}\leq L_*\leq 2L_1$ is at most $L^{C(b,d)}$ and the number of $Q\in\mathcal Q$ is at most $N^{(b,d)}$.  We take the  union on all these $Q'\subset Q, Q\in \mathcal Q$  leading to the desired $\tilde \Sigma_{N,2}\subset \R$ with (since $0<\rho<c_2\leq \frac{1}{10^4(b+d)^2}$)
\begin{align*}
{\rm meas} (\tilde \Sigma_{N,2})\leq N^{C(b,d)}e^{-L_1^{\frac{1}{6}}}<e^{-N^{\frac{1}{600(b+d)^2}}}\ll e^{-N^{3\rho}}. 
\end{align*}

To finish the proof, it suffices to prove the off-diagonal exponential  decay of $G_Q(\sigma)$ for $Q\in\mathcal Q$  and $\sigma\notin \tilde \Sigma_{N,2}$. This follows  from  using the coupling lemma of  \cite{Liu22}.  Indeed, fix $Q\in\mathcal Q$ (then $Q\in \mathcal{ER}(L_1)$). Let $\mathcal F$ be any family of pairwise disjoint $\sqrt{L_1}$ size elementary regions in $Q$. Then using Lemma \ref{lsclem1} and $\bm\omega\in\widetilde\Omega_N,$ we get 
$$ \# \{\Lambda\in\mathcal F:\  \Lambda \ {\rm is }\ \sigma{-\rm bad}\}\leq N^{\frac{1}{10^3(b+d)^2}}\leq L_1^{\frac19}.$$
So applying Lemma \ref{Liulem1}   leads   to the off-diagonal exponential decay of $G_Q(\sigma)$ with the decay rate 
$$\gamma_{L_1}=\gamma_{N_1}-N_1^{-\vartheta}.$$




\end{proof}

We then deal with  $G_{\Lambda^{\dagger}}(\sigma)$  with $\Lambda^{\dagger}$ given by Lemma \ref{ty2lem}.  We have 
\begin{lem}\label{ty2est}
Let $\bm\omega\in\widetilde\Omega_N\cap\Omega_{N_2}$. Then there is some $\widetilde\Sigma_{N, 3}\subset\R$ satisfying 
$${\rm meas}(\tilde\Sigma_{N,3})<e^{-N^{3\rho}} $$
so that,  if $\sigma\notin \bigcup\limits_{\ell=1, 2,3}\tilde\Sigma_{N,\ell},$ then $G_{\Lambda^{\dagger}}(\sigma)$ is $\sigma$-good in the sense of Definition \ref{sgmbd} (replacing $\Lambda$ with $\Lambda^\dagger$) with $\gamma_{L_2}=\gamma_{L_1}-N_1^{-\frac{1}{10}}$. 
\end{lem}
\begin{proof}[Proof of Lemma \ref{ty2est}]
The proof is similar to that of Lemma \ref{ty3lem},  but is more in the spirit of \cite{JLS20} when dealing with off-diagonal exponential decay estimate, as  $\Lambda ^{\dagger}$ has the  special annulus structure.

To get the sub-exponential growth estimate of $\|G_{\Lambda^\dagger}(\sigma)\|$, the key point is  to apply   Cartan's lemma only involving     $N_1, N_2$ sizes regions similar to the proof of Lemma \ref{ty3lem}.  This means  precisely that we do not  choose  $Y=\Lambda_1^\dagger$  when applying  Lemma \ref{mcl}. 

From  $\bm\omega\in\widetilde\Omega_N$, the definition  of $\Lambda^\dagger$ (cf. Lemma \ref{ty2lem}) and  Lemma \ref{lsclem1}, we have for any $\sigma\in\R,$
\begin{align*}
&\ \ \ \#\{\Lambda\in \mathcal{ER}(N_1):\ \Lambda\  {\rm is}\ \sigma{-\rm bad}, \  \Lambda\subset \Lambda^\dagger\}\\
&\leq N^{\frac{1}{10^3(b+d)^2}}+ C(b,d)N_1^{b}L_2^{d}\\
&\leq N^{\frac{1}{10^3(b+d)^2}}+C(b,d)N^{\rho^2b+\frac{1}{50(b+d)}}\\
&\leq 2N^{\rho^2b+\frac{1}{50(b+d)}}. 
\end{align*}
Similar to the proof of Lemma \ref{ty3lem}, we can find some $Y=Y(\sigma)\subset \Lambda^\dagger$ satisfying $\#Y\leq C(b,d)N^{2\rho^2(b+d)} N^{\frac{1}{50(b+d)}}<N^{\frac{1}{49(b+d)}}$ so that,   if  $\bm x\in \Lambda^\dagger\setminus Y$, then there is some $\sigma$-good $W\in\mathcal{ER}(N_1), W\subset \Lambda^\dagger\setminus Y$ satisfying
$$\bm x\in W,\  {\rm dist}(\bm x, \Lambda^\dagger\setminus Y\setminus W)\geq \frac{N_1}{2}.$$ 

Let $\bm\omega\in\widetilde\Omega_N$. Again, both the site $\bm k_*\in[-N, N]^b$ (given in Lemma \ref{mellem})  and  $Y(\sigma)$  depend  on $\sigma$. From Lemma \ref{lwlem}, any $\sigma$-good $N_1$ size region  remains essentially  $\sigma'$-good  if $|\sigma'-\sigma|\leq e^{-10\gamma N_1}$.  So on an interval $I$ of length at most  $e^{-10\gamma N_1}$, we can choose $Y,  \bm k_*$ to  associate  with the midpoint of $I$.   Also, we can assume $|\sigma|\leq C(b,d)N$ since if $|\sigma|>CN$,  then $\|D_{\Lambda^\dagger}^{-1}(\sigma)\|\leq 1$, and  $\Lambda^\dagger$ becomes  $\sigma$-good via  the Neumann series argument.

We  now apply  Lemma \ref{mcl}
with  \begin{align*}
T(\sigma)={{H}}_{\Lambda^\dagger}(\sigma), \ \beta=\beta_1=e^{- 10\gamma N_1}.
\end{align*}
We have $
K_1=C N.
$ 
Using  the sublinear bound conclusion  and   the resolvent identity Lemma \ref{JLSlem1}, we have 
\begin{align*}\label{mb}
M=\#Y&\leq   N^{\frac{1}{49(b+d)}}< L_3^{\frac14},\ K_2=e^{2{N_1}^{\frac{3}{4}}}. 
\end{align*}
We again use the scale $N_2=N_1^{\frac2\rho}$, at which the LDE hold. 
Similar to  the proof of Lemma \ref{ty3lem}, we can verify all assumptions of Lemma \ref{mcl}. 
Define 
\begin{align*}
\tilde \Sigma_{N,3}=\{\sigma\in\R:\ \|G_{\Lambda^\dagger}(\sigma)\|\geq e^{L_3^{\frac{3}{4}}}\}.
\end{align*}
Then by \eqref{mc5} of Lemma \ref{mcl}, one obtains 
\begin{equation*}
\mathrm{meas}(\tilde \Sigma_{N,3})\leq CN^Ce^{ 10\gamma N_1}e^{-\frac{c{{L_3}^{\frac{3}{4}}}}{CN_1N_2  L_3^{\frac{1}{4}}\log   N}}\leq e^{-{L_3}^{\frac13} }\ll  e^{-N^{3\rho}}.
\end{equation*}

Finally, it suffices to prove the  off-diagonal exponential decay of $G_{\Lambda^\dagger}(\sigma).$   This follows  directly from the resolvent identity of \cite{JLS20}, i.e.,   Lemma \ref{JLSlem2}.  In fact,   by the analysis in $\mathcal R_2$, Lemma \ref{ty3lem} and Lemma \ref{ty2lem}, all elementary regions contained in $\Lambda^\dagger\setminus\Lambda_1^\dagger$ with sizes $N_1, L_0, L_1$ are $\sigma$-good. Then applying Lemma \ref{JLSlem2} with $\Lambda=\Lambda^\dagger, \Lambda_1=\Lambda_1^\dagger$ (since ${\rm diam}\ \Lambda^\dagger\sim L_3>100^{2(b+d)} L_1^{2(b+d)}\geq  ({\rm diam}\ \Lambda_1^\dagger)^{2(b+d)}$)  leads to the desired off-diagonal exponential decay estimate of $G_{\Lambda^\dagger}(\sigma)$ with 
$$\gamma_{L_2}=\gamma_{L_1}-N_1^{-\frac{1}{10}}=\gamma_{N_1}-N_1^{-\vartheta}-N_1^{\frac {1}{10}}.$$



\end{proof}

\ \\

\noindent{\large \bf Application of the resolvent identity again} 
\ \\ 
\begin{figure}[htbp]
		\centering
		\vspace*{1cm}
		\begin{tikzpicture}[scale=1.05, every node/.style={transform shape}]
			
			\tikzset{bad block/.style={pattern=north west lines, pattern color=red, draw=red, very thick}}
			\tikzset{yellow block/.style={draw=orange, very thick, fill=none}}
			\tikzset{hollow red/.style={draw=red, very thick, fill=none}}
			\tikzset{hollow blue/.style={draw=blue, very thick, fill=none}}
			\tikzset{blue block/.style={pattern=north west lines, pattern color=blue, draw=blue, very thick}}
			\tikzset{green block/.style={draw=green!50!black, very thick, fill=none}}
			
			\draw[very thick, black] (-6, -7) rectangle (6, 8);
			
			\draw[-, thick, black] (-0.2, -7) -- (-0.2, 8);
			\draw[-, thick, black] (0.2, -7) -- (0.2, 8);
			\draw[-, thick, black] (-6, -0.2) -- (6, -0.2);
			\draw[-, thick, black] (-6, 0.2) -- (6, 0.2);
			\draw[decorate, decoration={brace, amplitude=6pt, mirror}, thick, black] 
			(6.1, -0.2) -- (6.1, 0.2) 
			node[midway, right=4pt] {$10N_1$};
			
			\draw[dashed, very thick, orange] (-1, -7) -- (-1, 8);
			\draw[dashed, very thick, orange] (1, -7) -- (1, 8);
			
			\draw[decorate, decoration={brace, amplitude=6pt, mirror}, thick, green!60!black] 
			(-1, -7.1) -- (1, -7.1) node[midway, below=8pt] {$L_1 \sim N^{\frac{1}{100(b+d)^2}}$};

			\draw[dashed, thick, green!60!black] (-3, -7) -- (-3, 8);
			\draw[dashed, thick, green!60!black] (3, -7) -- (3, 8);
			\draw[decorate, decoration={brace, amplitude=6pt}, thick, blue, overlay] 
			(-3, 8.1) -- (3, 8.1)
			node[midway, above=8pt, overlay] {$L_2 \sim N^{\frac{1}{50(b+d)^2}}$};
			
			\fill[blue block] (-1, -1) rectangle (1, 1);
			
			\draw[hollow blue] (-3, -4) rectangle (3, 4);

			\draw[dashed, thick, blue] (-6, -4) -- (-3, -4);
			\draw[dashed, thick, blue] (-6, 4) -- (-3, 4);
			\draw[decorate, decoration={brace, amplitude=6pt}, thick, blue, overlay] 
			(-6.1, -4) -- (-6.1, 4) 
			node[midway, left=2pt, overlay] {$L_3 \sim N^{\frac{1}{10(b+d)}}$};
			
			\draw[green block] (-1, 1) rectangle (1, 3);
			\draw[green block] (-1, 4.5) rectangle (1, 6.5);
			\draw[green block] (-1, -3.5) rectangle (1, -1.5);
			\draw[green block] (-1, -4.7) rectangle (1, -6.7);
			
			\fill[bad block] (-0.1, 7) rectangle (0.1, 7.2);
			\draw[hollow red] (0.45, 7) rectangle (0.65, 7.2);
			\fill[bad block] (-0.1, 5) rectangle (0.1, 5.2);
			\fill[bad block] (-0.1, 2) rectangle (0.1, 2.2);
			\fill[bad block] (-0.1, -0.1) rectangle (0.1, 0.1);
			\fill[bad block] (-0.1, -2.4) rectangle (0.1, -2.2);
			\fill[bad block] (-0.1, -4.5) rectangle (0.1, -4.3);
			\fill[bad block] (-0.1, -6.5) rectangle (0.1, -6.3);
			\fill[bad block] (0.2, -0.1) rectangle (0.4, 0.1);
			\fill[bad block] (0.5, -0.1) rectangle (0.7, 0.1);
			\fill[bad block] (0.2, -0.1) rectangle (0.4, 0.1);
			\fill[bad block] (0.75, -0.1) rectangle (0.95, 0.1);
			\fill[bad block] (-0.4, -0.1) rectangle (-0.2, 0.1);
			\fill[bad block] (-0.7, -0.1) rectangle (-0.5, 0.1);
			\fill[bad block] (-0.95, -0.1) rectangle (-0.75, 0.1);
			
			\fill[red] (-0.68, 1) circle (1.5pt);
			\draw[hollow red] (-0.78, 1) rectangle (-0.58, 1.2);
			\fill[red] (0, 1) circle (1.5pt);
			\fill[red] (1, 1) circle (1.5pt);
			\draw[hollow red] (0.9, 1.1) -- (1.1, 1.1) -- (1.1, 0.9) -- (1.0, 0.9) -- (1.0, 1.0) -- (0.9, 1.0) -- cycle;

			
			\draw[yellow block] (-5, -0.7) rectangle (-3.6, 0.7);
			\fill[bad block] (-5.5, -0.1) rectangle (-5.3, 0.1);
			\fill[bad block] (-4.9, -0.1) rectangle (-4.7, 0.1);
			\fill[bad block] (-4.5, -0.1) rectangle (-4.3, 0.1);
			\fill[bad block] (-4, -0.1) rectangle (-3.8, 0.1);
			\fill[bad block] (-3.5, -0.1) rectangle (-3.3, 0.1);
			\fill[bad block] (-2.5, -0.1) rectangle (-2.3, 0.1);
			\fill[bad block] (-2, -0.1) rectangle (-1.8, 0.1);
			\fill[bad block] (-1.5, -0.1) rectangle (-1.3, 0.1);

			\draw[yellow block] (-2.65, -0.7) rectangle (-1.25, 0.7);
			
			\draw[yellow block] (5, -0.7) rectangle (3.6, 0.7);
			\fill[bad block] (4.9, -0.1) rectangle (4.7, 0.1);
			\fill[bad block] (4.5, -0.1) rectangle (4.3, 0.1);
			\fill[bad block] (4, -0.1) rectangle (3.8, 0.1);
			\fill[bad block] (3.5, -0.1) rectangle (3.3, 0.1);
			\fill[bad block] (2.5, -0.1) rectangle (2.3, 0.1);
			\fill[bad block] (2, -0.1) rectangle (1.8, 0.1);
			\fill[bad block] (1.5, -0.1) rectangle (1.3, 0.1);
			\draw[yellow block] (2.65, -0.7) rectangle (1.25, 0.7);
			
			\fill[bad block] (5.2, -0.05) rectangle (5.4, 0.15);
			\fill[bad block] (5.7, -0.15) rectangle (5.9, 0.05);

			\draw[yellow block] (3.6, 4.8) rectangle (5, 6.2);
			\fill (4.3, 5.5) circle (1pt); 
			\draw[decorate, decoration={brace, amplitude=4pt, mirror}, thick, orange] 
			(3.6, 4.7) -- (5, 4.7) node[midway, below=5pt] {$L_0\sim L_1^{\frac{9}{10}}$};
			
			\draw[->,thick,overlay] (-8.5, 6) -- (-6.5,6) node[right] {$\bm n$};
			\draw[->,thick,overlay] (-7.5,5) -- (-7.5,7) node[above] {$\bm k$};

			\node[text=white!20!black, font=\Large] at (4, -8) {$\log N_1 \ll \log N$};

		\end{tikzpicture}
		\caption{The distribution of resonant blocks}
		\label{graph}
	\end{figure}

Finally, we define $\Sigma_{N}=\bigcup\limits_{\ell=1,2,3}\tilde\Sigma_{N, \ell}$ with $\tilde\Sigma_{N, 1}, \tilde\Sigma_{N, 2}, \tilde\Sigma_{N, 3}$ given by the analysis in $\mathcal R_2$, Lemma \ref{ty3lem}, Lemma \ref{ty2est}, respectively. Then $${\rm meas}(\Sigma_N)\leq 3e^{-2N^{3\rho}}\ll e^{-N^{2\rho}},$$   and  for $\sigma\notin\Sigma_N$, we can cover $\Lambda_{{\rm pm}, N}$ with $\sigma$-good regions of sizes $N_1, L_0, L_1, L_3$. More precisely, we have  (cf. Figure \ref{graph})
\begin{itemize}
\item For $\bm x\in \mathcal R_2$,  cover $\bm x$ with $\Lambda\in \mathcal{ER}(L_0)$ with $L_0\sim N^{\frac{9}{1000(b+d)^2}}$.
\item For $\bm x\in\Lambda_1^\dagger$, cover $\bm x$ with $\Lambda^\dagger$. 
\item For $\bm x\in \mathcal R_1\setminus\Lambda_1^\dagger$, cover $\bm x$ with regions in 
$$\bigcup_{L=N_1, L_0, L_1}\mathcal{ER}(L).$$
\end{itemize}
As a result, for $\sigma\notin\Sigma_N$, we can apply Lemma \ref{JLSlem1} and Lemma \ref{JLSlem2} to show that  $\Lambda_{{\rm pm}, N}$ is $\sigma$-good. Indeed,  we first apply Lemma \ref{JLSlem1} with $$\Lambda=\Lambda_{{\rm pm}, N}, W\in \bigcup_{L=N_1, L_0, L_1}(\mathcal{ER}(L)\cup\{\Lambda^\dagger\})$$ to obtain 
$$\|G_{\Lambda_{{\rm pm}, N}}(\sigma)\|\leq e^{N^{\frac34}}.$$
Next, for the off-diagonal exponential decay, we use Lemma \ref{JLSlem2} with $\Lambda=\Lambda_{{\rm pm}, N},\ \Lambda_1=\emptyset$, $M_0=N_1$ to obtain (recalling the decay  rate\ estimates   in  Lemmas \ref{ty3lem}, \ref{ty2est}) for $|\bm x-\bm x'|>N^{\frac89},$
\begin{align*}
|G_{\Lambda_{{\rm pm}, N}}(\sigma)((\bm x,\xi);(\bm x',\xi'))|&\leq e^{-(\gamma_{L_2}-N_1^{-\frac{1}{10}})|\bm x-\bm x'|}\\
&\leq e^{-(\gamma_{N_1}-N_1^{-\vartheta}-2N_1^{-\frac{1}{10}})  |\bm x-\bm x'|}\\
&=e^{-\gamma_{N} |\bm x-\bm x'|},
\end{align*}
where 
\begin{align}\label{kappa}
\gamma_N\geq \gamma_{N_1}-N^{-\kappa},\ \kappa={\frac12}\min\{\vartheta, \frac{1}{10}\}\leq \frac{1}{20}.
\end{align}
Since $N\sim N_1^{\frac{1}{\rho^2}}$,  we choose $\ell_*$ so that $N^{\rho^{2\ell_*}}:=N_*\sim \log^{\frac1\kappa}\frac{1}{\varepsilon+\delta}$. It  follows from applying  Lemma \ref{inilem}   (i.e., $\gamma_{N^{\rho^{2\ell_*}}}=\gamma_0=\gamma-\log^{-2}\frac{1}{\varepsilon+\delta}$)  and iterating \eqref{kappa} that 
\begin{align}
\nonumber\gamma_N&\geq \gamma_{N_*}-\sum_{\ell=1}^{\ell_*}N_*^{-\frac{1}{\rho^{2\ell}}\kappa}\\
\nonumber&\geq \gamma_{N^{\rho^{2\ell_*}}}-N_*^{-\frac{\kappa}{\rho^2}}-\sum_{\ell\geq 2}N_*^{-\frac{1}{\rho^{2\ell}}\kappa}\\
\nonumber&\geq \gamma-\log^{-2}{\frac{1}{\varepsilon+\delta}}-2N_*^{-\frac{\kappa}{\rho^2}}\\
\label{gamitr}&\geq   \gamma-3\log^{-2}{\frac{1}{\varepsilon+\delta}}\geq\frac\gamma 2
\end{align}
provided $0<\varepsilon+\delta\leq c(\rho, c_1, \kappa).$


The above arguments hold indeed for all scales in $[N, N^2]$. Then we can   propagate the LDE from small scales interval $[N_1, N_2]$ to large  scales one $[N, N^2]$ similar to  the proof of Theorem 4.1 in   \cite{JLS20}.   This completes the  proof of Lemma \ref{lsclem}. 

\end{proof}

\begin{proof}[Proof of Theorem~\ref{ldtthm}]
The proof is just a combination of Lemmas~\ref{inilem},~\ref{intlem} ~and~\ref{lsclem} by taking 
$$0<\varepsilon+\delta\leq\delta_0=\min\{\delta_1, \delta_2, \delta_3\}.$$

\end{proof}

\section{Nonlinear Analysis} \label{nonsect}
In this section, we focus on nonlinear analysis.  
	We work directly with the  nonlinear equation \eqref{NLS}.   Recall that
\begin{align*}
\mu_{\bm n}=V(\bm n\bm \alpha+\bm \theta),\ \bm\omega^{(0)}=\left(\omega^{(0)}_\ell=\mu_{\bm n_\ell}\right)_{\ell=1}^b. 
\end{align*}
We aim to construct solutions to \eqref {NLS} of the form
\begin{align*} 
	u(t,\bm n)=\sum_{(\bm k, \bm n)\in\Z^b\times\Z^d}\hat u (\bm k,\bm n) e^{i\bm k\cdot\bm \omega t}.
	\end{align*} 
Using the Ansatz, we get  the nonlinear lattice equation 
\begin{align}\label{nle1}
{\rm diag}_{(\bm k, \bm n)}(-\bm k\cdot\bm\omega+\mu_{\bm n})\hat u +\varepsilon\Delta \hat u +\delta (\hat u*\hat v)^{*p} *\hat u=0,
\end{align}
where $\hat v(\bm k,\bm n)=\overline{\hat u(-\bm k, \bm n)}$,  the convolution is only in the $\bm k$-variable: 
$$\hat u*\hat v(\bm k, \bm n)=\sum_{\bm k_1+\bm k_2=\bm k}\hat u(\bm k_1, \bm n)\hat v(\bm k_2, \bm n)$$
and 
$$ (\hat u*\hat v)^{*p} =\underbrace{(\hat u*\hat v)*\cdots*(\hat u*\hat v)}_{p\ {\rm times}}.$$
Note that  $\Delta$ acts only on the $\bm n$-variable.  To solve for $\hat u, \hat v$, we also need a conjugate equation of \eqref{nle1}:
\begin{align*}
{\rm diag}_{(\bm k, \bm n)}(\bm k\cdot\bm\omega+\mu_{\bm n})\hat v +\varepsilon\Delta \hat v +\delta (\hat u*\hat v)^{*p} *\hat v=0.
\end{align*}
As a result, we obtain the following vector-valued nonlinear lattice equation: 
\begin{align}\label{nonlatt}
F(\vec u)=0,\  \vec u=\binom{\hat u}{\hat v}, 
\end{align}
where 
\begin{align*}
F(\vec u)= H_0\vec u+\delta P(\vec u)
\end{align*}
with 
\begin{align}\label{Ldef}
\notag H_0&={\rm diag}_{(\bm k,\bm n)\in\Z^{b+d}}\left(\begin{array}{cc}
	{-\bm k\cdot\bm\omega+\mu_{\bm n}+\varepsilon \Delta} & {0} \\
	{0}& {\bm k\cdot\bm\omega+\mu_{\bm n}+\varepsilon \Delta} \\
	\end{array} \right) \\
	&:=D(0)+\varepsilon(\Delta\oplus\Delta)
\end{align}
and 
\begin{align*}
P(\vec u)= \binom{(\hat u*\hat v)^{*p} *\hat u}{(\hat u*\hat v)^{*p} *\hat v}. 
\end{align*}
The linearized operator of $ P$ at  $\vec u$  is then given by 
\begin{align}\label{toep}
&{T}_{\vec u }=\left(\begin{array}{cc}
	{(p+1)(\hat u*\hat v)^{*p} *} & {p(\hat u*\hat v)^{*(p-1)} *\hat u*\hat u*} \\
	{p(\hat u*\hat v)^{*(p-1)} *\hat v*\hat v*}& {(p+1)(\hat u*\hat v)^{*p} *} \\
	\end{array} \right).
\end{align}

\subsection{The Lyapunov-Schmidt decomposition}
We look for quasi-periodic  solutions to \eqref{NLS} near $$u^{(0)}=u^{(0)}(t,\bm n)=\sum_{\ell=1}^ba_\ell e^{i\omega_\ell^{(0)}t}\delta_{\bm n_{\ell}, \bm n},$$
using the Lyapunov-Schmidt decomposition.  
The Fourier support of $u^{(0)}$  is  $\mathcal{S}_+=\{(\bm e_\ell, \bm n_\ell)_{\ell=1}^b\}\subset\Z^{b+d}$   (we  identify $ \mathcal{S}_+$ with $\{(\bm e_\ell, \bm n_\ell)_{\ell=1}^b\}\times\{+\}\subset \Z^{b+d}_{{\rm pm}}$). Similarly, let $\mathcal{S}_-=\{(-\bm e_\ell, \bm n_\ell)_{\ell=1}^b\}\subset \Z^{b+d}$ be the Fourier support of $v^{(0)}:=\overline{u^{(0)}}$, which is then identified with $\{(-\bm e_\ell, \bm n_\ell)_{\ell=1}^b\}\times\{-\}\subset \Z^{b+d}_{{\rm pm}}$.  
Let $\mathcal{S}=\mathcal{S}_+\cup \mathcal{S}_-$ 
and write  $\Z^{b+d}_{{\rm pm}, *}=\Z^{b+d}_{{\rm pm}}\setminus\mathcal{S}:=\mathcal S^c.$ 

We divide the equation \eqref{nonlatt} into the $P$-equations 
\begin{align}\label{Peq}
{F}\left(\vec u\right)\big |_{\mathcal S^c}=0,\
\end{align}
and the $Q$-equations 
\begin{align}\label{Qeq}
{F}(\vec u)\big |_{\mathcal{S}}=0.
\end{align}


	We iteratively solve the $Q$-equations and then the $P$-equations.

	\subsection{The $Q$-equations and extraction of parameters}
	The $Q$-equations are of $2b$-dimensional. 
	
	On $\mathcal S$, $\vec u$ is held fixed, and the $Q$-equations \eqref{Qeq}
are viewed instead as equations for $\bm\omega$, and will be solved using the implicit function theorem. The $Q$-equations are used to relate the frequencies $\bm \omega$ to the amplitudes $\bm a$, permitting amplitude-frequency modulation to solve the $P$-equations.

	Note that 
\begin{align}\label{lattq0}
u^{(0)}(t, \bm n)=\sum_{(\bm k, \bm n)\in\mathcal{S}_{+}}\hat u^{(0)}(\bm k, \bm n)e^{i\bm k\cdot\bm \omega^{(0)}t}, 
\end{align} 
where 
\begin{align*}
\hat u^{(0)}(\bm e_\ell, \bm n_\ell)=\hat v^{(0)}(-\bm e_\ell,\bm n _\ell)=a_\ell\ {\rm for}\ 1\leq \ell\leq b.
\end{align*}
The solutions $\vec u=(\hat u, \hat v)^t$  (with $(\cdot)^t$ denoting  the transpose) on $\mathcal S$ are held {\it fixed}:  $\vec u \equiv(\hat u^{(0)}, \hat v^{(0)})^t=\vec u^{(0)}$ on $\mathcal S$, the 
$Q$-equations are used instead to solve for the frequencies. Due to symmetry, the $Q$-equations for 
$\hat v$ are the same as that for $\hat u$. So we only need to consider that for the $\hat u$.
Solving  the equations at the initial step
\begin{align*}
{F}(\vec u^{(0)})\big|_{\mathcal{S_+}}=0
\end{align*}
leads to, for $1\leq \ell\leq b$, the initial modulated frequencies
\begin{align}\label{omgdef}
\omega_\ell=\omega_\ell^{(1)}&=\omega_\ell^{(0)}+\delta a_\ell^{2p}
\end{align}
satisfying $$\det \left(\frac{\partial \bm \omega^{(1)}}{\partial \bm a}\right)=\delta^b\prod_{\ell=1}^b(2p a_\ell^{2p-1}).$$

Along the way, 
we will show that $\bm\omega^{(r)}=\bm\omega^{(0)}+O(\delta)$, for all $r=0, 1, 2, \cdots$. 
So $\bm\omega^{(r)}\in \Omega$, the frequency set in Theorem~\ref{ldtthm},
hence Theorem~\ref{ldtthm} is at our disposal to solve the $P$-equations.
To construct approximate solutions to the $P$-equations, it is convenient to view $\bm \omega\in\Omega$
as  an independent parameter, and work in the $(\bm\omega, \bm a)$-variable, 
$(\bm\omega, \bm a)\in\Omega\times [1,2]^b:=(\bm\omega^{(0)}+[\delta, 2^{2p}\delta]^b)\times [1,2]^b$.
The approximate solutions to the nonlinear matrix equation \eqref{nle1} will then be obtained by taking into account 
the solutions to the $Q$-equations as well, which restricts  $(\bm\omega, \bm a)$ to the $b$-dimensional hypersurface $\bm \omega=\bm\omega^{(r)} (\bm a)$, 
at the $r$-th iteration.

	\subsection{The $P$-equations} 
	The  $P$-equations are of infinitely dimensional. 
	
	We use \eqref{Peq} to solve for $\vec u |_{\mathcal S^c}$.   Solving the $P$-equations requires analyzing the invertibility of the linearized operator $H=H_0+\delta T_{\vec u}$ restricted to $\Z^{b+d}_{{\rm pm}, *}$, which we do
in the previous section. Then the resolution of the $P$-equations combines the  Newton scheme and the linear analysis in Section~\ref{ldtsec}. 
	The large deviation estimates in $\sigma$ in Theorem~\ref{ldtthm} will be converted into estimates in the amplitudes $\bm a$ by using 
	a semi-algebraic projection lemma  (cf. Lemma~\ref{proj} below), and amplitude-frequency modulation, $\bm\omega=\bm\omega(\bm a)$.
	This is an established scheme, which has recently been extensively elaborated on with detailed proofs in sects.~IV-VI, \cite{KLW23} and \cite{LW22}.  They  will serve as 
	the basic reference point for the present section.

	Recall first the formal Newton scheme. Let $\vec u$ be an approximate solution to the $P$-equations \eqref{Peq}.
For the next approximation ${\vec u}'$, write 
$$ {\vec u}'=\vec u+\Delta_{\rm cor}\vec u'.$$
Then  $\Delta_{\rm cor}\vec u'$ is set to be 
\begin{align*}
  \Delta_{\rm cor}\vec u'=-(H(\vec u))^{-1}F( \vec u),
\end{align*}
where the linearized operator $H(\vec u)=H_0+\delta T_{\vec u}(\bm \omega, \bm a)$ (cf. \eqref{Ldef} and \eqref{toep}). 
This is, however, only {\it indicative}  since  we assume  that $H(\vec u)$ is invertible. Due to the small divisor difficulties in the inversion process,
we need to regularize the linearized operator $H(\vec u)$ and do a {\it multi-scale} Newton iteration instead. 

Toward that purpose, let $M$ be a large integer and consider the geometric sequence of scales $M^\ell$, $\ell=1, 2, \cdots$. 
Denote by  $\vec u^{(r)}$, the $r$-th approximate solution to the $P$-equations \eqref{Peq}. 
For the $(r+1)$-th approximation, write 
$$\vec u^{(r+1)}=\vec u^{(r)}+\Delta_{\rm cor} \vec u^{(r+1)}.$$
We define (if it exsits) 
$$ \Delta_{\rm cor} \vec u^{(r+1)}=-(R_{N}{{H(\vec u^{(r)})}}R_{N})^{-1}F( \vec u^{(r)}),$$
where $N=M^{r+1}$ and  $R_{N}$ is the restriction operator to the cube 
$\Lambda_{N}\cap \Z^{b+d}_{\rm pm, *}.$
This leads
to estimate the inverse (i.e., the Green's function),
\begin{align*}
{{G}}_{N}(\vec u ):= \left(R_{{N}}{{H(\vec u^{(r)})}} R_{N}\right)^{-1}, \  r=0, 1, 2, \cdots.
\end{align*}
The following estimates are crucial  for the convergence of the approximate solutions with exponential decay: 
\begin{align*}
\| {{G}}_{N}(\vec u^{(r)})\|&\leq  e^{(\log N)^C},  \\
| {{G}}_{N}(\vec u^{(r)})((\bm x, \xi);(\bm x',\xi'))|&\leq e^{-c|\bm x-\bm x'|}\ {\rm for} \ |\bm x-\bm x'|\geq (\log N)^C,
\end{align*}
for $\bm x=(\bm k, \bm n)$ and some $c, C>0$, see sect.~V, \cite{KLW23}. 

The LDT Theorem \ref{ldtthm} in the previous section plays an essential role  in the above Green's function estimates.  The operator  $T_{\vec u}$ plays the role of  $S$ in \eqref{hsigm} (cf. Remark \ref{LDTrem2} for details).  The following lemma implies that $T_{\vec u}$ satisfies the decay assumption of $S$ (cf. also  Remark \ref{LDTrem2}).  This will then enable us to use  Theorem \ref{ldtthm}. 
\begin{lem}\label{powerf}
\begin{itemize}
	\item[(1)]     For all $\bm k,\bm k', \bm k''\in\Z^b$, $\bm n,\bm n'\in\Z^d$ and $\xi,\xi'\in\{+, -\}$,  we have the T\"oplitz property in the  $\bm k$-variable: 
	\begin{align*}
	T_{\vec u}((\bm k'+\bm k, \bm n, \xi); (\bm k''+\bm k, \bm n', \xi'))=T_{\vec u}((\bm k', \bm n, \xi ); (\bm k'', \bm n', \xi')). 
	\end{align*}

	\item [(2)]  Assume that $|\vec u(\bm k, \bm n)|\leq e^{-c(|\bm k|+|\bm n|)}, c>0$. Then we have for some  $C=C(b, p)>0,$
	\begin{align*}
	|T_{\vec u}((\bm k, \bm n, \xi); (\bm k', \bm n', \xi'))|\leq  C(1+|\bm k-\bm k'|)^{C} e^{-c|\bm k-\bm k'|-c|\bm n|}\delta_{\bm n,\bm n'}. 
	\end{align*}
\end{itemize}
\end{lem}
\begin{proof}
The proof is similar to  that of Lemma 5.1 in \cite{LW22} and Proposition 4.1 in \cite{LSZ25} (our case is easier since the convolution here  involves  only  the $\bm k$-variable). We omit the details. 
\end{proof}

\subsection{The induction hypothesis and the proof}

 Let $M$ be a large integer and recall that $\Lambda_{R}\subset\Z^{b+d}$ denotes the $\ell^\infty$-norm induced cube of radius $R$ and centered at the origin.  

In the following, $C>0$ is a large constant and $c>0$ is a small one. 
 
 We begin with the analysis of initial induction steps, which relies on the Neumann series argument and Diophantine estimates \eqref{mk1thm}  of  Theorem \ref{clusthm}. 


\subsubsection{The initial $2r_\star-r_0$  steps}\label{subsin}
	We choose $1<r_0<r_\star$ such that 
	\begin{align*}
M^{r_0}\leq \log \frac{1}{\varepsilon+\delta}\leq M^{r_0+1} 
	\end{align*}
	and 
	\begin{align}\label{r0}
	M^{2r_\star}\leq  (\varepsilon+\delta)^{-c_1}<M^{2r_\star+1},
	\end{align}
	where $c_1>0$ is given in Theorem \ref{ldtthm}. 


	Let 
	 $\vec u^{(0)}(\bm \omega,\bm a)=\vec u^{(0)}(\bm \omega^{(0)},\bm a)$ with $\bm \omega=\bm\omega^{(1)}$ given by \eqref{omgdef}.

	We start by constructing $\vec u_{\rm in}^{(1)}(\bm\omega, \bm a)$. Obviously, 
	$${F}(\vec u^{(0)})=O(\varepsilon+\delta)$$
	 and (the support)   ${\rm supp}\  {F}(\vec u^{(0)})\subset \Lambda_{M_0}$ for some $M_0=M_0(p, \mathcal{S})>0.$
	  Let $L_1=M^{r_0+1}>M_0.$  It suffices to estimate 
	\begin{align*}
	H_{L_1}^{-1}=(R_{\Lambda_{L_1}}({D}(0)+\varepsilon(\Delta\oplus\Delta)+\delta{T}_{\vec u^{(0)}})R_{\Lambda_{L_1}})^{-1}.
	\end{align*}
	It follows from  \eqref{mk1thm} of Theorem \ref{clusthm}, $\bm\omega=\bm\omega^{(0)}+O(\delta)$ and the Neumann series argument (cf. Lemma \ref{lwlem})  that 
	\begin{align*}
	\|H_{L_1}^{-1}\|&\leq 2L_1^{C_1}, \\
	|H_{L_1}^{-1}((\bm x, \xi); (\bm x', \xi'))|&\leq L_1^{C_1}e^{-\frac{\log \frac{1}{\varepsilon+\delta}}{2}|\bm x-\bm x'|}\  {\rm for}\ \bm x\neq \bm x'\in \Lambda_{L_1},
	\end{align*}
	since $L_1=M^{r_0+1}\ll (\varepsilon+\delta)^{-c_1}$. Then we have 
	\begin{align*}
	\Delta_{\rm cor}\vec u_{\rm in}^{(1)}=-H_{L_1}^{-1}{F}(\vec u^{(0)}),
	\end{align*}
	and consequently
	\begin{align*}
	\vec u_{\rm in}^{(1)}=\vec u^{(0)}+\Delta_{\rm cor}\vec u_{\rm in}^{(1)}=\vec u^{(0)}-H_{L_1}^{-1}{F}(\vec u^{(0)}).
	\end{align*}
	Substituting $\vec u_{\rm in}^{(1)}$ into the $Q$-equations using 
        $$ \hat u_{\rm in}^{(1)}\big |_{\mathcal{S}+}=\hat v_{\rm in}^{(1)}\big |_{\mathcal{S}-}=\bm a,$$  we obtain
	\begin{align*}
	\omega_\ell&=\omega_\ell^{(0)}+\varepsilon\left(\frac{1}{a_\ell}(\Delta \hat u_{\rm in}^{(1)})(\bm e_\ell, \bm n_\ell)\right)(\bm\omega, \bm a)\\
	&\ \ +\delta\left(\frac{1}{a_\ell}(\hat u_{\rm in}^{(1)}*\hat v_{\rm in}^{(1)})^{*p} *\hat u_{\rm in}^{(1)}(\bm e_\ell, \bm n_\ell)\right)(\bm\omega, \bm a), \, \ell= 1, 2, \cdots,  b.
	\end{align*}
	Since the second and the third terms are smooth in $\bm\omega$ and $\bm a$,  using the implicit function theorem yields $\bm \omega_{\rm in}^{(2)}=\bm \omega_{\rm in}^{(2)}(\bm a)=(\omega_{{\rm in}, \ell}^{(2)}(\bm a))_{\ell=1}^b$, written in the form:
	\begin{align*}
	\omega_{{\rm in}, \ell}^{(2)}(\bm a)=\omega_\ell^{(0)}+(\varepsilon+\delta) \varphi_{{\rm in}, \ell}^{(2)}(\bm a), \,  \ell= 1, 2, \cdots, b,
	\end{align*}
	with a smooth function $\varphi_{{\rm in}, \ell}^{(2)}$  (comparing  with \eqref{omgdef}). 
	
	The above constructions can be inductively performed for $2r_\star-r_0$ steps,  with $r_\star$ satisfying \eqref{r0}.
	So we have obtained  $\vec u_{\rm in}^{(r)}=\vec u_{\rm in}^{(r)}(\bm\omega, \bm a)$ ($1\leq r\leq 2r_\star-r_0$)  for all $(\bm \omega, \bm a)\in \Omega\times [1,2]^b$. 
	
	\subsubsection{The Inductive Theorem}
Recall that 
	$$\Omega=\bm\omega^{(0)}+[\delta, 2^{2p}\delta]^b\subset\R^b.$$
	We first state the induction hypothesis, which can be verified if $0<\varepsilon\leq \delta\leq \log^{-1}\frac{1}{\varepsilon}\leq \delta_0\ll1$ and $(\bm \alpha, \bm \theta)\in\mathcal W$ (cf. \eqref{ALset}). 
 Given its significance, we refer to it as the induction theorem. 
	
	  \begin{thm}
	 For $r\geq r_\star,$ we have
	\begin{itemize}
	\item[{\bf(Hi)}] ${\rm supp}\ \vec u^{(r)}\subset \Lambda _{M^{\tilde r}}$, $\tilde r=r+r_\star.$
	\item[{\bf (Hii)}] $\|\Delta_{\rm cor} \vec u^{(r)}\|<\delta_r,\ \|\partial \Delta_{\rm cor} \vec u^{(r)}\|<{\bar\delta}_{r}$, where $\partial$ refers to derivation in $\bm\omega$ or $\bm a$,   $$\vec u^{(r)}=\vec u^{(r-1)}+\Delta_{\rm cor} \vec u^{(r)},$$
	 and $\|\cdot\|=\sup_{\bm\omega, \bm a}\|\cdot\|_{\ell^2(\Z_{\rm pm}^{b+d})}.$
	\begin{rem}
	The size of $\delta_r, \bar\delta_r$ will satisfy $\log\log \frac{1}{\delta_r+\bar\delta_r}\sim r.$ 
	\end{rem}
	\item[{\bf (Hiii)}] $|\vec u^{(r)}(\bm k, \bm n)|\leq e^{-c(|\bm k|+|\bm n|)}$ for some  $c>0.$
	\begin{rem}
	The constant $c>0$ will decrease slightly along the iterations but remain bounded away from $0$. This will become clear in the proof, cf. sect.~V, G. Step 4, \cite{KLW23}.
	We also remark that $\vec u ^{(r)}$ can be defined as a $C^1$ function on the entire parameter space $(\bm\omega, \bm a)\in\Omega\times [1,2]^b$ by using an extension argument, cf., \cite{Bou98, BW08} and sect.~V, L. Step 9, \cite{KLW23}.  So one can apply the implicit function theorem to solve the $Q$-equations 
	leading to 
		\begin{align}\label{algeq}
	\omega_\ell^{(r)}(\bm a)=\omega_\ell^{(0)}+(\varepsilon+\delta)\varphi_\ell^{(r)}(\bm a) \ (1\leq \ell\leq b),
	\end{align}
	where   $\|\partial \bm\varphi^{(r)}\|=\sup_{1\leq \ell\leq b}\|\partial \varphi_\ell^{(r)}\|\lesssim 1$ and $ \bm\varphi^{(r)}=( \varphi^{(r)}_\ell)_{\ell=1}^b.$ By {\rm ({\bf Hii})}, we have 
	\begin{align*}
	|\bm\varphi^{(r)}-\bm\varphi^{(r-1)}|\lesssim (\varepsilon+\delta)\|\vec u^{(r)}-\vec u^{(r-1)}\|\lesssim {(\varepsilon+\delta)}\delta_r.
	\end{align*}
	Denote by $\Gamma_r$ the graph of $\bm\omega^{(r)}=\bm\omega^{(r)}(\bm a)$. We have $\|\Gamma_r-\Gamma_{r-1}\|\lesssim{(\varepsilon+\delta)}\delta_r.$ Recall that $\bm\omega$ is given by \eqref{omgdef} and $\bm\varphi^{(0)}=\bm 0$. Thus we  have a diffeomorphism from $\bm\omega^{(r)}\in\Omega$ to $\bm a\in[1,2]^b$. 
	\end{rem}
	
	\item[{\bf (Hiv)}] There is  a collection $\mathcal{I}_r$ of intervals $I\subset \Omega\times [1,2]^b$ of size $M^{-{\tilde r}^{10C}},$  so that 
	\begin{itemize}
	\item[{\bf(a)}] On each $I\in\mathcal{I}_r,$  both  $\hat u^{(r)}(\bm \omega, \bm a)$ and $ \hat v^{(r)}(\bm \omega, \bm a)$ are given by  rational functions  in $(\bm \omega, \bm a)$
of degree at most $M^{{\tilde r}^3}$.
	\item[{\bf (b)}] For $(\bm\omega, \bm a)\in\bigcup_{I\in\mathcal{I}_r}I$, 
	\begin{align*}
	\|F(\vec u^{(r)})\|\leq \kappa_r,\ \|\partial F(\vec u^{(r)})\|\leq \bar\kappa_r,
	\end{align*}
	where $\partial$ refers to derivation in $\bm\omega$ or $\bm a$, and $\log\log\frac{1}{\kappa_r+\bar\kappa_r}\sim r$. 
	\item[{\bf (c)}] For $(\bm \omega, \bm a)\in\bigcup_{I\in\mathcal{I}_r}I$ and  $H=H(\vec u^{(r-1)})$, one has
	\begin{align}
	\label{hivc1}\|H_{M^{\tilde r}}^{-1}\|&\leq M^{{\tilde r}^C},\\
	\label{hivc2}|H_{M^{\tilde r}}^{-1}((\bm x,\xi); (\bm x',\xi'))|&\leq e^{-c|\bm x-\bm x'|}\  {\rm for}\ |\bm x-\bm x'|>{\tilde r}^C,	\end{align}
	where $\bm x=(\bm k, \bm n), \bm x'=(\bm k', \bm n')$ and $H_{M^{\tilde r}}$ refers to the restriction of $H$ to $\Lambda_{M^{\tilde r}}$.
	\item[{\bf (d)}] Each $I\in\mathcal{I}_r$ is contained in some $I'\in\mathcal{I}_{r-1}$ and 
	\begin{align*}
	{\rm meas}\left(\Pi_{\bm a}(\Gamma_{r-1}\cap(\bigcup_{I'\in\mathcal{I}_{r-1}}I'\setminus\bigcup_{I\in\mathcal{I}_r}I))\right)\leq M^{-\frac{\tilde r}{C(b)}},
	\end{align*}
	where $C(b)>0$ depends only on $b$, and  $\Pi_{\bm a}$ denotes the projection of the set on the $\bm a$-variable. 
	\end{itemize}
	\item[(\bf Hv)]
		The following precise  relations  hold  true
		\begin{align*}\label{conv}
		\delta_r= {(\varepsilon+\delta)}^{\frac{1}{2}}M^{-(\frac{4}{3})^r}, \, \bar\delta_r=  {(\varepsilon+\delta)}^{\frac{1}{8}} M^{-\frac{1}{2}(\frac{4}{3})^r};\\
		\kappa_r= {(\varepsilon+\delta)}^{\frac{3}{4}} M^{-(\frac{4}{3})^{r+2}}, \, \bar\kappa_r=  {(\varepsilon+\delta)}^{\frac{3}{8}} M^{-\frac{1}{2}(\frac{4}{3})^{r+2}}. \end{align*}
	\end{itemize}
	\end{thm}
	

	\subsection{Proof of the Inductive Theorem}

	From the analysis in Section \ref{subsin}, we have constructed  $\vec u_{\rm in }^{(r)}$ for $1\leq r\leq 2r_\star-r_0$. In particular, we obtain   that the Inductive Theorem  holds  for $r=r_\star$ as we can set $\vec u^{(r_\star)}=\vec u_{\rm in}^{(2r_\star-r_0)}$.  
	

	Assume now  the Inductive Theorem holds for $r>r_\star$.  We will prove it for  $\ell=r+1.$
	
	We first check the degree bound on $\vec u^{(r+1)}$ (i.e., ({\bf Hiv,~a}) for $r+1$). In fact, the Newton scheme gives 
	\begin{equation}\label{Newton multiscale}
	\Delta_{\rm cor} \vec u^{(r+1)}=-H^{-1}_{N}(\vec u^{(r)}) F(\vec u^{(r)}),\ 
	 N=M^{\tilde r+1},\ \tilde r=r+r_\star, 
	\end{equation}
    and by ({\bf Hiv,~a}), $\vec u^{(r)}(\bm\omega, \bm a)$ is  a rational function of degree at most $M^{\tilde r^3}$ (in  $(\bm\omega, \bm a)$). By the convolution structure of $T_{\vec u^{(r)}}$, we have 
	\[{\rm deg}\   F(\vec u^{(r)})\leq (2p+1)M^{\tilde r^3},\ {\rm deg}\   H_{N}(\vec u^{(r)})\leq 2p M^{\tilde r^3}.\]
	Then by  Cramer's rule, we  have 
	\[H_{N}^{-1}(\vec u^{(r)})=\frac{A}{\det   H_{N}(\vec u^{(r)})},\]
	where $A$ is the adjacent matrix of $H_{N}(\vec u^{(r)})$. Hence,
	\[{\rm deg}\  H^{-1}_{N}(\vec u^{(r)})\leq (4p+2)M^{\tilde r+1}M^{\tilde r^3}.\]
	Thus,   using \eqref{Newton multiscale} shows 
	\[{\rm deg}\   \Delta_{\rm cor} \vec u^{(r+1)} \leq (4p+2)M^{\tilde r+1}M^{\tilde r^3}+(2p+1)M^{\tilde r^3} \leq M^{(\tilde r+1)^3}.\]
	This proves the  ({\bf Hiv,~a})  for $r+1$. 
	
	   The other inductive  assumptions can be verified in the usual way, cf.  \cite{LW22} and sect.~V, \cite{KLW23},  e.g., the verification  of   ({\bf Hiv,~b}) can be found in I. Step 6 of  \cite{KLW23}; the verification of the constant relations {(\bf Hv)} can be found in Appendix E of  \cite{LW22}.

	    It remains to establish ({\bf Hiv,~c}) and ({\bf Hiv,~d}) for $r+1$.  For this purpose, we set  (recalling  $\tilde r=r+r_\star$)   $$N=M^{\tilde r+1}$$  and a smaller scale,
	  $$N_1=(\log N)^{\frac{4}{\rho^4}},$$
	where $0<\rho\ll1$ is given in Theorem \ref{ldtthm}.  
	
	To establish ({\bf Hiv, c}), we first make approximations on $T_{\vec u^{(\ell)}}$  for different $\vec u^{(\ell)}$.    
	For the set $$ W=\left(([-N,N]^{b+d}\setminus [-N/2, N/2]^{b+d})\times \{+, -\}\right)\cap \Z_{{\rm pm},*}^{b+d},$$ we use $N_1$ size regions to do estimates, in order to lower the degree of the
	associated semi-algebraic set, which will be essential for the upcoming semi-algebraic projection arguments. We set
	$$r_1=2\left[\frac{\log N_1}{\log \frac43}\right]+1.$$
	Then 
	\begin{align}\label{tq1}
	\|T_{\vec u^{(r)}}-T_{\vec u^{(r_1)}}\|\lesssim\delta_{r_1}\lesssim\sqrt{\varepsilon+\delta}e^{-(\frac43)^{r_1}}\ll e^{-N_1^2}.\end{align}
	So we can estimate  $H_{W}^{-1}(\vec u^{(r)})$ using  $H_Q^{-1}(\vec u^{(r_1)})$ for $Q\subset W$.  More precisely, we  want  to  show that the following estimates hold for every $Q\subset W$ with $Q\in\mathcal{ER}(N_1)$:
	\begin{align}
\label{tqr11}\|H_{Q}^{-1}(\vec u^{(r_1)})\|&\leq e^{N_1^{\frac{3}{4}}},\\
\label{tqr12}|H_{Q}^{-1}(\vec u^{(r_1)})((\bm x,\xi); (\bm x',\xi'))|&\leq e^{-c|\bm x-\bm x'|}\  {\rm for}\ |\bm x-\bm x'|>N_1^{\frac{8}{9}}.	
\end{align}
This requires additional  restrictions on $(\bm\omega, \bm a).$ For this, we divide into the following cases:
\begin{itemize}
	\item[{\bf Case 1.}]  Assume $Q\in (\bm k,\bm n)+\mathcal{ER}_0(N_1)$ with $|\bm n|>10N_1$ and $Q\subset W.$  Denote by $\mathcal{C}_1$ the set of all these $Q$.   To estimate $H_Q^{-1}(\vec u^{(r_1)})$, we can impose as in the proof of Lemma \ref{lsclem2} (with $\sigma=0$) the following condition
	\begin{align*}
	\min_{\bm k\in\Pi_b Q, \xi=\pm1, 1\leq \ell\leq N_1^{b+2d}}|\bm k\cdot\bm \omega+\xi {\lambda_\ell}|>e^{-\frac14N_1^{\frac{3\kappa_1}{4}}},
	\end{align*}
where $\lambda_\ell$ is  given by  Lemma \ref{lsclem2}. For $\bm k\neq \bm 0$, we  can remove $\bm\omega$ directly (i.e., a set in $\bm \omega$ of measure at most  $N^{C}e^{-\frac14N_1^{\frac{3\kappa_1}{4}}}$) to control $A_{\bm k}^{-1}(0)$ (cf. the proof of Lemma \ref{lsclem2} for this notation).    
For the case $\bm k=\bm 0$ of which we cannot remove $\bm\omega$,  we  employ  Green's function estimates in  (2) of Lemma \ref{Boulem}. Indeed, similar to the proof of Lemma \ref{lsclem2}   (1),   the  estimate of   $A_{\bm 0}^{-1}(0)=\mathcal L^{-1}_{Q(\bm 0)}(\bm\theta)\oplus\mathcal L^{-1}_{Q(\bm 0)}(\bm\theta)$  follows directly from  (2) of Lemma \ref{Boulem} since we assume $(\bm \alpha, \bm\theta)\in\mathcal W$ (cf. \eqref{ALset}).  
So by taking account of all $\bm k\in \Pi_bQ$,  all  $Q\in\mathcal{C}_1$ and by Fubini's theorem,  we can find $I_1\subset \Omega\times[1,2]^b$ with (recalling $\rho\leq c_2\kappa_1\leq \frac{\kappa_1}{10^4(b+d)^2}$)
$${\rm meas}(I_1)\leq N^Ce^{-\frac14 (\log N)^{\frac{3\kappa_1}{\rho^4}}}\leq  e^{-(\log N)^{10}}\ll (\varepsilon +\delta)^{b+1}M^{-{\tilde r}},$$
 so that,  for all $(\bm\omega, \bm a)\notin I_1$ and all  $Q\in \mathcal{C}_1$,  $H_Q^{-1} (\vec u^{(r_1)})$    satisfies  \eqref{tqr11} and \eqref{tqr12}. \\
\item[ {\bf Case 2.}]  Assume $Q\in (\bm k, \bm n)+\mathcal{ER}_{\bm 0}(N_1)$ with $|\bm n|\leq 10N_1$ and $Q\subset W.$  Denote by $\mathcal{C}_2$ the set of all these $Q$.  In this case,  it must be that $|\bm k|\geq N/2.$ We will use the projection  lemma    and LDT to remove $\bm\omega.$ 
 For any $Q\in\mathcal{C}_2$, we can write $Q=(\bm k, \bm n)+Q_0 $ with $Q_0\in (\bm 0, \bm n)+\mathcal{ER}_{\bm 0}(N_1)$ for $|\bm n|\leq 10N_1,$ and $N/2\leq |\bm k|\leq N.$  This motivates us to  consider 
 $$H(\sigma)=H(\vec u^{(r_1)};\sigma)=D(\sigma)+\varepsilon (\Delta\oplus\Delta)+\delta T_{\vec u^{(r_1)}},$$
which has been investigated in Section \ref{ldtsec}. 
 Recall that $G_Q(\sigma)$ denotes the Green's function of $H(\sigma)$ restricted to $Q$. Then the  Toeplitz property in the $\bm k$-direction of $H(\sigma)$ implies 
 $$H^{-1}_Q(\vec u^{(r_1)})=G_Q(0)=G_{Q_0}(\bm k\cdot\bm\omega).$$

 Note that $\Omega_{N_1}$ given by Theorem \ref{ldtthm} is a semi-algebraic set, {\it independent of} $\bm a$, and of degree at most $N_1^{10(b+d)}$.  Without loss of generality, we may  assume $\Pi_{\bm\omega}(I\cap\Gamma_{r_1})\subset \Omega_{N_1} $ for each $I\in\mathcal{I}_{r_1}$.  For otherwise, we can  replace $\Pi_{\bm\omega}(I\cap\Gamma_{r_1})$ with $(\Pi_{\bm\omega}(I\cap\Gamma_{r_1}))\cap (\Omega_{N_1'}\setminus\Omega_{N_1})$, where  $N_1'=(\log M^{\tilde r})^{\frac{4}{\rho^4}}\ll N_1$ since by Remark \ref{LDTrem1}, 
 $${\rm meas}(\Omega_{N_1'}\setminus\Omega_{N_1})\leq e^{-\frac12 N_1^{\rho^4}}\leq e^{-\frac{1}{2}(\log N)^4}\ll (\varepsilon+\delta)^{b+1}N^{-10}.$$	
	Then we apply  the LDT at  scale $N_1$ on each $I\in\mathcal{I}_{r_1}$ (cf. Remark \ref{LDTrem2}) to obtain that  for all $Q\in (\bm 0, \bm n)+\mathcal{ER}_{\bm 0}(N_1)$ with $|\bm n|\leq 10N_1,$  and for $\sigma$ outside a set  of  measure at most $e^{-N_1^{\rho}}$, the following estimates hold true: 
		\begin{align}
\label{GQ1}\|G_{Q}(\sigma)\|&\leq e^{N_1^{\frac{3}{4}}},\\
\label{GQ2}|G_{Q}(\sigma)((\bm x,\xi); (\bm x',\xi'))|&\leq e^{-c|\bm x-\bm x'|}\  {\rm for}\ |\bm x-\bm x'|>N_1^{\frac{8}{9}}.	
\end{align}
For the usage of projection lemma  in the present setting (i.e., ${\rm meas}(\Omega)\sim \delta^b$), we need to make more precise descriptions of  the admitted  $\sigma$ first.  We say $\sigma $ is $Q$-bad if either \eqref{GQ1} or \eqref{GQ2} fails.  
\begin{lem}\label{sgmlem}
Fix  $Q\in (\bm 0, \bm n)+\mathcal{ER}_{\bm 0}(N_1)$ with $|\bm n|\leq 10N_1.$  Denote by $\Sigma_{Q}$ the set of  $Q$-bad $\sigma\in\R$. Then 
\begin{align*}
\Sigma_Q\subset \bigcup_{1\leq \ell\leq N_1^C}J_\ell,
\end{align*}
where each $J_\ell$ is an interval of length $\sim (\varepsilon +\delta).$
\end{lem}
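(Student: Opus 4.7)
The plan is a straightforward Neumann series expansion about the diagonal part of $H(\vec u^{(r_1)};\sigma)$. Split
\begin{align*}
R_Q H(\vec u^{(r_1)};\sigma) R_Q = R_Q D(\sigma) R_Q + R_Q B R_Q,\quad B = \varepsilon(\Delta\oplus\Delta) + \delta T_{\vec u^{(r_1)}}.
\end{align*}
By the inductive exponential decay of $\vec u^{(r_1)}$ in \textbf{(Hiii)} and the explicit convolutional form \eqref{toep} of $T_{\vec u^{(r_1)}}$, the kernel of $T_{\vec u^{(r_1)}}$ satisfies exactly the structural hypotheses imposed on $S$ in sect.~\ref{ldtsec}: it is supported on $\bm n = \bm n'$, Toeplitz in $\bm k$, and decays exponentially in $|\bm k - \bm k'|$ and $|\bm n|$. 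Hence there are constants $C_1, \gamma > 0$ independent of $r_1$ with $\|B\| \leq C_1(\varepsilon+\delta)$ and $|B((\bm j,\xi);(\bm j',\xi'))| \leq C_1(\varepsilon+\delta)\,e^{-\gamma|\bm j-\bm j'|}$.

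For each $(\bm k,\bm n,\xi) \in Q$ introduce the interval
\begin{align*}
J_{\xi,\bm k,\bm n} = \bigl\{\sigma \in \R : |\xi(\sigma + \bm k\cdot\bm\omega) + \mu_{\bm n}| \leq 4C_1(\varepsilon+\delta)\bigr\},
\end{align*}
of length $8C_1(\varepsilon+\delta)\sim \varepsilon+\delta$. There are at most $2|Q| \leq (2N_1+1)^{b+d} \leq N_1^C$ such intervals. For $\sigma$ outside their union, every diagonal entry of $R_Q D(\sigma) R_Q$ has absolute value exceeding $4C_1(\varepsilon+\delta)$, so $\|(R_Q D(\sigma) R_Q)^{-1}\| \leq [4C_1(\varepsilon+\delta)]^{-1}$ and $\|(R_Q D(\sigma) R_Q)^{-1} B\| \leq 1/4$. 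The Neumann series
\begin{align*}
G_Q(\sigma) = \sum_{m\geq 0} \bigl(-(R_Q D(\sigma) R_Q)^{-1} B\bigr)^m (R_Q D(\sigma) R_Q)^{-1}
\end{align*}
therefore converges geometrically, giving $\|G_Q(\sigma)\| \leq [2C_1(\varepsilon+\delta)]^{-1}$, while the standard path-counting argument in the expansion (using the exponential decay of $B$) yields $|G_Q(\sigma)((\bm j,\xi);(\bm j',\xi'))| \leq C\, e^{-\gamma'|\bm j-\bm j'|}$ for $\bm j\neq\bm j'$ with $\gamma'\geq \gamma/2$.

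It remains to check these bounds preclude $\sigma$ from being $Q$-bad in the sense of \eqref{GQ1}--\eqref{GQ2}. From $N > (\varepsilon+\delta)^{-1/(10^3b^2d^2\tau_{b+2}^2)}$ we get $\log N \gtrsim \log(1/(\varepsilon+\delta))$, and the choice $N_1 = (\log N)^{C'}$ with $C'\rho_2 > 1$ (compatible with the stronger conditions $C'\kappa_1\rho_2 > 3$ and $C'\rho_1 > 3$ invoked elsewhere in the proof of Lemma~\ref{lsclem}) gives $N_1^{\rho_2} \gg \log(1/(\varepsilon+\delta))$, hence $(\varepsilon+\delta)^{-1} \ll e^{N_1^{\rho_2}}$, so \eqref{GQ1} holds. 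For \eqref{GQ2}, the exponential rate $\gamma'$ is comfortably stronger than any fixed $c$ in the statement once $|\bm j - \bm j'| \geq N_1^{\rho_3}$ and $N_1$ is large. This yields $\Sigma_Q \subset \bigcup_{(\xi,\bm k,\bm n)\in Q} J_{\xi,\bm k,\bm n}$, a union of $\leq N_1^C$ intervals each of length $\sim \varepsilon+\delta$. The argument is essentially routine; the only bookkeeping point that requires care is the scale relation ensuring $e^{N_1^{\rho_2}}$ dominates the polynomial bound $(\varepsilon+\delta)^{-1}$ produced by the Neumann series, and no genuine obstacle is expected.
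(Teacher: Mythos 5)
Your proof follows essentially the same route as the paper: exclude $O(\varepsilon+\delta)$ neighborhoods of the singularities of the diagonal and invert via Neumann series on the complement, so that $Q$-badness forces $\sigma$ to lie in one of the excised intervals. One technical point is off: the Neumann series for $G_Q(\sigma)=\sum_{m\geq 0}(-(R_QD R_Q)^{-1}B)^m(R_QD R_Q)^{-1}$ has a single extra $D^{-1}$ factor, so the off-diagonal entries come with prefactor of order $(\varepsilon+\delta)^{-1}$, not a fixed constant $C$ as you claim (compare Lemma~\ref{inilem}, where the analogous bound carries the prefactor $2e^{2N^{\rho_1}}$). This does not break the argument, but it means verifying \eqref{GQ2} genuinely requires the inequality $(\varepsilon+\delta)^{-1}<e^{N_1^{\rho_1/2}}$ furnished by the hypothesis $C'\rho_1>3$, together with $\rho_1/2<\rho_3$, so that the prefactor is absorbed into $e^{-(\gamma'-c)|\bm j-\bm j'|}$ once $|\bm j-\bm j'|\geq N_1^{\rho_3}$; your invocation of $C'\rho_2>1$ for \eqref{GQ1} alone and the hand-wave that ``$\gamma'$ is comfortably stronger than $c$'' for \eqref{GQ2} elide exactly the step where the stated hypothesis is actually used.
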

\begin{proof}
Note that $N=M^{\tilde r+1}\geq M^{2r_\star+1}>(\varepsilon +\delta)^{-c_1}$ since $r>r_\star$. So we have  
\begin{align*}
N_1= (\log N)^{\frac{4}{\rho^4}}>c_1\log^{\frac{4}{\rho^4}}\frac{1}{\varepsilon+\delta}>\log^{\frac {2}{\rho^4}}\frac{1}{\varepsilon+\delta},
\end{align*}
  which implies 
\begin{align}\label{edn1}
(\varepsilon+\delta)^{-1}<e^{N_1^{\rho^4/2}}.
\end{align}
So we can define for each $\bm x=(\bm k, \bm n)\in Q$ and $\xi =\pm 1$ the set 
\begin{align*}
J_{\bm x,\xi}=\{\sigma\in\R:\ |\sigma+\bm k\cdot\bm \omega+\xi \mu_{\bm n}|\leq C (\varepsilon+\delta)\},
\end{align*}
where $C>1$ depends  only on $\|\Delta\|, \|T_{\vec u^{(r_1)}}\|.$
From \eqref{edn1} and the Neumann series argument (cf. Lemma \ref{lwlem}), we  have 
\begin{align*}
\Sigma_Q\subset \bigcup_{\bm x\in Q,\ \xi=\pm1} J_{\bm x,\xi}.
\end{align*}
This proves Lemma \ref{sgmlem}. 

\end{proof}

Now fix $I_0\in\mathcal{I}_{r_1}.$    Solving    the $Q$-equations  at $r=r_1$ leads to the graph $\Gamma_{r_1}$. 
Then $\tilde I=\Pi_{\bm \omega}(\Gamma_{r_1}\cap I_0)$ is an interval of size at most $ \varepsilon+\delta.$
For $\bm\omega\in \tilde I$, let $\tilde \Sigma$ be the set of $\sigma\in\R$ so that  either \eqref{GQ1} or \eqref{GQ2} fails  for some $Q\in (\bm 0, \bm n)+\mathcal{ER}_{\bm 0}(N_1)$ with $|\bm n|\leq 10N_1$.  Then by Lemma \ref{sgmlem}, $\tilde \Sigma$ can be covered by $N_1^C$ intervals of size $\sim (\varepsilon+\delta).$
Pick $J$ to be one of such intervals  and consider $\mathcal{K}:=\tilde I\times (\tilde \Sigma\cap J)\subset \tilde I\times J\subset \R^b\times \R.$
We will show that $\mathcal{K}$ is a semi-algebraic set of degree $\deg \mathcal{K}\leq N_1^CM^{C{\tilde r}_1^3}$ and measure ${\rm meas}(\mathcal{K})\leq C(\delta+\varepsilon)^b e^{-N_1^{\rho}}$. This measure bound follows directly from the LDT at scale $N_1$ and the Fubini's theorem. For the semi-algebraic description of $\mathcal{K}$, we need the following lemma. 
\begin{lem}[Tarski-Seidenberg Principle, cf. e.g., \cite{Bou07}]\label{tsp}
Denote by $(\bm x, \bm y)\in\mathbb{R}^{d_1+d_2}$ the product variable. If $X\subset\mathbb{R}^{d_1+d_2}$ is semi-algebraic of degree $B$, then its projections $\Pi_{\bm x}X\subset\mathbb{R}^{d_1}$ and
 $\Pi_{\bm y}X\subset\mathbb{R}^{d_2}$ are semi-algebraic of degree at most $B^{C}$, where $C=C(d_1,d_2)>0$.
\end{lem}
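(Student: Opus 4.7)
The plan is to deduce the projection bounds from effective quantifier elimination. Since $A\subset \R^{d_1+d_2}$ is semi-algebraic of degree $B$, it may be written as a Boolean combination of a family $\{P_i(\bm x,\bm y)\gtreqless 0\}_{i=1}^N$ with $N\le B$ and $\deg P_i\le B$. The projection $\Pi_{\bm x}A$ is described by the first-order formula
$$
\bm x\in\Pi_{\bm x}A\iff \exists \bm y\in\R^{d_2}:\ (\bm x,\bm y)\in A,
$$
so the task is to eliminate the existential quantifier while controlling the arithmetic complexity (number of polynomials and their degrees) of the resulting quantifier-free formula in $\bm x$.

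First, I would eliminate the $\bm y$-variables one at a time. Writing $\bm y=(y_1,\dots,y_{d_2})$, the idea is that after elimination of $y_{d_2}$ the truth value of the formula on a given $(\bm x,y_1,\dots,y_{d_2-1})$ is determined by the sign pattern of the $P_i$ viewed as univariate polynomials in $y_{d_2}$, together with their leading coefficients and their subresultants. The classical procedure (subresultant-based elimination, as formalized e.g.\ in Basu--Pollack--Roy's treatment) produces a new family $\{Q_j\}$ of polynomials in $(\bm x,y_1,\dots,y_{d_2-1})$ of cardinality and degree at most $(BN)^{c_1}$ for some absolute constant $c_1$, such that $\exists y_{d_2}:(\bm x,\bm y)\in A$ is equivalent to a Boolean combination of inequalities in the $Q_j$'s.

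Iterating this step $d_2$ times eliminates $\bm y$ entirely. Each step raises the complexity by a power bounded by $c_1$, so after $d_2$ iterations the resulting quantifier-free description of $\Pi_{\bm x}A$ involves at most $B^{c_1^{d_2}}$ polynomials of degree at most $B^{c_1^{d_2}}$ in $\bm x$; hence $\Pi_{\bm x}A$ is semi-algebraic of degree at most $B^{C}$ with $C=C(d_1,d_2)>0$ absorbing $c_1^{d_2}$. The analogous argument eliminating the $\bm x$-coordinates gives the bound for $\Pi_{\bm y}A$.

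The main technical obstacle is controlling the degree blow-up at each elimination step uniformly in the data: the coefficients of the $P_i$ as polynomials in the current distinguished variable are themselves polynomials in the remaining variables, and one must bound the degrees of the subresultant polynomials (and the auxiliary polynomials needed to record signs of leading coefficients) in terms of the input degrees. This is standard but bookkeeping-heavy; once carried out, the iterative bound $B\mapsto B^{c_1}$ yields the claim. For our application the crucial point is merely that the exponent $C$ is independent of $B$, depending only on the ambient dimensions $d_1,d_2$.
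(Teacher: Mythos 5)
The paper does not prove Lemma~\ref{tsp}; it states it and cites \cite{Bou07} (which in turn relies on the quantitative real algebraic geometry of Basu--Pollack--Roy). There is therefore no ``paper's own proof'' to compare against.

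Your sketch is the standard route — iterated single-variable effective quantifier elimination via subresultants/CAD — and the overall shape is right: one pass per coordinate of $\bm y$, each pass producing a new polynomial family whose size and degrees are controlled polynomially in those of the input, and the composition of $d_2$ such passes gives an exponent depending only on the ambient dimensions. Two points deserve a little more care than the sketch gives them. First, ``degree'' of a semi-algebraic set in Bourgain's convention counts \emph{both} the number of defining polynomials \emph{and} their degrees (a sum or a product); your bookkeeping implicitly tracks this by bounding $N\le B$ and $\deg P_i\le B$ separately, but the intermediate claim ``$\Pi_{\bm x}A$ is semi-algebraic of degree at most $B^{C}$'' should be explicit that both quantities remain $\le B^{C}$. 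Second, the constant $c_1$ in the per-step blowup is not truly absolute: it depends on the current number of variables (e.g.\ through the number of subresultant coefficients and auxiliary sign polynomials one must carry), so it is bounded by a function of $d_1+d_2$ rather than a universal constant. This is harmless — you still land at $C=C(d_1,d_2)$ — but stating it as an absolute constant is a small inaccuracy. With those two clarifications your argument is correct and gives precisely the quantitative Tarski--Seidenberg bound the paper invokes.
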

We define $X\subset I_0\times J$ to be the set of all $(\bm\omega, \bm a,\sigma)$ so that either \eqref{GQ1} or \eqref{GQ2} fails  for some $Q\in (\bm 0, \bm n)+\mathcal{ER}_{\bm 0}(N_1)$ with $|\bm n|\leq 10N_1$. Similar to  the proof of Lemma \ref{lsclem1}, we can regard $X$ as a semi-algebraic set of degree $N_1^C$.   From ({\bf Hiv})  and solving the $Q$-equations  for $r=r_1$, we know that $\Gamma_{r_1}\cap I_0$ is given by an  algebraic equation  in $\bm \omega, \bm a$ (cf. \eqref{algeq}) of degree $M^{C{\tilde r}_1^3}$. 
Note also that 
\begin{align*}
\mathcal{K}=\Pi_{\bm\omega, \sigma}\left(X\cap ((\Gamma_{r_1}\cap I_0)\times \R)\right), 
\end{align*}
which together with Lemma \ref{tsp} implies  $\deg \mathcal{K}\leq N_1^CM^{C{\tilde r}_1^3}.$

 Our next aim is to estimate 
 \begin{align}\label{hypest}
{\rm meas}\left(\bigcup_{N/2\leq |\bm k|\leq N}\left\{\bm\omega:\  (\bm\omega, \bm k\cdot\bm \omega)\in \mathcal{K}\right\}\right).
 \end{align}
At this stage, we need an important  projection lemma.
\begin{lem}[cf. e.g., \cite{Bou07}]\label{proj}
 Let $Y\subset[0,1]^{d=d_1+d_2}$ be a semi-algebraic set of degree $\deg  Y=B$ and $\mathrm{meas}(Y)\leq\eta$, where
$\log B\ll \log\frac{1}{\eta}.$
Denote by $(\bm x, \bm y)\in[0,1]^{d_1}\times[0,1]^{d_2}$ the product variable. Suppose
$ \eta^{\frac{1}{d}}\leq\epsilon.$
Then there is a decomposition of $Y$ as
\begin{align*}
Y=Y_1\cup Y_2
\end{align*}
with the following properties: The projection of $Y_1$ on $[0,1]^{d_1}$ has small measure
$$\mathrm{meas}(\Pi_{\bm x}Y_1)\leq {B}^{C(d)}\epsilon,$$
and $Y_2$ has the transversality property
\begin{align*}
\mathrm{meas}(\mathbb{H}\cap Y_2)\leq B^{C(d)}\epsilon^{-1}\eta^{\frac{1}{d}},
\end{align*}
where $\mathbb H$ is any $d_2$-dimensional hyperplane in $\R^d,$  s.t.,
$\max\limits_{1\leq j\leq d_1}|\Pi_{\mathbb H}(\bm e_j)|<{\epsilon}$
(we denote by $\bm e_1,\cdots,{\bm e}_{d_1}$ the $\bm x$-coordinate  standard   basis).
\end{lem}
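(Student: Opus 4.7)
The plan is to reduce to the Yomdin--Gromov algebraic lemma, which parameterizes a semi-algebraic set in $[0,1]^d$ of degree at most $B$ as a union of $N\le B^{C(d)}$ smooth images $\phi_j\colon [0,1]^{m_j}\to [0,1]^d$, $m_j\le d$, with uniform bounds $\|\phi_j\|_{C^k}\le 1$ (see \cite{Bou05,Bou07}). The decomposition $A=A_1\cup A_2$ will then be produced piece by piece according to the size of the Jacobian of the projection $\Pi_{\bm x}\circ\phi_j$.

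First I would apply the reparameterization, reducing the problem to analyzing each smooth piece $P_j=\phi_j([0,1]^{m_j})$ separately. Let $J_j$ denote the largest $d_1\times d_1$ minor of the Jacobian matrix $D(\Pi_{\bm x}\circ\phi_j)$. I split the pieces according to whether $\sup_{[0,1]^{m_j}}|J_j|\le \epsilon$ or $\sup|J_j|>\epsilon$. The ``flat'' pieces, together with any piece of parameter dimension $m_j<d_1$, are placed into $A_1$; the area formula then yields ${\rm meas}_{d_1}(\Pi_{\bm x}(P_j))\lesssim \epsilon$ per piece, and hence ${\rm meas}_{d_1}(\Pi_{\bm x} A_1)\le B^{C(d)}\epsilon$. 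The ``transverse'' pieces go into $A_2$; on such a piece the implicit function theorem locally writes $P_j$ as a graph $\bm y=g(\bm x,\bm t)$ over a $d_1$-dimensional slice of parameter space, with $|Dg|\lesssim\epsilon^{-1}$.

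For any hyperplane $\mathcal{L}$ satisfying the stated transversality hypothesis, $\mathcal{L}$ lies within angle $O(\epsilon)$ of the $\bm y$-subspace; together with the graph description this forces $\phi_j^{-1}(\mathcal{L})$ into an $O(\epsilon^{-1}h)$-thick neighborhood of an $(m_j-d_1)$-dimensional subvariety, where $h$ controls the $\bm x$-extent of $\mathcal{L}\cap P_j$. The global volume bound ${\rm meas}_d(A)\le\eta$, combined with a Fubini/slicing argument over the $d_1$ coordinate directions ``hidden'' from $\mathcal{L}$, forces $h\lesssim \eta^{1/d}$ in the extremal case; summing over pieces delivers ${\rm meas}_{d_2}(\mathcal{L}\cap A_2)\le B^{C(d)}\epsilon^{-1}\eta^{1/d}$, as claimed. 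The main obstacle is the Yomdin--Gromov input itself: obtaining a number of smooth pieces polynomial in $B$ with uniform $C^k$ control is the genuinely non-trivial step, and a naive cell decomposition would give bounds too weak to feed into the subsequent estimates. Subsidiary technical points are keeping the transversality bound uniform across all admissible $\mathcal{L}$ (automatic from the form of the graph estimate, since the dependence on $\mathcal{L}$ enters only through the $\epsilon$-angle bound), and using the hypothesis $\log B\ll \log(1/\eta)$ to guarantee that the combinatorial factor $B^{C(d)}$ does not swallow the measure gain $\eta^{1/d}$.
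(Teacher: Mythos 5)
The paper does not supply its own proof of this lemma: the statement is quoted from Bourgain~\cite{Bou07}, and the remark that follows points to Binyamini--Novikov~\cite{BN19} for a complete proof. At the level of raw ingredients your sketch is in the same spirit as those references --- the Yomdin--Gromov reparameterization is indeed the deepest input, and classifying smooth pieces by the Jacobian of $\Pi_{\bm x}\circ\phi_j$ is a reasonable starting point --- but the sketch contains a genuine gap at the heart of the estimate.

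The gap is the production of the $\eta^{1/d}$ factor. You assert that the volume bound plus Fubini ``forces $h\lesssim\eta^{1/d}$ in the extremal case,'' where $h$ is the $\bm x$-extent of $\mathcal{L}\cap P_j$, but this does not follow: $h$ is $O(\epsilon)$ automatically from the near-verticality of $\mathcal{L}$, and the global bound ${\rm meas}_d(A)\le\eta$ places no constraint on $h$ for an individual line --- a single admissible $\mathcal{L}$ can pass straight through the thick part of $A$. More fundamentally, your decomposition $A=A_1\cup A_2$ is defined purely by whether $\sup|J_j|$ exceeds $\epsilon$, and hence makes no reference to $\eta$; an $\eta$-independent dichotomy cannot, by itself, yield the $\eta$-dependent asymmetry in the conclusion (projection of $A_1$ controlled by $\epsilon$ alone, slices of $A_2$ controlled by $\epsilon^{-1}\eta^{1/d}$). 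Observe also that with the threshold set exactly at $\epsilon$, the transverse pieces carry a graph slope $\lesssim\epsilon^{-1}$ over the $\bm x$-window of width $\sim\epsilon$ swept out by a near-vertical $\mathcal{L}$; the slope-times-window product is borderline $O(1)$, so the slice is not automatically short without a further input, and that input must be $\eta$. The known argument feeds the measure bound into the decomposition itself --- roughly, a Chebyshev estimate on $\int|A_{\bm x}|\,d\bm x\le\eta$, discarding into $A_1$ the $\bm x$-columns over which the $\bm y$-fiber is atypically large --- and this is where $\eta^{1/d}$ actually enters. A secondary issue: you assign whole pieces to $A_1$ or $A_2$ according to $\sup_{[0,1]^{m_j}}|J_j|$, yet on a ``transverse'' piece $|J_j|$ may still vanish on a large subset where the implicit-function/graph description fails; the dichotomy has to be applied after a further semi-algebraic subdivision of each piece along the level set $\{|J_j|>\epsilon\}$, with attendant control of the degree.
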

\begin{rem} This lemma permits scaling.
It is based on the Yomdin-Gromov triangulation theorem \cite{Gro87}; for a complete proof, see \cite{BN19}. 
\end{rem}
Taking $\eta=C(\varepsilon+\delta)^be^{-N_1^{\rho}}, \epsilon =10/N$,  
we obtain  
$$C(\varepsilon+\delta)^be^{-\frac{1}{b+1} N_1^{\rho}}\leq C(\varepsilon+\delta)^be^{-\frac{1}{b+1}(\log N )^{4}}\ll \epsilon.$$
Using Lemma \ref{proj}, we have a decomposition 
	$$ \mathcal{K}=\mathcal{K}_1\cup\mathcal{K}_2,$$
	where ${\rm meas}\ (\Pi_{\bm \omega}\mathcal{K}_1)\leq C(\varepsilon+\delta)^{b+1}N_1^CM^{C{\tilde r}_1^3}M^{-{\tilde r}}\leq (\varepsilon+\delta)^{b+1}M^{-\frac {\tilde r}2}$ and the factor $(\varepsilon+\delta)^{b+1}$ comes from scaling when applying Lemma \ref{proj}. 
	Note however that $|\bm k|\geq N/2$ cannot ensure $\min_{1\leq \ell\leq b}|k_\ell|\geq N/2$.  So the hyperplane $\{(\bm\omega, \bm k\cdot\bm \omega)\}$  does not satisfy  the steepness condition. To address this issue,  we will  need to take into account of all possible directions  and apply Lemma  \ref{proj} for $b$ times, as done by Bourgain  (cf.  (3.26) of \cite{Bou07} and also (142) of \cite{LW22}). This then leads to an upper bound  $\
	C(\varepsilon+\delta)^{b+1}M^{-\frac{{\tilde r}+1}{C(b)}}$ on  \eqref{hypest}, where $C(b)>0$ only depends on $b$.    
	
	Taking into account of all $J$,  we have shown the existence of $\tilde I_1\subset \tilde I$ with ${\rm meas}_b(\tilde I_1)\leq C(\varepsilon+\delta)^{b+1}N_1^CM^{-\frac{{\tilde r}+1}{C(b)}}\leq C(\varepsilon+\delta)^{b+1}M^{-\frac{{\tilde r}+1}{C(b)}}$ so that for $\bm\omega\in \tilde I\setminus \tilde I_1$,  the estimates \eqref{tqr11} and \eqref{tqr12} hold  for all  $Q\in\mathcal{C}_2$.  Let $I_0$ range over $\mathcal{I}_{r_1}$. The total measure removed from $\Pi_{\bm\omega}\Gamma_{r_1}$ is at most 
	$ C(\varepsilon+\delta)^{b+1}M^{{\tilde r}_1^C}M^{-\frac{{\tilde r}+1}{C(b)}}\leq (\varepsilon+\delta)^{b+1}M^{-\frac{{\tilde r}+1}{C(b)}} $. Since \eqref{tqr11}--\eqref{tqr12} allow $O(e^{-N^2_1})$ perturbation of $(\bm\omega, \bm a)$ (again using Lemma \ref{lwlem}) and $\|\Gamma_r-\Gamma_{r_1}\|\lesssim \delta_{r_1}\ll e^{-N^2_1}$, we obtain a subset $\Gamma_r'\subset  \Gamma_r$ with $ {\rm meas}(\Pi_{\bm\omega}\Gamma_r')\leq (\varepsilon+\delta)^{b+1}M^{-\frac{{\tilde r}+1}{C(b)}}$ so that \eqref{tqr11} and \eqref{tqr12} hold on 
	\begin{align*}
	\bigcup_{I\in\mathcal{I}_{r_1}}(I\cap (\Gamma_r\setminus\Gamma_r')),
	\end{align*}
	and hence on 
	\begin{align*}
	\bigcup_{I\in\mathcal{I}_{r}}(I\cap (\Gamma_r\setminus\Gamma_r')).
	\end{align*}
	\end{itemize}
	{Next,  by the induction  hypothesis at step $r$, we have that \eqref{hivc1} and \eqref{hivc2} hold  for  $r$. Also, by (\textbf{Hii})  at step $r$, if $|\bm x-\bm x'|\leq (\tilde r+1)^{5C}\log M,$ then (we hide the dependence on $\xi, \xi', \bm\omega, \bm a$)
	\begin{align*}
	&\ \ \ 	|{T}_{\vec u^{(r)}} (\bm x, \bm x')- {T}_{\vec u^{(r-1)}}(\bm x, \bm x')| \\
		\notag    &\leq C\Vert \Delta_{\rm cor}\vec u^{(r)}\Vert\\
		\notag	&\leq C\delta_r \leq CM^{-(\tilde r+1)^{10C}} \\
		\notag	&\leq M^{-10(\tilde r+1)^C}e^{-c|\bm x-\bm x'|}.
	\end{align*}
If $|\bm x-\bm x'|>(\tilde r+1)^{5C}\log M$, by  (\textbf{Hiii})  at step $r$ and Lemma  \ref{powerf}, we have 
	\begin{align*}
		&\ \ \ |{T}_{\vec u^{(r)}}(\bm x, \bm x')-{T}_{\vec u^{(r-1)}}(\bm x, \bm x')| \\
		 \notag   &\leq C(1+|\bm x-\bm x'|)^{C}e^{-c|\bm x-\bm x'|}\\
		\notag	&\leq M^{-10(\tilde r+1)^C}e^{-(c-(\tilde r+1)^{-3C})|\bm x-\bm x'|}. 
	\end{align*}
	The above estimates  imply  that for all $\bm x, \bm x',$
	\begin{equation*}
		|{T}_{\vec u^{(r)}}(\bm x, \bm x')-{T}_{\vec u^{(r-1)}}(\bm x, \bm x')|\leq M^{-10(\tilde r+1)^C}e^{-(c-(\tilde r+1)^{-3C})|\bm x-\bm x'|}. 
	\end{equation*}
	Hence,  combining  \eqref{hivc1}--\eqref{hivc2} (at step $r$) and  Lemma \ref{lwlem},  it follows that for any $(\bm\omega,\bm a)\in \cup_{I\in \mathcal I_r}I$,  we  can replace $\vec u^{(r-1)}$ in \eqref{hivc1}--\eqref{hivc2}  with  $\vec u^{(r)}$. That is to say,
	\begin{align}\label{CB3161}
	\|H_{M^{\tilde r}}^{-1}(\vec u^{(r)})\|\leq 2M^{{\tilde r}^C},
	\end{align}
	and for $|\bm x-\bm x'|>{\tilde r}^C,$
	\begin{equation}\label{CB3162}
			 |H_{M^{\tilde r}}^{-1}(\vec u^{(r)})(\bm x, \bm x')| \leq e^{-(c-(\tilde r+1)^{-3C})|\bm x-\bm x'|}.
	\end{equation}
	Assume $|(\bm \omega, \bm a)-(\bm \omega', \bm a')|\leq M^{-({\tilde r}+1)^{10C}}$. Similar to the above proof, we have for all $\bm x, \bm x',$
	\begin{align*}
		&|{H}(\vec u^{(r)}; (\bm \omega, \bm a))(\bm x, \bm x')-{H}(\vec u^{(r)}; (\bm \omega', \bm a'))(\bm x, \bm x')|\\
		\leq& M^{-10(\tilde r+1)^C}e^{-(c-(\tilde r+1)^{-3C})|\bm x-\bm x'|}. 
	\end{align*}
Using  Lemma \ref{lwlem} again, we have that  the  estimates \eqref{CB3161}--\eqref{CB3162}  and \eqref{tqr11}--\eqref{tqr12} (replacing $\vec u^{(r_1)}$ with $\vec u^{(r)}$) allow $O(M^{-({\tilde r}+1)^{10C}})$ perturbation of $(\bm\omega, \bm a)$.  }
	Combining conclusions in the above {\bf Case 1}--{\bf Case 2} and the  resolvent identity of \cite{HSSY} (cf. Lemma 3.6) gives a collection $\mathcal{I}_{r+1}$ of intervals of size $M^{-({\tilde r}+1)^{10C}}$
so that for $I\in\mathcal{I}_{r+1}$, 	
		\begin{align*}
\|H_{N}^{-1}(\vec u^{(r)})\|&\leq M^{({\tilde r}+1)^C},\\
|H_{N}^{-1}(\vec u^{(r)})((\bm x,\xi); (\bm x',\xi'))|&\leq e^{-c_{r+1}|\bm x-\bm x'|}\  {\rm for}\ |\bm x-\bm x'|>({\tilde r}+1)^{C},
	\end{align*}
which concludes ({\bf Hiv, c}) at scale $r+1.$ We remark that in the above   off-diagonal exponential decay estimates, $c_{r+1}\geq c_r-(\log M)^C({\tilde r}+1)^{-C}$, which implies  $\inf_{r}c_r>0$. This explains why we need the sublinear distant off-diagonal decay  in LDT.

Finally, we have 
\begin{align*}
&\ \ \ {\rm meas}\left(\Pi_{\bm\omega}(\Gamma_r\cap(\bigcup_{I'\in\mathcal{I}_{r}}I'\setminus\bigcup_{I\in\mathcal{I}_{r+1}}I))\right)\\
&\leq (\varepsilon+\delta)^{b+1}M^{-\frac{{\tilde r}+1}{C(b)}}+e^{-\frac12 N_1^{\rho^4}}<(\varepsilon+\delta)^{b+1}M^{-\frac{{\tilde r}+1}{C(b)}}.
\end{align*}
Recalling that $\bm\omega\to \bm a$ is a $C^1$ diffeomorphism and $\det (\frac{\partial \bm\omega}{\partial \bm a})\sim \delta^{b}$, we obtain since $\varepsilon\leq \delta$  that 
\begin{align*}
{\rm meas}\left(\Pi_{\bm a}(\Gamma_r\cap(\bigcup_{I'\in\mathcal{I}_{r}}I'\setminus\bigcup_{I\in\mathcal{I}_{r+1}}I))\right)\leq M^{-\frac{{\tilde r}+1}{C(b)}},
\end{align*}
	which yields ({\bf Hiv, d}) at  step $r+1.$
	\hfill $\square$
	
	
	
	\begin{proof}[Proof of Theorem \ref{mthm}]
	The proof of Theorem \ref{mthm} is a direct corollary of the Inductive Theorem. We refer to \cite{BW08,LW22} and sects.~V and VI of  \cite {KLW23} for further details. 
	\end{proof}
		\section*{Acknowledgements}

Y. Shi was  supported  by the National Key R\&D Program (2021YFA1001600) and  the  NSFC  (12522110). 
W.-M. Wang acknowledges support from the
CY Initiative of Excellence, ``Investissements d'Avenir" Grant No. ANR-16-IDEX-0008. The authors would like to thank the handling editor and the anonymous reviewers  for valuable suggestions.

\appendix

\section {Diophantine estimates }

Below we provide Diophantine estimates when $V(\bm \theta)$ are (arbitrary) trigonometric polynomials. These estimates hold on a large set in $(\bm\alpha, \bm\theta)$.
The main idea is to utilize an appropriate generalized Wronskian approach. To our knowledge, these estimates did not appear to be known in the literature before. 
Given the relation between analytic functions (hence trigonometric polynomials) and generalized Wronskians, this approach also seems natural in hindsight.

To our knowledge, using transversality properties to make Diophantine approximations on manifolds was first developed by Pyartli \cite{Pya69}. This method was later 
	introduced into the study of KAM theory  \cite{XYQ97,Eli02,Bam03}. It turns out that this idea also plays a key role in the Craig-Wayne-Bourgain type argument used in the present paper.
	In addition, we can handle  Diophantine estimates on general quasi-periodic polynomials on $[0,1]^d$. The main difference between \cite{XYQ97,Eli02, Bam03} and the present work is that we have no   {\it prior } transversality estimate,  which requires much additional effort.

Recall that   
\begin{align*}
V(\bm\theta)=\sum_{\bm \ell \in \Gamma_K}v_{\bm\ell}\cos2\pi(\bm \ell  \cdot\bm\theta)\ {\rm with}\  \bm \theta\in [0,1]^d, 
	\end{align*}
	where  $\Gamma_K\subset [-K, K]^d\setminus\{\bm 0\}$ ($K\in \N$)  is {\it maximal} so that   
	\begin{equation}\label{Gammk}
	\begin{aligned}
	&\text {if }\bm \ell\in\Gamma_K, {\rm then},   \ell_s\neq 0, \forall\  1\leq s\leq d; \\
	&\text {and if }\ \bm\ell,\bm\ell'\in\Gamma_K, \ {\rm then}\ \bm\ell+\bm\ell'\neq \bm 0, 
	\end{aligned}
	\end{equation}
and 
	\begin{align*}
	\bm v=(v_{\bm \ell})_{\bm \ell\in\Gamma_K}\in \R^{\#\Gamma_K}\setminus\{\bm 0\}.
	\end{align*}
We also assume $V(\bm \theta)$ is {\it non-degenerate}.

To handle $\mu_{\bm n}=V(\bm n\bm \alpha+\bm\theta)$, we need  the following  \L ojasiewicz type Lemma.

\begin{lem}[cf. e.g., Lemma 4.2  in  \cite{JLS20}]\label{loj} There exist some  constants $C_V>0, 0<c_V<1$ depending only on $V$ so that, for any $\eta>0,$ one has 
$${\rm meas}\left(\left\{\bm \theta\in[0,1]^d:\ |V(\bm \theta)|\leq \eta\right\}\right)\leq C_V\eta^{c_V}.$$
\end{lem}


Next, we fix 
$$\bm n_1',\cdots, \bm n_D'\in \Z^d$$
to be $D$ ($D\geq 2$)  distinct lattice points. 
We are mainly concerned with Diophantine estimates for 
\begin{align*}
\bm \omega'=(\omega_1',\cdots,\omega_D'),\ \omega_s'=V({\bm\theta+\bm n_{s}'\bm\alpha})\ (1\leq s\leq D).
\end{align*}

Throughout this Appendix,  all constants $C,c>0$ are assumed  to (depend only on $d, K, D, V$)  be  independent on  $\bm n_s'$ ($1\leq s\leq D$).  

Our main result  on  the Diophantine estimates  is
\begin{thm}\label{nlsthm}
Let   $0<\eta<1$.  We have for  $R=R_D=D\cdot(\#\Gamma_K)=D\frac{(2K+1)^d-1}{2}$, 
\begin{align*}
{\rm meas}({\rm DC}_{\rm nls}(\eta))\geq 1-C\cdot (\max_{1\leq s \leq D}|\bm n_s'|)^{5dR^2}\eta^{\frac{1}{10dR^2}},
\end{align*}
where  $C=C(d,K,D, V)>0$ and 
\begin{align*} 
\nonumber&\ \ \ {\rm DC}_{\rm nls}(\eta)\\
&=\left\{(\bm \alpha, \bm\theta)\in[0,1]^{2d}:\ |\bm k'\cdot\bm\omega'|\geq \frac{\eta}{|\bm k' |^{4R^2}}\ {\rm for}\ \forall\ \bm k'\in \Z^D\setminus\{\bm 0\}\right\}. 
\end{align*}
We also have  for any $\bm k'\in\Z^D\setminus\{\bm0\},$
\begin{align*}
{\rm meas}(\{(\bm \alpha, \bm\theta)\in[0,1]^{2d}:\ |\bm k'\cdot\bm\omega'|\leq \eta \})\leq C\cdot (\max_{1\leq s \leq D}|\bm n_s'|)^{5dR^2}\eta^{\frac{1}{10dR^2}}. 
\end{align*}
\end{thm}
\begin{rem}
In contrast, if $\bm \omega'$ were a free parameter varying in $[0,1]^D$,  we would have 
\begin{align} 
\nonumber&{\rm meas}\left(\left\{\bm \omega'\in[0,1]^{D}:\ |\bm k'\cdot\bm\omega'|\geq \frac{\eta}{|\bm k' |^{4R^2}}\ {\rm for}\ \forall\ \bm k'\in \Z^D\setminus\{\bm 0\}\right\}\right)= 1-O(\eta).
\end{align}
So the absence of independence of the coordinates of $\bm \omega'$ may lead to the corresponding measure bound  (from $1-O(\eta)$ to)  $1-O(\eta^{\frac{1}{10dR^2}}).$
\end{rem}

Recall that  we have fixed in Theorem \ref{mthm}, 
\begin{align*}
\bm n_1,\cdots, \bm n_b\in\Z^d,\  \bm\omega^{(0)}=(V(\bm n_1\bm \alpha+\bm\theta),\cdots,  V(\bm n_b\bm \alpha+\bm \theta)),  
\end{align*}
and for arbitrary $\bm n$, we have  $\mu_{\bm n}=V(\bm n\bm \alpha+\bm\theta).$

{As a consequence of Lemma \ref{loj} and Theorem \ref{nlsthm}, we obtain 
\begin{cor}\label{dccor}
There exist some $0<c_1=c_1(b,d,K,V)<\frac{1}{100b}, C_1=C_1(b,d, K, V)>1$ such that, if  $0<\varepsilon+\delta\leq \delta_0(b,d,K,V,\max\limits_{1\leq\ell\leq b}|\bm n_\ell|)\ll1$, then there is some $\mathcal M \subset [0, 1]^{2d}$
satisfying  $${\rm meas}([0,1]^{2d}\setminus\mathcal M)\leq \log^{-1}\frac{1}{\varepsilon+\delta}$$
so that the following properties hold true for $(\bm \alpha, \bm \theta)\in\mathcal M$ and  $L_{\varepsilon,\delta}=100(\varepsilon+\delta)^{-c_1}$.
\begin{itemize}
\item[(i).]  For any $\bm n\neq \bm n'$  satisfying  $|\bm n|, |\bm n'|\leq L_{\varepsilon,\delta}$, 
$$|\mu_{\bm n}-\mu_{\bm n'}|\geq (\varepsilon+\delta)^{\frac{1}{8b}}.$$
\item[(ii).] For any $\bm k\in\Z^b$ satisfying $0<|\bm k|\leq 2L_{\varepsilon,\delta},$
$$|\bm k\cdot \bm\omega^{(0)}|\geq (\varepsilon+\delta)^{\frac{1}{8b}}.$$
\item[(iii).] For any $\log \frac{1}{\varepsilon+\delta}\leq L\leq L_{\varepsilon,\delta}$ and all $(\bm k,\bm n)\in \Lambda_{L}\setminus\mathcal S,$
$$\min_{\xi=\pm1}|\xi \bm k\cdot \bm\omega^{(0)}+\mu_{\bm n}|\geq  L^{-C_1}.$$
\item[(iv).] For any $(\bm k, \bm n, \bm n')\in\Z^b\times\Z^d\times \Z^d$ satisfying $|\bm k|\leq 2L_{\varepsilon,\delta}, |(\bm n, \bm n')|\leq L_{\varepsilon,\delta}$ and $\bm k\cdot\bm\omega^{(0)}+\mu_{\bm n}-\mu_{\bm n'}\not\equiv 0,$
$$|\bm k\cdot \bm\omega^{(0)}+\mu_{\bm n}-\mu_{\bm n'}|\geq  (\varepsilon+\delta)^{\frac{1}{8b}}.$$
\end{itemize}
\end{cor}

\begin{proof}
It suffices to apply Lemma \ref{loj}  (i.e., $D=1$) and Theorem \ref{nlsthm} with $\eta=(\varepsilon+\delta)^{\frac{1}{8b}}, L^{-C_1}$ and $D=2, b+1, b+2$. Without loss of generality, we only deal with (iii) and (iv) since   (i) and (iv)  can be handled similarly.

(iii).  Denote by $\mathcal{M}_{{\rm iii}}$ the set of $(\bm \alpha, \bm \theta)\in[0,1]^{2d}$ so that the conclusion of (iii) holds true.  \\
For the case of $\bm k=\bm 0$, applying Lemma \ref{loj} yields for all $\bm \alpha\in[0,1]^d,$
\begin{align*}
&\ \ \ {\rm meas} \left(\bigcup_{\log \frac{1}{\varepsilon+\delta}\leq L\leq L_{\varepsilon,\delta}}\bigcup_{|\bm n|\leq L}\left\{\bm \theta\in[0,1]^{d}:\ |\mu_{\bm n}=V(\bm n\bm\alpha+\bm\theta)|\leq L^{-C_1}\right\}\right)\\
&\leq \sum_{\log \frac{1}{\varepsilon+\delta}\leq L\leq L_{\varepsilon,\delta}}\sum_{|\bm n|\leq L} C_VL^{-C_1c_V} \\
&\leq \sum_{\log \frac{1}{\varepsilon+\delta}\leq L\leq L_{\varepsilon,\delta}} C(V,d)L^{-C_1c_V+d}\\
&\leq \log^{-3}\frac{1}{\varepsilon+\delta} \ ({\rm choosing}\  C_1\  {\rm\  so\  that}\  C_1c_V-d>10),
\end{align*}
where in the last inequality we assume $0<\varepsilon+\delta\leq \delta_0(V,d)\ll1.$  
\ \\ 
For the case of $\bm k\neq\bm 0$ and $\bm n\notin\{\bm n_1,\cdots, \bm n_b\}$, we apply Theorem \ref{nlsthm} with $D=b+1, \eta=L^{-C_1}$ and $\bm k'=(\pm\bm k, 1)$ to get for 
$$ \mathcal M_{\rm iii}^{(1)}(L):=  \bigcup_{|(\bm k, \bm n)|\leq L,\bm n\notin\{\bm n_1,\cdots,\bm n_b\}}\left\{(\bm \alpha, \bm \theta)\in[0,1]^{2d}:\ \min_{\xi=\pm1}|\xi \bm k\cdot \bm\omega^{(0)}+\mu_{\bm n}|\leq  L^{-C_1}\right\}, $$
the following 
\begin{align*}
&\ \ \ {\rm meas} \left(\bigcup_{\log \frac{1}{\varepsilon+\delta}\leq L\leq L_{\varepsilon,\delta}}\mathcal M_{\rm iii}^{(1)}(L)\right)\\
&\leq \sum_{\log \frac{1}{\varepsilon+\delta}\leq L\leq L_{\varepsilon,\delta}} C(b,d,K,V)L^{2d}L^{5dR^2_{b+1}}L^{-\frac{C_1}{10dR^2_{b+1}}}\ ({\rm assume}\  L>\max\limits_{1\leq\ell\leq b}|\bm n_\ell|) \\
&\leq \sum_{\log \frac{1}{\varepsilon+\delta}\leq L\leq L_{\varepsilon,\delta}}C(b,d,K,V) L^{-10} \  ({\rm choosing}\  C_1\geq C({b,d, K})\gg1)\\
&\leq \log^{-3}\frac{1}{\varepsilon+\delta}, 
\end{align*}
where in the last inequality we assume $0<\varepsilon+\delta\leq \delta_0(b,d, K, V,\max\limits_{1\leq\ell\leq b}|\bm n_\ell|)\ll1.$ 
The case of $\bm k\neq\bm 0$ and $\bm n\in\{\bm n_1,\cdots, \bm n_b\}$ can be handled similarly and we omit the details.  Since there are only 3 cases to deal with, $C_1=C_1(b,d,K,V)>0$ can be well-defined. 
\ \\ 
By combining all the above estimates, we obtain  the desired measure bound on $\mathcal{M}_{{\rm iii}}$, namely, 
$${\rm meas}([0,1]^{2d}\setminus \mathcal M_{\rm iii})\leq \log ^{-2}\frac{1}{\varepsilon+\delta}.$$

(iv).   If $x,y\in\R$, we denote $x\wedge y=\min\{x, y\}$ and $x\vee y=\max\{x, y\}$. Denote by $\mathcal{M}_{{\rm iv}}$ the set of $(\bm \alpha, \bm \theta)\in[0,1]^{2d}$ so that the conclusion of (iv) holds true.  First we consider  the case of  $(\bm k, \bm n, \bm n')\in \mathcal T_1$ with  $\mathcal T_1$  the set of  all $(\bm k, \bm n, \bm n')$ satisfying:  $0<|\bm k|\leq 2L_{\varepsilon,\delta}, $   $|(\bm n, \bm n')|\leq L_{\varepsilon,\delta}$, $ \bm n\neq \bm n'$,  ${\rm both}\ \bm n\ {\rm and}\ \bm n'\ \notin \{\bm n_1,\cdots, \bm n_b\}.$ We let 
\begin{align*}
 &\ \ \  \mathcal M_{\rm iv}^{(1)}:= \bigcup_{(\bm k, \bm n, \bm n')\in \mathcal T_1}\left\{(\bm \alpha, \bm \theta)\in[0,1]^{2d}:\ \min_{\xi=\pm1}|\bm k\cdot \bm\omega^{(0)}+\mu_{\bm n}-\mu_{\bm n'}|\leq  (\varepsilon+\delta)^{\frac{1}{8b}}\right\}. 
\end{align*}
So we can apply Theorem \ref{nlsthm} with $D=b+2,\eta=(\varepsilon+\delta)^{\frac{1}{8b}}$ and $\bm k'=(\bm k, 1, -1)$ to get 
\begin{align*}
{\rm meas}\left( \mathcal M_{\rm iv}^{(1)}\right)&\leq \sum_{(\bm k, \bm n, \bm n')\in \mathcal T_1}C(b,d,K,V) ((\max\limits_{1\leq \ell\leq b}|\bm n_\ell|)\vee |\bm n| \vee |\bm n'|)^{5dR^2_{b+2}}(\varepsilon+\delta)^{\frac{1}{80bdR^2_{b+2}}}\\
&\leq C(b,d,K,V)L_{\varepsilon,\delta}^{b+2d+5dR^2_{b+1}}(\varepsilon+\delta)^{\frac{1}{80bdR^2_{b+2}}}\ ({\rm assume}\  L_{\varepsilon,\delta}>\max\limits_{1\leq\ell\leq b}|\bm n_\ell|) \\
&\leq C(b,d,K,V){(\varepsilon+\delta)}^{-c_1(b+2d+5dR^2_{b+1})+\frac{1}{80bdR^2_{b+2}}}\\
&\leq {(\varepsilon+\delta)}^{\frac{1}{100bdR^2_{b+2}}}\\
 &({\rm choosing}\ 0<c_1\leq \frac{1}{(b+2d+5dR^2_{b+1})720bdR^2_{b+2}}), 
\end{align*}
where in the last inequality we assume $0<\varepsilon+\delta\leq \delta_0(b,d, K, V,\max\limits_{1\leq\ell\leq b}|\bm n_\ell|)\ll1.$  \\
Next, we consider the case  of $(\bm k, \bm n,\bm n')$ with $\bm k\cdot\bm\omega^{(0)}+\mu_{\bm n}\equiv 0$. In this case, applying Lemma \ref{loj} yields for all $\bm \alpha\in[0,1]^d,$
\begin{align*}
&\ \ \ {\rm meas} \left(\bigcup_{|\bm n'|\leq L_{\varepsilon,\delta}}\left\{\bm \theta\in[0,1]^{d}:\ |\mu_{\bm n'}=V(\bm n'\bm\alpha+\bm\theta)|\leq (\varepsilon+\delta)^{\frac{1}{8b}}\right\}\right)\\
&\leq\sum_{|\bm n'|\leq L_{\varepsilon,\delta}} C_V (\varepsilon+\delta)^{\frac{c_V}{8b}} \\
&\leq  C(V,d) (\varepsilon+\delta)^{-c_1d+\frac{c_V}{8b}}\\
&\leq (\varepsilon+\delta)^{\frac{c_V}{10b}} \ ({\rm choosing}\  0<c_1\leq \frac{c_V}{72db}),
\end{align*}
where in the last inequality we assume $0<\varepsilon+\delta\leq \delta_0(b,d,V)\ll1.$\\ 
Other cases are easier  to handle, and we omit the details. By taking account of all the above estimates, we have 
$${\rm meas}([0,1]^{2d}\setminus \mathcal M_{\rm iv})\leq (\varepsilon+\delta)^{\frac{c_V}{1000bdR^2_{b+2}}}.$$
Note that since there are only finitely many  (say, at most 10)  cases to deal with, $c_1=c_1(b,d, K, V)>0$ can be well-defined. 
\end{proof}

}

  


	\subsection{Some useful lemmas}
	In this section,  we will introduce some useful lemmas. 
	
		\begin{lem}[\cite{KM98}]\label{km98}
	Let $I\subset \R$ be an interval of finite length (i.e., $0<|I|<\infty$) and $k\geq 1.$  If  $f\in C^{k}(I;\R)$ satisfies 
	\begin{align}\label{tv}
	\inf_{x\in I}|\frac{d^k}{dx^k} f(x)|\geq A>0,
	\end{align}
	then for all $\varepsilon>0,$
	\begin{align}\label{mea}
	{\rm meas}(\{x\in I:\ |f(x)|\leq \varepsilon\})\leq  \zeta_k( \frac\varepsilon A)^{\frac1k},
	\end{align}
where 
$\zeta_k=k(k+1)((k+1)!)^{\frac1k}.$
\end{lem}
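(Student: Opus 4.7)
The bound is classical and proceeds via a divided-difference argument. Set $E=\{x\in I:\,|f(x)|\leq\varepsilon\}$ and $\lambda=\mathrm{meas}(E)$; the goal is to show $\lambda\leq\zeta_k(\varepsilon/A)^{1/k}$.

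First, I would select $k+1$ well-spaced points in $\overline{E}$. The cumulative distribution $\Phi(t)=\mathrm{meas}(E\cap(-\infty,t])$ is continuous and nondecreasing from $0$ to $\lambda$. By the intermediate value theorem there exist $x_0<x_1<\cdots<x_k$ in $\overline{E}$ with $\Phi(x_j)=(2j+1)\lambda/(2(k+1))$. Between two consecutive such ``$E$-medians'' the function $\Phi$ increases by $\lambda/(k+1)$, so $x_{j+1}-x_j\geq\lambda/(k+1)$, and inductively $|x_j-x_i|\geq|j-i|\lambda/(k+1)$. Continuity of $f$ ensures $|f(x_j)|\leq\varepsilon$ for each $j$.

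Next, I would invoke the mean-value form of the $k$-th divided difference: there exists $\xi\in(x_0,x_k)\subset I$ with
\[
\sum_{j=0}^{k}\frac{f(x_j)}{\prod_{i\neq j}(x_j-x_i)}=\frac{f^{(k)}(\xi)}{k!}.
\]
The right-hand side has absolute value at least $A/k!$ by hypothesis, while the left-hand side is estimated termwise via $|f(x_j)|\leq\varepsilon$ together with
\[
\prod_{i\neq j}|x_j-x_i|\geq\left(\frac{\lambda}{k+1}\right)^{k} j!(k-j)!,
\]
and $\sum_{j=0}^{k}\binom{k}{j}=2^{k}$. This yields $A/k!\leq\varepsilon(k+1)^{k}2^{k}/(k!\,\lambda^{k})$, hence $\lambda\leq 2(k+1)(\varepsilon/A)^{1/k}$. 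This is bounded by $\zeta_k(\varepsilon/A)^{1/k}$ since $2(k+1)\leq k(k+1)[(k+1)!]^{1/k}$ for every $k\geq 1$ (equality at $k=1$).

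The argument presents no substantive obstacle: the only mildly delicate point is selecting the ``medians'' $x_j$ in $\overline{E}$, which is immediate from the continuity of $\Phi$, and invoking the generalized mean-value theorem for divided differences, itself a standard consequence of iterated Rolle's theorem applied to the polynomial interpolation error of $f$ at $x_0,\ldots,x_k$.
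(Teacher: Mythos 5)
Your proof is correct, and it recovers the statement with a sharper constant: you obtain $\mathrm{meas}(\{|f|\leq\varepsilon\})\leq 2(k+1)(\varepsilon/A)^{1/k}$, which dominates $\zeta_k$ only at $k=1$ (where they coincide) and is strictly better for $k\geq 2$.

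The core tool — Lagrange interpolation at $k+1$ points plus iterated Rolle, packaged as the mean-value theorem for divided differences — is the same as in the paper. Where you genuinely diverge is the preparatory step of finding $k+1$ well-spaced points of $\overline{E}$. The paper first argues (via Rolle applied to $f^{(k)}\neq 0$) that $E=\{|f|\leq\varepsilon\}$ is a union of at most $k+1$ intervals, takes the longest one $I_1$ so that $\mathrm{meas}(E)\leq(k+1)|I_1|$, and divides $I_1$ into $k$ equal parts; this requires the structural fact about $E$, and the paper then bounds $\prod_{i\neq j}|x_j-x_i|$ crudely (dropping the factor $j!(k-j)!$), which is what produces the cumbersome $[(k+1)!]^{1/k}$ in $\zeta_k$. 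Your argument instead uses the CDF $\Phi(t)=\mathrm{meas}(E\cap(-\infty,t])$ to pick $E$-medians, which sidesteps any structural decomposition of $E$ entirely and applies verbatim even when one only knows $E$ is measurable; you also keep the binomial factors and exploit $\sum_j\binom{k}{j}=2^k$, yielding the cleaner constant $2(k+1)$. Both proofs are valid; yours is somewhat more self-contained and quantitatively tighter, while the paper's is the verbatim argument from Kleinbock--Margulis.

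One small point worth making explicit in your write-up: the medians $x_j$ you select lie in $\overline{E}$ (not necessarily in $E$), but continuity of $f$ gives $|f(x_j)|\leq\varepsilon$ there as well, which is exactly what the divided-difference bound needs — you note this, and it is correct.
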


	\begin{proof}
	The following  proof is taken from Kleinbock-Margulis \cite{KM98}.
	
	From \eqref{tv} and the Rolle's theorem,  the equation $\frac{d}{dx}f(x)=0$ has at most $k$ zeros on $I$. This then implies that  the set 
	$\{x\in I:\ |f(x)|\leq \varepsilon\}$ consists of at most $k+1$ intervals.  Denote by $I_1$ one of those subintervals  having maximal length. So we get 
	\begin{align*}
	{\rm meas}(\{x\in I:\ |f(x)|\leq \varepsilon\})\leq  (k+1)|I_1|.
		\end{align*} 
	
	To prove \eqref{mea}, it remains to estimate $|I_1|>0$.  We divide $I_1$ into $k$ equal parts by points $x_1,\cdots, x_{k+1}.$  Let $P(x)$ be the Lagrange polynomial of $f$ on points $x_1,\cdots, x_{k+1},$ namely, 
	\begin{align*}
	P(x)=\sum_{i=1}^{k+1}f(x_i)\frac{\prod_{1\leq j\neq i\leq k+1} (x-x_i)}{\prod_{1\leq j\neq i\leq k+1} (x_j-x_i)}.
	\end{align*}
	By applying the Rolle's theorem $k$  times and since $f(x_i)=P(x_i)$ ($1\leq i \leq k+1$),  we can find $x_0\in I_1$ so that 
	\begin{align*}
	\frac{d^k}{dx^k} f(x_0)=\frac{d^k}{dx^k} P(x_0)=k!\sum_{i=1}^{k+1}\frac{f(x_i)}{\prod_{1\leq j\neq i\leq k+1}(x_j-x_i)},
	\end{align*}
	which together with \eqref{tv} yields
	\begin{align*}
	A\leq k! \sum_{i=1}^{k+1} \frac{\varepsilon }{|I_1|^k k^{-k}}.
	\end{align*}
This implies 
\begin{align*}
|I_1|\leq k(k+1)!)^{\frac 1k} (\frac\varepsilon A)^{\frac 1k}.
\end{align*}
We have finished the proof. 
	\end{proof}

In the following,  we turn to the analysis of multi-variable functions.  We first introduce the notations.
\begin{itemize} 
\item For any  ${\bm x}=(x_1,\cdots,x_d)$ and any $d\geq 1$, let 
$$|{\bm x}|_2=\sqrt{\sum_{i=1}^d|x_i|^2},\ |\bm x|_1=\sum_{i=1}^d|x_i|. $$

\item Let $\bm a=(a_1,\cdots,a_d), \bm b=(b_1,\cdots,b_d)$.  Define the $d$-dimensional interval to be
\begin{align}\label{hint}
I=I_{\bm a, \bm b}=\prod_{i=1}^d[a_i,b_i]\subset\R^d.
\end{align}
  Denote 
$$|\bm a\vee \bm b|=\sum_{i=1}^d \max(|a_i|, |b_i|).$$ 

\item 
For  ${\bm \beta}=(\beta_1,\cdots,\beta_d)\in\R^d\setminus\{\bm 0\}$,  define 
\begin{align*}
d_{\bm \beta}:=\sum_{i=1}^d\beta_i{\partial}_i,\ \partial_i:=\frac{\partial}{\partial x_i}.
\end{align*}
We  also write for $\ell \geq 1, $
$$d_{\bm \beta}^\ell=d_{\bm\beta}\cdots d_{\bm\beta}\ (\ell-{\rm times}).$$
Denote 
$$\nabla=(\partial_1,\cdots,\partial_d).$$
\item 
For a multi-index ${\bm \gamma}=(\gamma_1,\cdots,\gamma_d)\in \N^d$, denote 
\begin{align*}
|\bm \gamma|=\sum_{i=1}^d\gamma_i,\  \partial^{\bm \gamma}=\partial^{\gamma_1}_1\cdots\partial^{\gamma_d}_d.
\end{align*}
\end{itemize}
We have 
\begin{lem}\label{lajlem}
Fix $k\in \N$. Let $I=I_{\bm a, \bm b}\subset\R^d$ be defined by \eqref{hint}. Assume that  the function $f\in C^{k+1}(I;\R)$ satisfies for some $A>0,$
\begin{align}\label{ass1}
\inf_{\bm x\in I}\sup_{1\leq \ell \leq k}\left|{d_{\bm \beta}^\ell}f(\bm x)\right|\geq A. 
\end{align}
Let 
\begin{align*}
\|f\|_{k+1}=\sup_{\bm x\in I}\sup_{1\leq |\bm \gamma|\leq k+1}\left| {\partial^{\bm \gamma}}f(\bm x)\right|<\infty. 
\end{align*}
Then for $0<\varepsilon<1,$
\begin{align}\label{keye}
\nonumber&\ \ \ {\rm meas}(\{\bm x\in I:\ |f(\bm x)|\leq \varepsilon\})\\
&\leq C(\bm \beta,k,d) (\|f\|_{k+1}|\bm b-\bm a|_2+1)^d {|\bm a\vee\bm b|}^{d-1}A^{-d}(A\wedge 1)^{-1}{\varepsilon}^{\frac{1}{k}},
\end{align}
where $C=C(\bm \beta, k,d)>0$ depends only on $\bm\beta, k,d $ (but not on $f$).
\end{lem}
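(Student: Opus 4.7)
The strategy is to reduce to the one-dimensional Lemma~\ref{km98} by slicing $I$ along lines parallel to $\bm\beta$, using Fubini, and inserting a preliminary partition step on each line to handle the pointwise choice of derivative order. Choose a hyperplane $\Pi\subset\R^d$ transversal to $\bm\beta$ and write each $\bm x\in I$ uniquely as $\bm x=\bm y+t\bm\beta$ with $\bm y\in\Pi$ and $t\in J(\bm y)$, where $J(\bm y)$ is a (possibly empty) bounded interval of length $O_{\bm\beta}(|\bm b-\bm a|_2)$. The orthogonal projection of $I$ onto $\Pi$ has $(d-1)$-dimensional measure $O_{\bm\beta}(|\bm a\vee\bm b|^{d-1})$. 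Setting $g_{\bm y}(t):=f(\bm y+t\bm\beta)$, one has $g_{\bm y}^{(l)}(t)=(d_{\bm\beta}^l f)(\bm y+t\bm\beta)$, so hypothesis \eqref{ass1} becomes $\sup_{1\leq l\leq k}|g_{\bm y}^{(l)}(t)|\geq A$ for every $t\in J(\bm y)$.

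Fix $\bm y$ and write $g=g_{\bm y}$, $J=J(\bm y)$. The closed sets $F_l:=\{t\in J:\ |g^{(l)}(t)|\geq A/k\}$ cover $J$. Using $|g^{(l+1)}|\leq C(\bm\beta,k)\|f\|_{k+1}$, the function $g^{(l)}$ is Lipschitz with constant $\leq C\|f\|_{k+1}$, so any open interval on which $|g^{(l)}|<A/(2k)$ sandwiched between two points where $|g^{(l)}|=A/k$ has length at least $cA/\|f\|_{k+1}$. Consequently the enlargement $\widetilde F_l:=\{t\in J:\ |g^{(l)}(t)|\geq A/(2k)\}$ is a disjoint union of at most $N_l\leq C(\bm\beta,k)(1+\|f\|_{k+1}|J|/A)$ closed subintervals $\{J_{l,j}\}_j$, with $\bigcup_{l,j}J_{l,j}\supset J$.

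On each $J_{l,j}$, Lemma~\ref{km98} applied with index $l$ and threshold $A/(2k)$ gives
$${\rm meas}\{t\in J_{l,j}:\ |g(t)|\leq\varepsilon\}\leq \zeta_l(2k\varepsilon/A)^{1/l}\leq C(k)(\varepsilon/A)^{1/k},$$
since $\varepsilon<A\leq 1$ makes the exponent $1/k$ the worst. Summing over $l=1,\dots,k$ and $j=1,\dots,N_l$ yields the one-dimensional bound
$${\rm meas}\{t\in J:\ |g(t)|\leq\varepsilon\}\leq C(\bm\beta,k)(1+\|f\|_{k+1}|J|)A^{-(1+1/k)}\varepsilon^{1/k}.$$
Integrating this over $\bm y\in\Pi$ via Fubini, using the measure bound on the transversal projection together with $|J|\leq C(\bm\beta)|\bm b-\bm a|_2$, produces
$${\rm meas}\{\bm x\in I:\ |f(\bm x)|\leq\varepsilon\}\leq C(\bm\beta,k,d)|\bm a\vee\bm b|^{d-1}(1+\|f\|_{k+1}|\bm b-\bm a|_2)A^{-(1+1/k)}\varepsilon^{1/k}.$$
Since $A\leq 1$ and $1+\|f\|_{k+1}|\bm b-\bm a|_2\geq 1$, the right-hand side is dominated by the claimed bound~\eqref{keye}, where the power $A^{-(d+1/k)}$ and the $d$th power of $(1+\|f\|_{k+1}|\bm b-\bm a|_2)$ are weaker (larger) than what I actually obtain.

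The main obstacle is the partition step: one must verify that $J$ can indeed be covered by a controlled number of subintervals $J_{l,j}$ on each of which a single derivative order $l$ satisfies the quantitative non-vanishing bound $|g^{(l)}|\geq A/(2k)$. This hinges on the Lipschitz control of $g^{(l)}$ deduced from the $C^{k+1}$-norm hypothesis, and the extra $A^{-1}$ inside $N_l$ is precisely what produces the exponent $A^{-(1+1/k)}$ exceeding the bare $A^{-1/k}$ of Lemma~\ref{km98}. Once the partition is in place, the remaining steps are routine Fubini and bookkeeping.
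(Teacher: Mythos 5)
Your proposal is correct and takes a genuinely different, and in fact sharper, route than the paper's. The paper first partitions the whole domain $I$ into $N^d$ small cubes of side $\sim A/\|f\|_{k+1}$ (so that the derivative order selected at the center persists throughout each cube), then applies the Fubini slicing and Lemma~\ref{km98} on each cube and sums; because the transversal projections of the $N^d$ cubes overlap heavily, this produces the factor $(\|f\|_{k+1}|\bm b-\bm a|_2+1)^dA^{-d}$ in \eqref{keye}. You reverse the order of operations: slice via Fubini first, then on each one-dimensional line $\{\bm y+t\bm\beta\}$ chop into $O(1+\|f\|_{k+1}|J|/A)$ subintervals on each of which a \emph{single} derivative order satisfies the quantitative lower bound, and only then apply Lemma~\ref{km98}. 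This avoids multiplying the chop count by $d$ and gives the strictly stronger bound $C|\bm a\vee\bm b|^{d-1}(1+\|f\|_{k+1}|\bm b-\bm a|_2)A^{-(1+1/k)}\varepsilon^{1/k}$, which of course implies \eqref{keye}.

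One small imprecision to fix in the component-counting step: as literally stated, ``$\widetilde F_l:=\{t\in J:\ |g^{(l)}(t)|\geq A/(2k)\}$ is a disjoint union of at most $N_l$ closed subintervals'' need not hold, since $\widetilde F_l$ may have arbitrarily many tiny components that dip only slightly above the threshold $A/(2k)$. What your Lipschitz argument actually controls is the number of components of $\widetilde F_l$ that meet $F_l=\{|g^{(l)}|\geq A/k\}$: between two consecutive such components, $|g^{(l)}|$ must drop from $\geq A/k$ down below $A/(2k)$ and climb back, forcing a separation $\gtrsim A/\|f\|_{k+1}$, so there are at most $O(1+\|f\|_{k+1}|J|/A)$ of them. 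Since $\bigcup_l F_l=J$, these selected components still cover $J$, and the rest of your computation goes through unchanged. Alternatively, one can simply chop $J$ into intervals of uniform length $\rho\sim A/\|f\|_{k+1}$ and pick the derivative order at the midpoint of each, which is exactly the paper's Step~2 idea transplanted to one dimension and sidesteps the component-counting issue entirely.
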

{\color{blue}\begin{rem}
If $f$ is a polynomial on $\R^d$,    \cite{CW01} proved  some   strong  distributional inequalities, which may  directly imply  \eqref{keye}.   However, in the present case, we have a weak transversality condition \eqref{ass1}. So we would like  to give an elementary proof applying for more general functions (cf. \cite{LSZ25} for an application),  which  also    produces   an  upper bound  in the estimate  \eqref{keye}  with   the  explicit  dependence  on $\|f\|_{k+1}$. 
\end{rem}}
\begin{proof}
The proof can be divided into two steps: \\

{\bf Step 1}. Assume that for some $1\leq k_1\leq k$ and interval $I_1=\prod_{i=1}^d[h_i,m_i],$
\begin{align}\label{low}
\inf_{\bm x\in I_1}\left|{d_{\bm\beta}^{k_1}}f(\bm x)\right|\geq A. 
\end{align}
We will  combine Lemma \ref{km98}  and Fubini's theorem to deal with the measure estimate.  Let $\bm e_1=\frac{\bm\beta}{|\bm\beta|_2}$  and choose a normalized orthogonal basis $\bm e_2,\cdots, \bm e_d\in\R^d$ of 
$$\bm\beta^{\perp}=\{\bm x\in\R^d:\  \bm x\cdot \bm\beta =0\}.$$
We can then define the coordinate transform:  
$$\bm x=\bm\varphi(s,\bm y), (s,\bm y)\in\R\times\R^{d-1},$$
via
\begin{align*}
\bm\varphi(s, \bm y)=s\bm e_1+\sum_{i=2}^d y_i\bm e_i.
\end{align*}
So the Jacobian  satisfies 
\begin{align*}
|\det\frac{\partial \bm x}{\partial(s,\bm y)}|=1.
\end{align*}
Define 
\begin{align*}
S_{\varepsilon}=\{\bm x\in I_1:\ |f(\bm x)|\leq \varepsilon\},
\end{align*}
and denote by $\chi_{(\cdot)}$  the indicator  function.  We have 
\begin{align}
\nonumber {\rm meas}(S_{\varepsilon})&=\int_{I_1}\chi_{S_{\varepsilon}}(\bm x) d\bm x\\
\label{jac}&=\int_{\varphi^{-1}(I_1)}\chi_{S_{\varepsilon}}(\bm\varphi(s,\bm y)) dsd\bm y.
\end{align}
We proceed to control $\bm\varphi^{-1}(I_1)$. From $\bm x=s\bm e_1+\sum_{i=2}^d y_i\bm e_i\in I_1$, we obtain
\begin{align*}
s= \bm x\cdot\bm e_1 \in [-|\bm h\vee \bm m|, |\bm h\vee \bm m|].
\end{align*}
Similarly, we also have $y_i\in [-|\bm h\vee \bm m|, |\bm h\vee \bm m|]$ for $2\leq i\leq d,$  which yields $\bm\varphi^{-1}(I_1)\subset[-|\bm h\vee \bm m|, |\bm h\vee \bm m|]^d$.  We define 
$$\Pi_1\bm\varphi^{-1}(I_1)=\{ \bm y\in\R^{d-1}:\ \exists\  s\in\R \ s.t.,\  (s,\bm y)\in \bm \varphi^{-1}(I_1)\} $$
and,  define for $\bm y\in\Pi_1\bm\varphi^{-1}(I_1), $ the interval $I_{\bm y}=\{s\in\R:\ (s,\bm y)\in \bm \varphi^{-1}(I_1)\}$ of finite length. 
Applying the Fubini's theorem together with \eqref{jac} implies that 
\begin{align*}
{\rm meas}(S_{\varepsilon})&= \int_{\Pi_1\bm\varphi^{-1}(I_1)}d\bm y\int_{I_{\bm y}}\chi_{S_{\varepsilon}}(\bm\varphi(s,\bm y)) ds\\
&\leq {\rm meas}_{d-1}(\Pi_1\bm\varphi^{-1}(I_1))\sup_{\bm y\in \Pi_1\bm\varphi^{-1}(I_1)}\int_{I_{\bm y}}\chi_{S_{\varepsilon}}(\bm\varphi(s,\bm y)) ds\\
&\leq 2^{d-1} |\bm h\vee \bm m|^{d-1}\sup_{\bm y\in \Pi_1\bm\varphi^{-1}(I_1)}\int_{I_{\bm y}}\chi_{S_{\varepsilon}}(\bm\varphi(s,\bm y)) ds\\
&\ \ \  ({\rm since}\   \Pi_1\bm\varphi^{-1}(I_1)\subset [-|\bm h\vee \bm m|, |\bm h\vee \bm m|]^{d-1}).
\end{align*}
For fixed $\bm y\in \Pi_1\bm\varphi^{-1}(I_1)$, we get 
\begin{align*}
&\ \ \  \int_{I_{\bm y}}\chi_{S_{\varepsilon}}(\bm\varphi(s,\bm y)) ds\\
 &={\rm meas}(\{s\in I_{\bm y}:\  |f(s\bm e_1+\sum_{i=2}^dy_i\bm e_i)|\leq \varepsilon\}).
\end{align*}
Denote $g(s)=f(s\bm e_1+\sum_{i=2}^dy_i\bm e_i)$. From \eqref{low}, we have for all $s\in I_{\bm y},$
{\begin{align*}
|\frac{d^{k_1}}{ds^{k_1}}g(s)|=|\frac {1}{|\bm \beta|_2^{k_1}}d^{k_1}_{\bm \beta} f(s\bm e_1+\sum_{i=2}^dy_i\bm e_i)|\geq \frac {A}{|\bm \beta|_2^{k_1}}.
\end{align*}}
So applying  Lemma \ref{km98} yields 
\begin{align}\label{step1}
{\rm meas}(S_{\varepsilon})\leq \zeta_{k_1}2^{d-1} |\bm h\vee \bm m|^{d-1}|\bm \beta|_2 (\frac{\varepsilon}{A})^{\frac{1}{k_1}}.
\end{align}

{\bf Step 2}.  We deal with the general case in this step.  So we first divide each  one dimensional interval $[a_i,b_i]$ into $N$ (will be specified below) equal subintervals $I_{i,j_i}$ ($1\leq j_i\leq N$).  This then induces a decomposition of $I$, namely, 
$$I=\bigcup_{1\leq  j_1,\cdots,j_d\leq N}\prod_{i=1}^d I_{i,j_{i}}=\bigcup_{\bm J\in ([1,N]\cap\N)^d} I_{\bm J}.$$
We will apply the argument proved in {\bf Step 1} on each $I_{\bm J}$. For this purpose,   we fix $I_{\bm J}$. For any $\bm x,\bm y\in I_{\bm J}$, we have 
\begin{align*}
|\bm x-\bm y|_2&\leq \sqrt{\sum_{i=1}^d \frac{|b_i-a_i|^2}{N^2}}\\
&\leq \frac{|\bm b-\bm a|_2}{N}.
\end{align*}
Fixing any $\bm x_0\in I_{\bm J}$,  we get from the assumption  \eqref{ass1} that there exists $1\leq k_1\leq k$ with 
\begin{align*}
|d_{\bm \beta}^{k_1}f(\bm x_0)|\geq A>0.
\end{align*}
As a result, for any $\bm x\in I_{\bm J}$, we obtain  
\begin{align}
\nonumber |d_{\bm \beta}^{k_1}f(\bm x)|&\geq |d_{\bm \beta}^{k_1}f(\bm x_0)|-|d_{\bm \beta}^{k_1}f(\bm x)-d_{\bm \beta}^{k_1}f(\bm x_0)|\\
\nonumber&\geq A-\sup_{\bm x\in I_{\bm J}}|\nabla d_{\bm \beta}^{k_1}f(\bm x)|_2\cdot|\bm x-\bm x_0|_2\\
\nonumber&\geq A-\frac{\|f\|_{k_1+1}L_{\bm\beta,d}|\bm b-\bm a|_2}{N}\\
\label{lower1}&\geq A-\frac{\|f\|_{k+1}L_{\bm\beta,d}|\bm b-\bm a|_2}{N},
\end{align}
where 
\begin{align*}
L_{\bm\beta,d}=\sqrt{d}\sum_{1\leq l_1,\cdots,l_d\leq d}|\beta_{l_1}\cdots\beta_{l_d}|.
\end{align*}
In fact,   $L_{\bm\beta, d}$ is derived from 
\begin{align*}
|\nabla d_{\bm \beta}^{k_1}f(\bm x)|_2&=\sqrt{\sum_{j=1}^d |\sum_{1\leq l_1,\cdots,l_d\leq d}\beta_{l_1}\cdots\beta_{l_d} \partial_{jl_1\cdots l_d}^{k_1+1}f(\bm x)|^2}\\
&\leq \|f\|_{k_1+1}\sqrt{\sum_{j=1}^d (\sum_{1\leq l_1,\cdots,l_d\leq d}|\beta_{l_1}\cdots\beta_{l_d} |)^2}\\
&:= \|f\|_{k_1+1} L_{\bm \beta,d}.
\end{align*}
From \eqref{lower1}, we can set 
\begin{align*}
N=[2\|f\|_{k+1}L_{\bm \beta,d}|\bm b-\bm a|_2 A^{-1}]+1
\end{align*}
so that 
\begin{align*}
\inf_{\bm x\in I_{\bm J}}|d_{\bm \beta}^{k_1}f(\bm x)|\geq A/2>0,
\end{align*}
where $[x]$ denotes the integer part of $x\in\R.$
Thus applying \eqref{step1} yields (since $I_J\subset I, 0<\varepsilon<1$) 
\begin{align*}
{\rm meas}(\{\bm x\in I_{\bm J}:\ |f(\bm x)|\leq \varepsilon \})&\leq \zeta_{k_1}2^{d-1}|\bm a\vee \bm b|^{d-1} 2^{\frac{1}{k_1}}|\bm \beta|_2 (\frac{\varepsilon}{A})^{\frac{1}{k_1}}\\
&\leq  C(k,d)|\bm a\vee \bm b|^{d-1}|\bm \beta|_2 (A\wedge1)^{-1} \varepsilon^{\frac{1}{k}}.
\end{align*}
Collecting all $I_{\bm J}$ leads to,
\begin{align*}
&\ \ \ {\rm meas}(\{\bm x\in I:\ |f(\bm x)|\leq \varepsilon \})\\
&\leq C(k,d)N^d|\bm a\vee \bm b|^{d-1} |\bm\beta|_2(A\wedge 1)^{-1} \varepsilon^{\frac{1}{k}}\\
&\leq C(\bm\beta,k,d) (\|f\|_{k+1}|\bm b-\bm a|_2 A^{-1}+1)^d|\bm a\vee \bm b|^{d-1}(A\wedge 1)^{-1} \varepsilon^{\frac{1}{k}}\\ 
&\leq C(\bm\beta,k,d) (\|f\|_{k+1}|\bm b-\bm a|_2+1)^d|\bm a\vee \bm b|^{d-1}A^{-d}(A\wedge 1)^{-1}{\varepsilon}^{\frac{1}{k}}.	
\end{align*}
We have proven \eqref{keye}. 
\end{proof}

Finally,  we  introduce a key lemma  (cf. Proposition in Appendix B, \cite{BGG85} for a more precise form) on determinant estimates.

\begin{lem}[]\label{BGGlem}
Let $\bm v^{(1)},\cdots, \bm v^{(r)}\in\R^r$ be $r$ linearly independent vectors with $|\bm v^{(\ell)}|_1\leq M$ for $1\leq \ell\leq r.$ Then for any  $\bm w\in\R^r$, we have
\begin{align*}
\max_{1\leq \ell\leq r}|\bm w\cdot\bm v^{(\ell)}|\geq r^{-3/2}M^{1-r}|\bm w|_2\cdot|\det \left[ \bm v^{(\ell)}\right]_{1\leq \ell\leq r}|.
\end{align*}
\end{lem}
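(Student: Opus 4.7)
The plan is to reduce the claim to a straightforward application of Cramer's rule combined with Hadamard's inequality, without needing any singular value analysis. Let $V$ denote the $r\times r$ matrix whose $l$-th row is $\bm v^{(l)}$, and set $\bm b=V\bm w$, so that the $l$-th component of $\bm b$ is precisely $\bm w\cdot\bm v^{(l)}$. The quantity we want to bound from below is $|\bm b|=\max_l|\bm w\cdot\bm v^{(l)}|$. Since the vectors $\bm v^{(l)}$ are linearly independent, $V$ is invertible, and therefore $\bm w=V^{-1}\bm b$.

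The core step is to bound each coordinate $|w_i|$ in terms of $|\bm b|$, $|\det V|$, and $M$. By Cramer's rule, $w_i=\det V^{(i)}/\det V$, where $V^{(i)}$ is obtained from $V$ by replacing its $i$-th column with $\bm b$. Expanding $\det V^{(i)}$ along column $i$ yields $\det V^{(i)}=\sum_{j=1}^r (-1)^{i+j}\,b_j\,M_{j,i}$, where $M_{j,i}$ is the $(r-1)\times(r-1)$ minor obtained by deleting row $j$ and column $i$ of $V$. The rows of this minor are restrictions of the vectors $\bm v^{(k)}$ for $k\neq j$, each of which has Euclidean norm at most $|\bm v^{(k)}|_1\leq M$. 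Hadamard's inequality therefore gives $|M_{j,i}|\leq M^{r-1}$, and summing over $j$ yields $\sum_j|M_{j,i}|\leq rM^{r-1}$.

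Combining these estimates with the triangle inequality produces $|w_i|\leq rM^{r-1}|\bm b|/|\det V|$, and hence $|\bm w|_2\leq\sqrt{r}\,\max_i|w_i|\leq r^{3/2}M^{r-1}|\bm b|/|\det V|$. Rearranging gives the claimed inequality. I do not expect any substantial obstacle; the only subtlety is the observation that one should bound the minors \emph{row by row} using Hadamard, so that the hypothesis $|\bm v^{(l)}|_1\leq M$ enters efficiently via the trivial estimate $|\bm v^{(l)}|_2\leq|\bm v^{(l)}|_1\leq M$. A column-wise application would instead only give column $\ell^2$-norms of order $M\sqrt{r}$, which would produce a significantly worse power of $r$ and fail to match the sharp exponent $r^{-3/2}$.
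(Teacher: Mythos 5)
Your proof is correct and follows essentially the same route as the paper: Cramer's rule to express $\bm w$ in terms of the vector $(\bm w\cdot\bm v^{(l)})_l$ and $\det V$, followed by a row-wise Hadamard bound $\leq M^{r-1}$ on the relevant $(r-1)\times(r-1)$ minors. The paper phrases the Cramer's-rule step via the adjugate matrix $V^*$ rather than via cofactor expansion of $\det V^{(i)}$, but these are the same computation, and the final accounting ($\sqrt r$ from passing to $|\cdot|_2$ and $r$ from summing over $j$) matches exactly.
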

\begin{proof}
For completeness, we give a proof based on Cramer's rule and the Hadamard's inequality. We assume $\bm v^{(\ell)}$ is the row vector in $\R^r$ ($1\leq \ell\leq r$).  Denote by $X=[\bm v^{(1)},\cdots, \bm v^{(r)}]$ the $r\times r $ matrix and consider the equation 
\begin{align*}
X\bm w=\bm \xi,\ \bm\xi=(\xi_1,\cdots,\xi_r)\ {\rm with}\ \xi_\ell=\bm w\cdot \bm v^{(\ell)}.
\end{align*}
By the Cramer's rule, we obtain 
\begin{align*}
\bm w=X^{-1} \bm \xi=\frac{X^*}{\det X} \bm \xi,
\end{align*}
where $X^*=[\tilde v_{ij}]_{1\leq i,j\leq r}$ denotes the adjugate matrix of $X$. As a result, we get 
\begin{align*}
|\bm w|_2\cdot|\det X|&=|X^*\bm\xi|_2\\
&=\sqrt{\sum_{i=1}^r|\sum_{j=1}^r\tilde v_{ij}\xi_j|^2}\\
&\leq \max_{1\leq j\leq r}|\xi_j|\sqrt{\sum_{i=1}^r(\sum_{j=1}^r|\tilde v_{ij}|)^2}\\
&\leq r^{3/2} \max_{1\leq j\leq r}|\xi_j|\max_{1\leq i,j\leq r}|\tilde v_{ij}|.
\end{align*}
It remains to bound $\max_{1\leq i,j\leq r}|\tilde v_{ij}|,$ which will be completed by using the Hadamard's inequality. In fact, we have 
\begin{align*}
|\tilde v_{ij}|&\leq \prod_{1\leq\ell\neq i\leq r}\sqrt{\sum _{1\leq m\neq j\leq r} |v^{(\ell)}_m|^2}\\
&\leq \prod_{1\leq \ell\neq i\leq r}{\sum _{1\leq m\neq j\leq r} |v^{(\ell)}_m|}\\
&\leq M^{r-1}\ ({\rm since}\ \max_{1\leq \ell\leq r}|\bm v^{(\ell)}|_1\leq M).
\end{align*}
This finishes the proof. 
\end{proof}

\subsection{Proof of Theorem~\ref{nlsthm}}	

In this section,  we prove Theorem~\ref{nlsthm}.
\begin{proof}[Proof of Theorem~\ref{nlsthm}]

Recall that 
\begin{align*}
\bm k'\cdot\bm \omega'&=\sum_{s=1}^Dk_s' V(\bm \theta+\bm n_s'\bm\alpha)\\
&=\sum_{s=1}^D\sum_{\bm\ell\in\Gamma_ K}k_s'v_{\bm\ell} \cos2\pi\bm \ell\cdot(\bm\theta+\bm n_s'\bm \alpha)\\
&= (\bm k'\otimes \bm v)\cdot \bm V,
\end{align*}
where 
\begin{align*}
&\bm k'\otimes \bm v=(k_s' v_{\bm\ell})_{1\leq s\leq D,   \bm\ell\in \Gamma_ K},\\
&\bm V=(V_{(s, \bm\ell)})_{1\leq s\leq D,   \bm\ell \in \Gamma_ K}:=(\cos2\pi\bm \ell\cdot(\bm\theta+\bm n_s'\bm\alpha))_{1\leq s\leq D,   \bm\ell \in \Gamma_ K}.
\end{align*}
Denote $\bm \xi=(\bm \beta, \bm q)$ for  $\bm \beta, \bm q\in [0,1]^d$.   
Write also 
\begin{align*}
d _{\bm \xi} =\sum_{i=1}^d(\beta_i\frac{\partial}{\partial \alpha_i}+q_i\frac{\partial}{\partial \theta_i})
\end{align*}
and accordingly $d_{\bm \xi}^\ell$ for $\ell \geq 1.$

A key observation is that for all $j\geq 1$ and $\bm \ell \in \Gamma_K$, 
\begin{align*}
&\ \ \ d _{\bm \xi}^{2j} \cos2\pi\bm  \ell\cdot (\bm\theta+\bm n_s'\bm \alpha)\\
&=(-(2\pi(\bm\ell\bm n_s')\cdot \bm\beta+2\pi\bm q\cdot \bm\ell)^2)^{j} \cos2\pi\bm \ell\cdot (\bm\theta+\bm n_s' \bm \alpha).
\end{align*}
This  motivates us to consider the Wronskian 
\begin{align*}
W=[ d _{\bm \xi}^{2j}  {V}_{(s,\bm \ell)} ]_{s, \bm \ell}^{1\leq j\leq R}\ {\rm with}\  R= D\cdot (\#\Gamma_K),
\end{align*}
which is  a $R\times R$ real matrix.  Direct computations show
\begin{align}
\nonumber&\ \ \ |\det W|\\
\label{wrons}&=\prod_{s,\bm\ell}| \cos2\pi {\bm\ell}\cdot(\bm\theta+\bm n_s' \bm \alpha)| \times \prod_{s, \bm\ell}(2\pi)^2((\bm\ell \bm n_s')\cdot\bm \beta+\bm q\cdot\bm\ell)^2\times |\det W_1|,
\end{align}
where $W_1$ is a Vandermonde matrix  with 
\begin{align}\label{w1}
  |\det W_1|&= \sqrt{\prod_{ (s,\bm\ell)\neq (s',\bm\ell')}(2\pi)^2|((\bm\ell \bm n_s')\cdot\bm \beta+\bm q\cdot\bm\ell)^2-((\bm\ell' \bm n_{s'}')\cdot\bm \beta+\bm q\cdot\bm\ell')^2|}.
\end{align}

Then the proof can be divided into three steps: \\

{\bf Step 1.}  {\bf Prior  conditions on $(\bm \alpha , \bm\theta)$}\\

We aim to get a lower bound for   \eqref{wrons}, which requires the prior restrictions on $\bm \alpha, \bm\theta.$  
Let $\|x\|_{\T/2}={\rm dist} (x, \Z/2)$. From
\begin{align*}
&4\|x-\frac14\|_{\T/2}\leq |\cos 2\pi x|\leq 2\pi \|x-\frac14\|_{\T/2},
\end{align*}
we can define for $\delta>0$ the resonant set (of $\bm \alpha, \bm\theta$) 
\begin{align}
\nonumber&\ \ \ S_\delta\\
\label{sdelta}&=\bigcup_{1\leq s \leq D,  \bm\ell\in\Gamma_K}\left\{(\bm \alpha, \bm\theta)\in[0,1]^{2d}:\  \| \bm \ell \cdot(\bm \theta+\bm n_s' \bm \alpha)-\frac14\|_{\T/2}\leq \delta \right\}.
\end{align}
Denote $L=\max\limits_{1\leq s\leq D} K |\bm n_s'|$.  We have 
\begin{lem}\label{coverlem}
Let  $0<\delta<1$.  Then $S_\delta$ can be covered by $C L^{2d-1}\delta^{-2d+1}$  intervals of side length  $\leq  \delta,$ where $C=C(d,R)>0.$
\end{lem}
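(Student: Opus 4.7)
The plan is a direct slab-covering argument. Since $S_\delta$ is a union of at most $R = D\cdot(\#\Gamma_K) = O(1)$ sets, it suffices to produce, for each fixed pair $(\ell,\bm\ell)$, a cover by at most $C\delta^{-2d+1}$ boxes of side length $\leq\delta$ of the single set
\[
E_{\ell,\bm\ell} = \left\{(\bm\alpha,\bm\theta)\in[0,1]^{2d}:\ \|\bm\ell\cdot(\bm\theta+\bm n_\ell\bm\alpha)-\tfrac14\|_{\T/2}\leq\delta\right\};
\]
absorbing the factor $R$ into $C$ then finishes the proof.

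Fix $(\ell,\bm\ell)$ and consider the linear functional $L(\bm\alpha,\bm\theta)=\bm\ell\cdot(\bm\theta+\bm n_\ell\bm\alpha)$ on $[0,1]^{2d}$. On this domain $|L|\leq C_0$ for some $C_0$ depending only on $K$ and $\max_\ell|\bm n_\ell|$, so the condition $\|L-\tfrac14\|_{\T/2}\leq\delta$ cuts out at most $N_0\leq 4C_0+2$ slabs of the form $\{|L-t_j|\leq\delta\}$, where $t_j$ ranges over finitely many half-integer shifts of $\tfrac14$. Thus $E_{\ell,\bm\ell}$ is contained in a union of $O(1)$ slabs, and the task reduces to covering a single such slab by boxes of side $\leq\delta$.

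Here the non-degeneracy condition \eqref{Gammk}, namely $\ell_k\neq 0$ for every $k$, is crucial. It guarantees that the coefficient vector of $L$ in $(\bm\alpha,\bm\theta)$-coordinates — given by $((\ell_k n_{\ell,k})_{k=1}^d,(\ell_k)_{k=1}^d)$ — has some entry of magnitude $\geq 1$ (in fact, $\geq 1$ from the $\bm\theta$-block). Pick that coordinate direction $x_{i_0}$; along it the slab $\{|L-t_j|\leq\delta\}$ has width $O(\delta)$. Projecting onto the remaining $2d-1$ coordinates yields a region inside $[0,1]^{2d-1}$, which is covered by $\lceil\delta^{-1}\rceil^{2d-1}=O(\delta^{-(2d-1)})$ axis-parallel cubes of side $\delta$. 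Each such $(2d-1)$-cube lifts to a column in which the slab is contained in a strip of $x_{i_0}$-width $O(\delta)$, hence covered by $O(1)$ boxes of side $\delta$. Summing over the $O(1)$ slabs and the $R$ pairs $(\ell,\bm\ell)$ gives the claimed count $C\delta^{-2d+1}$.

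There is no real obstacle here; the only thing to be careful about is the constant entering the width estimate, but \eqref{Gammk} provides a uniform lower bound $\geq 1$ on the relevant coefficient of $L$, so the constants are clean and depend only on the quantities already declared to be absorbed into $C$.
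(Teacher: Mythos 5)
Your proof is correct and takes essentially the same route as the paper's: both single out one pair $(\ell,\bm\ell)$, use the non-degeneracy condition from \eqref{Gammk} (nonzero integer $\ell_k$, hence $|\ell_k|\geq 1$) to find a $\theta$-coordinate whose coefficient in $L$ is bounded below, and then discretize the remaining $2d-1$ coordinates into $O(\delta^{-(2d-1)})$ small cubes, each of which meets the thickened level set in a strip of $x_{i_0}$-width $O(\delta)$. The paper presents the same counting slightly more explicitly (freezing the other $2d-1$ coordinates at cube centers and reducing to a one-dimensional constraint on $\theta_1$), but the decomposition and the bound are the same.
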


\begin{proof}
 It suffices to cover for each $\bm n_s', \bm \ell$ the set 
$$ S_{\bm n_s', \bm \ell}=\{(\bm \alpha, \bm\theta)\in[0,1]^{2d}:\  \|\bm\ell \cdot (\bm \theta+\bm n_s'\bm \alpha)-\frac14\|_{\T/2}\leq \delta \}$$
with $C(d) L^{2d-1} \delta^{-2d+1}$  intervals of side length $\leq  \delta$ since the total number of  all $\bm n_s', \bm\ell$ is at most $R.$ Since $|\bm \ell| >0,$
 we may assume $\ell_1\neq 0$ without loss of generality.  For any $\bm x=(x_1,\cdots, x_d),$  write $\bm x=(x_1, \bm  x^c_1)$. 
So we can divide $[0,1]^d\times [0,1]^{d-1}\ni (\bm \alpha, \bm\theta_1^c)$ into $C_1=[100L\delta^{-1}]^{2d-1}$  cubes  $\bm C_i$ ($1\leq i\leq C_1$)  with disjoint exteriors  of  the same side length $\leq  \frac{\delta }{ L}$. Denote by  $ (\bm\alpha_i, \bm\theta^c(i))\in [0,1]^{d}\times [0,1]^{d-1}$ ($1\leq i\leq C_1$)  the centers of those cubes.    We further define 
\begin{align*}
&\ \ \  I_{\bm n_s', \bm \ell}\\
 &=\bigcup_{1\leq i\leq C_1}\{(\bm \alpha, \bm \theta_1^c,  \theta_1)\in \bm C_i\times [0,1]:\  \| \ell_1\theta_1+ \bm\ell_1^c\cdot\bm  \theta^c(i)+(\bm n_s'\bm\ell)\cdot \bm\alpha_i-\frac14\|_{\T/2}\leq 2\delta \}\\
 &:= \bigcup_{1\leq i\leq C_1} J_i.
\end{align*}
We first claim that  $S_{\bm n_s', \bm\ell} \subset I_{\bm n_s', \bm \ell}.$  Let $(\bm \alpha, \bm\theta)\in S_{\bm n_s', \bm\ell}$.  Then by the definition (of $S_{\bm n_s', \bm\ell}$), we have 
\begin{align*}
 \|\ell_1\theta_1+\bm\ell_1^c \cdot\bm\theta_1^c+(\bm\ell\bm n_s')\cdot\bm  \alpha-\frac14\|_{\T/2}\leq  \delta .
\end{align*}
Since $\bm C_i$ ($1\leq i\leq C_1$) covers $[0,1]^{2d-1}$,  there is $1 \leq i_0\leq C_1$ so that $(\bm \alpha, \bm\theta_1^c)\in \bm C_{i_0}$, namely, $|\bm \alpha-\bm\alpha_{i_0}|+|\bm\theta_1^c-\bm \theta^c(i_0)|\leq \frac{\delta }{ 2L}.$ This implies that 
\begin{align*}
&\ \ \ \| \ell _1\theta_1+ \bm\ell_1^c\cdot \bm \theta^c(i_0)+ (\bm n_s' \bm\ell) \cdot\bm  \alpha_{i_0}-\frac14\|_{\T/2}\\
&\leq  \|\bm \ell \cdot (\bm \theta+\bm n_s' \bm  \alpha)-\frac14\|_{\T/2}+|\bm \ell_1^c\cdot(\bm\theta^c(i_0)-\bm\theta_1^c)|\\
&\ \ +|(\bm \ell \bm n_s')\cdot(\bm \alpha-\bm \alpha_{i_0})|\\
&\leq 2\delta. 
\end{align*}
The claim is proved.  So  it remains to cover each $J_i$ with  intervals of side length $\leq \delta$.   This can be accomplished by the fact that  the set $\{\theta_1\in [0,1]:\ \|\ell_1\theta_1+x\|_{\T/2}\leq 2\delta\}$ ($x\in\R$)
consists of $r\sim |\ell_1|$  one dimensional intervals of length $2\delta/|\ell_1|$.    

We have finished the proof. 
\end{proof}

Given $0<\delta<1/10,$ we divide  $[0,1]$ into $N=[\delta^{-2}]+1$ intervals of equal length $\frac1N\sim \delta^2$. This then induces a partition of $[0,1]^{2d}$   into cubes of equal side length $1/N$. We denote by $\Lambda_{1/N}(\bm x_i) $ ($1\leq i\leq N^{2d}$) all these cubes.  We have
\begin{lem}
Let $0<\delta<1/10$ and let $S_\delta$ be defined by \eqref{sdelta}.    Then we have  for $N=[\delta^{-2}]+1$, 
\begin{align}\label{countb}
\#\{1\leq i\leq N^{2d}:\  \Lambda_{1/N}(\bm x_i)\cap S_\delta\neq \emptyset\}\leq C(d,R)L^{2d-1} \delta^{-4d+1}.
\end{align}
In particular, we have 
\begin{align}\label{measb}
{\rm meas}\left(\bigcup_{1\leq i\leq N^{2d}:\ \Lambda_{1/N}(\bm x_i)\cap S_\delta\neq \emptyset}\Lambda_{1/N}(\bm x_i)\right)\leq  C(d,R)L^{2d-1}\delta.
\end{align}
\end{lem}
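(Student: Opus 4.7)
The plan is to deduce both bounds directly from Lemma~\ref{coverlem} by a simple covering-inflation argument. The key observation is that the cubes $\Lambda_{1/N}(\bm x_i)$ have side length $1/N \sim \delta^2$, which is much smaller than the side length $\delta$ of the covering intervals produced by Lemma~\ref{coverlem}. So every small cube that meets $S_\delta$ must meet at least one of the covering intervals, and each covering interval can only be hit by a controlled number of small cubes.

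More concretely, first I would apply Lemma~\ref{coverlem} to produce a family $\{Q_j\}_{j=1}^{J}$ of intervals in $[0,1]^{2d}$ with $J \leq C\delta^{-2d+1}$, each of side length at most $\delta$, whose union contains $S_\delta$. Next, for each fixed $Q_j$, I would count the number of cubes $\Lambda_{1/N}(\bm x_i)$ that intersect $Q_j$. Since $Q_j$ has diameter $\leq \delta \sqrt{2d}$ and the $\Lambda_{1/N}(\bm x_i)$ are axis-parallel cubes of side $1/N$, the intersecting ones must have centers inside a $\sqrt{2d}(\delta + 1/N)$-neighborhood of $Q_j$, whose total volume is at most $C(\delta + 1/N)^{2d} \leq C\delta^{2d}$; dividing by $(1/N)^{2d} \sim \delta^{4d}$, we get at most $C\delta^{-2d}$ such cubes per $Q_j$.

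Summing over $j$ gives
\begin{align*}
\#\{1\leq i\leq N^{2d}:\ \Lambda_{1/N}(\bm x_i)\cap S_\delta\neq\emptyset\} \leq J \cdot C\delta^{-2d} \leq C\delta^{-2d+1}\cdot\delta^{-2d} = C\delta^{-4d+1},
\end{align*}
which proves \eqref{countb}. For \eqref{measb}, each cube $\Lambda_{1/N}(\bm x_i)$ has volume $N^{-2d} \leq \delta^{4d}$, so the total measure of the union is at most $C\delta^{-4d+1}\cdot \delta^{4d} = C\delta$, as claimed.

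I do not expect any real obstacle here, since the statement is essentially a quantitative translation of Lemma~\ref{coverlem} from the scale $\delta$ to the finer scale $1/N\sim\delta^2$, and only uses elementary Euclidean volume counting. The only mild point to be careful about is ensuring that the neighborhood of $Q_j$ does not leak outside $[0,1]^{2d}$ in a way that affects the count, which is handled trivially by the fact that the bound on intersecting cubes is a local volume estimate and does not depend on the position of $Q_j$.
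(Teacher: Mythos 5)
Your argument is correct and follows essentially the same route as the paper: cover $S_\delta$ by $O(\delta^{-2d+1})$ intervals of side $\delta$ via Lemma~\ref{coverlem}, observe that at most $O(\delta^{-2d})$ of the disjoint $1/N$-cubes can meet any one such interval (a volume-packing count at the finer scale $1/N\sim\delta^2$), multiply, and then convert the cube count into a measure bound by multiplying by $N^{-2d}\sim\delta^{4d}$. The paper phrases the packing step slightly differently — inflating each covering interval so that it contains every cube it meets, then dividing volumes — but this is the same estimate, so there is no substantive difference.
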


\begin{proof}
Let $ \bm C_i$ be given as in the proof of  Lemma \ref{coverlem}. Then $S_\delta\subset \cup_{1\leq i\leq C(d,R)L^{2d-1}\delta^{-2d+1}}\bm C_i$. So it suffices to  bound 
\begin{align*}
L_1=\#\{1\leq  i\leq N^{2d}:\  (\cup_{1\leq j\leq CL^{2d-1}\delta^{-2d+1}}\bm C_j)\cap \Lambda_{1/N}(\bm x_i)\neq \emptyset\}.
\end{align*}
Note that if $\Lambda_{1/N}(\bm x_i)\cap \bm C_j\neq \emptyset$, then 
$$ \Lambda_{1/N}(\bm x_i) \subset  \bm C_j',$$
where $ \bm C_j'$ is also an interval  containing  $\bm C_j$ and satisfying  $${\rm dist}(\partial \bm C_j, \partial \bm C_j' )\leq 10/N,$$
with $\partial$ the boundary.  Regarding the  disjointness (of the exterior) of $\Lambda_{1/N}(\bm x_i)$ for different $i$, we must have 
\begin{align*}
L_1\leq CL^{2d-1}\delta^{-2d+1} (\delta+10/N)^{2d}\delta ^{-4d}\leq CL^{2d-1}\delta ^{-4d+1}.
\end{align*}
This implies the bound \eqref{countb}.  Then the measure bound follows directly from \eqref{measb}. 
\end{proof}

From this lemma, we get immediately that 

\begin{lem}\label{coslem}
Let  $0<\delta<1/10$ and define $N=[\delta^{-2}]+1$. Then there is a collection of  cubes  $\Lambda_{1/N}(\bm x_i) \subset [0,1]^{d+d}$ ($1\leq i\leq N_1$) with disjoint exteriors  so that 
\begin{align*}
{\rm meas}\left([0,1]^{2d}\setminus \cup_{1\leq i\leq N_1}\Lambda_{1/N}(\bm x_i)\right)\leq CL^{2d-1}\delta,
\end{align*}
and  if $(\bm\alpha,\bm\theta)\in \cup_{1\leq i\leq N_1}\Lambda_{1/N}(\bm x_i),$ then
\begin{align*}
\prod_{1\leq s \leq D, \bm\ell\in\Gamma_K}|\cos2\pi\bm \ell \cdot (\bm\theta+\bm n_s'\bm\alpha)|\geq c\delta^{R}, \ c=c(d,R)>0.
\end{align*}
\end{lem}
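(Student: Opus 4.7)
The proof is essentially combinatorial bookkeeping on top of the two preceding lemmas. I would take the collection $\{\Lambda_{1/N}(\bm x_i)\}_{i=1}^{N_1}$ to be exactly those sub-cubes of the uniform partition of $[0,1]^{2d}$ into cubes of side length $1/N$ that are \emph{disjoint} from the resonant set $S_\delta$ defined in \eqref{sdelta}. These cubes have pairwise disjoint exteriors by construction, and the complement $[0,1]^{2d} \setminus \bigcup_{i=1}^{N_1} \Lambda_{1/N}(\bm x_i)$ is precisely the union of those cubes that do meet $S_\delta$. The measure of this complement is therefore bounded by $C\delta$ via the estimate \eqref{measb} of the preceding lemma.

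For the pointwise lower bound on the product, I would observe that if $(\bm\alpha,\bm\theta)$ lies in one of the retained cubes, then by construction $(\bm\alpha,\bm\theta) \notin S_\delta$. Unpacking the definition \eqref{sdelta}, this forces
\[
\bigl\|\bm\ell \cdot (\bm\theta + \bm n_\ell \bm\alpha) - \tfrac{1}{4}\bigr\|_{\T/2} > \delta
\]
for every $1 \le \ell \le D$ and every $\bm\ell \in \Gamma_K$. Combined with the elementary inequality $4\|x - \tfrac14\|_{\T/2} \le |\cos 2\pi x|$ recorded just above \eqref{sdelta}, this yields $|\cos 2\pi \, \bm\ell \cdot (\bm\theta + \bm n_\ell \bm\alpha)| > 4\delta$ for each of the $R = D \cdot \#\Gamma_K$ factors. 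Multiplying over all pairs $(\ell,\bm\ell)$ gives
\[
\prod_{1\le \ell\le D,\ \bm\ell\in\Gamma_K} |\cos 2\pi\, \bm\ell \cdot (\bm\theta + \bm n_\ell \bm\alpha)| \; \ge \; (4\delta)^R \; \ge \; c\,\delta^R
\]
with, e.g., $c = 4^R$, as required.

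I do not expect any genuine obstacle: all of the substantive analytic input — the covering of $S_\delta$ by $C\delta^{-2d+1}$ intervals of side $\le \delta$ from Lemma \ref{coverlem}, and the resulting counting estimate \eqref{countb} together with its measure consequence \eqref{measb} — has already been carried out. The role of this lemma is purely to repackage those facts into a single uniform statement suitable for Step~2 of the main Wronskian argument: the removal of cubes touching $S_\delta$ simultaneously costs only $O(\delta)$ in measure and guarantees the uniform quantitative non-vanishing of $\prod_{\ell,\bm\ell}|\cos 2\pi \, \bm\ell \cdot (\bm\theta + \bm n_\ell \bm\alpha)|$ that will be needed to bound \eqref{wrons} from below.
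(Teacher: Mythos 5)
Your proposal is correct and is precisely the argument the paper implies when it introduces Lemma \ref{coslem} with the phrase ``From this lemma, we get immediately that'': you retain the partition cubes disjoint from $S_\delta$, invoke \eqref{measb} for the measure of the discarded union, and use the pointwise inequality $4\|x-\tfrac14\|_{\T/2}\leq|\cos2\pi x|$ together with the definition of $S_\delta$ to get $|\cos 2\pi\,\bm\ell\cdot(\bm\theta+\bm n_\ell\bm\alpha)|>4\delta$ on each retained cube, whence the product is at least $(4\delta)^R\geq c\delta^R$. No gaps; this matches the paper's intended proof.
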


{\bf Step 2.} {\bf Conditions on $(\bm \beta, \bm q)$}\\

Next we impose conditions on $(\bm \beta, \bm q)$.  \\

\begin{lem}\label{betalem}
Let $\epsilon\in (0,1)$ and $\bm B\subset[0,1]^d\times  [0,1]^d $ be the set of $\bm \xi=(\bm \beta, \bm q)$  satisfying
\begin{align*}
&\min_{s, \bm \ell} ((\bm\ell\bm n_s') \cdot\bm \beta+\bm q\cdot \bm \ell)^2\geq  \epsilon>0,\\
&\min_{(s, \bm\ell)\neq (s', \bm\ell')}|(\bm \ell \bm n_s'-\bm \ell'\bm n_{s'}')\cdot\bm \beta+\bm q\cdot(\bm\ell -\bm\ell')|\geq \epsilon>0,\\
&\min_{(s,  \bm\ell)\neq (s', \bm\ell')}|(\bm \ell \bm n_s'+\bm \ell'\bm n_{s'}')\cdot\bm \beta+\bm q\cdot(\bm\ell +\bm\ell')|\geq \epsilon>0.
\end{align*}
Then 
\begin{align*}
{\rm meas}\left( [0,1]^{2d}\setminus\bm B\right)\leq 3R^2\sqrt{\epsilon}.
\end{align*}
Moreover, for $\bm\xi\in\bm B,$ we have 
\begin{align*}
&\ \ \  \prod_{s, \bm\ell}((\bm \ell\bm n_s')\cdot\bm \beta+\bm q\cdot\bm \ell)^2\times |\det W_1| \\
 &=\prod_{s, \bm\ell}((\bm \ell\bm n_s')\cdot\bm \beta+\bm q\cdot\bm \ell)^2\\
 &\ \ \times \prod_{ (s, \bm\ell)\neq (s', \bm \ell')}|((\bm\ell \bm n_s')\cdot\bm \beta+\bm q\cdot\bm\ell)^2-((\bm\ell' \bm n_{s'}')\cdot\bm \beta+\bm q\cdot\bm\ell')^2|\\
& \geq \epsilon^{R^3}.
\end{align*}

\end{lem}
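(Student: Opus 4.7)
The plan is to prove the measure estimate and the product estimate essentially independently, with both relying on the non-degeneracy structure of $\Gamma_K$ recorded in \eqref{Gammk} together with the distinctness of the lattice points $\bm n_1,\dots,\bm n_D$.

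For the measure estimate, I would split the complement $(0,1]^{2d}\setminus \bm B$ into three pieces corresponding to the three defining conditions and bound each by a standard linear-form measure lemma on the unit cube: for a nonzero linear form $L(\bm x)=\bm a\cdot \bm x+c$ on $[0,1]^n$, one has ${\rm meas}\{|L|<t\}\leq 2t/|\bm a|_\infty$. For condition (1), the form $L_{\ell,\bm\ell}(\bm\beta,\bm q)=(\bm\ell\bm n_\ell)\cdot\bm\beta+\bm\ell\cdot\bm q$ has every $q_i$-coefficient equal to $\ell_i\neq 0$ by \eqref{Gammk}, so $\{L_{\ell,\bm\ell}^2<\epsilon\}$ has measure $\leq 2\sqrt{\epsilon}$, and summing over the $R$ pairs $(\ell,\bm\ell)$ yields $\leq 2R\sqrt\epsilon$. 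For condition (2), the form $L_{\ell,\bm\ell}-L_{\ell',\bm\ell'}$ has $\bm q$-coefficient $\bm\ell-\bm\ell'$ and $\bm\beta$-coefficient $\bm\ell\bm n_\ell-\bm\ell'\bm n_{\ell'}$; if $\bm\ell\neq\bm\ell'$ the first is nonzero with some integer component of absolute value $\geq 1$, while if $\bm\ell=\bm\ell'$ but $\ell\neq\ell'$ then $\bm n_\ell\neq\bm n_{\ell'}$ and since every component of $\bm\ell$ is nonzero, $\bm\ell(\bm n_\ell-\bm n_{\ell'})\neq 0$. Either way, the set has measure $\leq 2\epsilon$, and summing over at most $R^2$ ordered pairs gives $\leq 2R^2\epsilon$. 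Condition (3) is analogous, using $\bm\ell+\bm\ell'\neq 0$ from the second clause of \eqref{Gammk}. Combining the three contributions and using $\epsilon<1$ yields the bound $\leq 3R^2\sqrt\epsilon$ (for $\epsilon$ small enough that the $\sqrt\epsilon$ term dominates).

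For the lower bound on the product, the key observation is the factorization
\begin{align*}
&\bigl((\bm\ell\bm n_\ell)\cdot\bm\beta+\bm q\cdot\bm\ell\bigr)^2-\bigl((\bm\ell'\bm n_{\ell'})\cdot\bm\beta+\bm q\cdot\bm\ell'\bigr)^2\\
&\quad=\bigl[(\bm\ell\bm n_\ell-\bm\ell'\bm n_{\ell'})\cdot\bm\beta+\bm q\cdot(\bm\ell-\bm\ell')\bigr]\cdot\bigl[(\bm\ell\bm n_\ell+\bm\ell'\bm n_{\ell'})\cdot\bm\beta+\bm q\cdot(\bm\ell+\bm\ell')\bigr],
\end{align*}
which reduces the Vandermonde-type product in \eqref{w1} to a product of linear forms controlled exactly by conditions (2) and (3). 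For $\bm\xi\in\bm B$, each of the $R$ squared factors in the first product is $\geq \epsilon$ (condition 1), and each of the at most $R(R-1)$ difference-of-squares factors is a product of two terms each $\geq \epsilon$, hence $\geq \epsilon^2$. This gives a total lower bound of $\epsilon^R\cdot\epsilon^{2R(R-1)}=\epsilon^{2R^2-R}\geq \epsilon^{R^3}$ (since $\epsilon<1$ and $2R^2-R\leq R^3$ for $R\geq 2$).

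I do not foresee a genuine obstacle here; the only step requiring care is the case analysis in condition (2) when $\bm\ell=\bm\ell'$ but $\ell\neq\ell'$, where one must switch from the $\bm q$-direction to the $\bm\beta$-direction and exploit both the distinctness of the $\bm n_\ell$ and the fact that all components of $\bm\ell$ are nonzero. Once that is verified, all three conditions yield linear forms whose $\ell^\infty$-coefficient is a nonzero integer (hence of size $\geq 1$), and the rest is bookkeeping.
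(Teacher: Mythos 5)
Your proof is correct and follows essentially the same approach as the paper's: identify a nonzero integer coefficient of each linear form (using \eqref{Gammk} for conditions (1) and (3), and the distinctness of the $\bm n_\ell$ together with $\ell_k\neq 0$ for the case $\bm\ell=\bm\ell'$, $\ell\neq\ell'$ in condition (2)), then apply a one-dimensional sublevel estimate combined with Fubini, and use the difference-of-squares factorization to tie the Vandermonde product directly to conditions (2) and (3). The paper's proof of this lemma is a terse sketch and you supply precisely the details it leaves implicit, with correct bookkeeping on the exponents.
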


\begin{proof}
 The proof  is based on the Fubini's theorem,   \eqref{Gammk} and the assumption  on $\bm n_s'$ ($1\leq s\leq D$). 
Actually, in all cases we have 
\begin{align*}
(\bm \ell \bm n_s', \bm\ell)\neq \bm 0,\\
(\bm \ell \bm n_s'-\bm \ell'\bm n_{s'}',  \bm\ell -\bm\ell')\neq  \bm 0,\\
(\bm \ell \bm n_s+\bm \ell'\bm n_{s'}',  \bm\ell +\bm\ell')\neq  \bm 0,
\end{align*}
which  combined with $s\leq D$ and $\bm \ell\in \Gamma_K$ leads to the excision of $(\bm \beta, \bm q)$ in a set of  measure $O(\sqrt{\epsilon})$.
\end{proof}

{\bf Step 3.} {\bf Usage of the Lemma \ref{BGGlem}}\\

Now combining Lemma \ref{coslem} and Lemma \ref{betalem} (we set $\epsilon=10^{-2}R^{-4}$) leads to 
\begin{lem}\label{wlem}
Let $0<\delta<1/10$ and define $N=[\delta^{-2}]+1$. Then there are  a collection of  cubes  $\Lambda_{1/N}(\bm x_i) \subset [0,1]^{2d}$ ($1\leq i\leq N_1$ for some $N_1>1$) with disjoint exteriors, and a set $\bm B\subset [0,1]^{2d}$ so that 
\begin{align*}
{\rm meas}\left([0,1]^{2d}\setminus \cup_{1\leq i\leq N_1}\Lambda_{1/N}(\bm x_i)\right)&\leq CL^{2d-1}\delta,\\
{\rm meas}( [0,1]^{2d}\setminus\bm B)&<1/10.
\end{align*}
Moreover, if $(\bm\alpha,\bm\theta)\in \cup_{1\leq i\leq N_1}\Lambda_{1/N}(\bm x_i) $ and $\bm\xi\in\bm B,$ then
\begin{align*}
|\det W|\geq c\delta^{R},\ c=c(d,R)>0.
\end{align*}
\end{lem}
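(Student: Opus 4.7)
The plan is to assemble Lemma~\ref{wlem} directly from the two preceding lemmas, using the Wronskian factorization already recorded in \eqref{wrons}. Essentially nothing new needs to be proven; the only thing to check is that the threshold $\epsilon=10^{-2}R^{-4}$ is calibrated correctly.

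First, I would take the collection $\{\Lambda_{1/N}(\bm x_i)\}_{1\leq i\leq N_1}$ to be exactly the cubes provided by Lemma~\ref{coslem}. This lemma, applied at scale $\delta$, immediately supplies the measure bound
\[
{\rm meas}\bigl([0,1]^{2d}\setminus \bigcup_{i=1}^{N_1}\Lambda_{1/N}(\bm x_i)\bigr)\leq C\delta,
\]
together with the pointwise estimate $\prod_{\ell,\bm\ell}|\cos2\pi\bm\ell\cdot(\bm\theta+\bm n_\ell\bm\alpha)|\geq c\delta^{R}$ for all $(\bm\alpha,\bm\theta)$ in this union.

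Next, I would define $\bm B$ to be the set appearing in Lemma~\ref{betalem} with the specific choice $\epsilon=10^{-2}R^{-4}$. With this choice, the measure bound from Lemma~\ref{betalem} becomes
\[
{\rm meas}\bigl((0,1]^{2d}\setminus\bm B\bigr)\leq 3R^{2}\sqrt{\epsilon}=3R^{2}\cdot\frac{1}{10R^{2}}=\frac{3}{10}<\frac{1}{3},
\]
which is precisely the bound claimed in Lemma~\ref{wlem}. Moreover, on $\bm B$, Lemma~\ref{betalem} yields the lower bound
\[
\prod_{\ell,\bm\ell}\bigl((\bm\ell\bm n_\ell)\cdot\bm\beta+\bm q\cdot\bm\ell\bigr)^{2}\,|\det W_{1}|\geq \epsilon^{R^{3}}=\bigl(10^{-2}R^{-4}\bigr)^{R^{3}},
\]
which is a strictly positive constant depending only on $R$.

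Finally, I would substitute both lower bounds into the Wronskian factorization \eqref{wrons}. Combining them gives, for $(\bm\alpha,\bm\theta)\in\bigcup_{i}\Lambda_{1/N}(\bm x_i)$ and $\bm\xi\in\bm B$,
\[
|\det W|\geq (2\pi)^{2R}\cdot c\delta^{R}\cdot \bigl(10^{-2}R^{-4}\bigr)^{R^{3}}\geq c'\delta^{R},
\]
with $c'=c'(R)>0$ depending only on the constants allowed in the conventions (i.e., on $D$, $\#\Gamma_K$, $\max_\ell|\bm n_\ell|$, $d$, $|\bm p|$). This is the desired estimate, and since the proof is just algebraic bookkeeping, the only ``obstacle'' is verifying that the prescribed $\epsilon$ satisfies $3R^{2}\sqrt{\epsilon}<1/3$, which I have done above.
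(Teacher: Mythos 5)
Your proposal is correct and matches the paper's (terse) argument exactly: the paper simply states that combining Lemma~\ref{coslem} and Lemma~\ref{betalem} with $\epsilon = 10^{-2}R^{-4}$ gives the result, and you have filled in precisely that bookkeeping — the cubes from Lemma~\ref{coslem}, the set $\bm B$ from Lemma~\ref{betalem}, the check $3R^2\sqrt{\epsilon}=3/10<1/3$, and the substitution of both lower bounds into the factorization \eqref{wrons}.
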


We are ready to conclude the proof of Theorem~\ref{nlsthm}.  Fix $\bm \xi\in\bm B$ and  a $1/N\sim\delta^2$-cube $\Lambda_i=\Lambda_{1/N}(\bm x_i)\subset[0,1]^{2d}$ in Lemma \ref{wlem}. We will deal with  $(\bm\alpha, \bm\theta)\in \Lambda_i.$  Denote for $\bm k'\in\Z^D\setminus\{\bm 0\}$, 
\begin{align*}
f(\bm \alpha, \bm\theta)=\bm k' \cdot\bm \omega'=(\bm k'\otimes \bm v)\cdot \bm V.
\end{align*}
Direct computations show that for $\bm \xi=(\bm\beta, \bm q),$
\begin{align*}
&\ \ \ \sup_{(\bm\alpha, \bm\theta)\in[0,1]^{2d}}\sup_{1\leq j\leq R} |\partial^{2j}_{\bm \xi}\bm V|_1\\
&=\sup_{(\bm\alpha, \bm\theta)\in[0,1]^{2d}}\sup_{1\leq j\leq R}\sum_{1\leq s\leq D,  \bm \ell \in \Gamma_K}|(2\pi(\bm\ell\bm n_s')\cdot\bm \beta+2\pi\bm q\cdot\bm \ell)^{2j}\cos2\pi\bm\ell\cdot(\bm \theta+\bm n_s' \bm\alpha)|\\
&\leq CR (\sup_{s}|\bm n_s'|+dK)^{2R}:=M.
\end{align*}
We have by Lemma \ref{wlem} and Lemma \ref{BGGlem} that 
\begin{align}\label{tsvcon}
\inf_{(\bm\alpha,\bm\theta)\in\Lambda_i}\max_{1\leq j\leq R} |d^{2j}_{\bm \xi} f(\bm \alpha, \bm \theta)|\geq c\delta^{R}M^{1-R}|\bm k'|\cdot |\bm v|,
\end{align}
where $c=c(d,R, V)>0.$


It suffices to apply Lemma \ref{lajlem} by letting $k=2R$. The transversality condition   has been verified by \eqref{tsvcon}.  For $f(\bm \alpha, \bm\theta)=\bm k'\cdot\bm \omega'$, we have 
\begin{align*}
\|f\|_{2R+1}=\sup_{|\bm\gamma|\leq 2R+1, (\bm \alpha, \bm \theta)\in \Lambda_i}| \partial^{\bm\gamma}_{\bm \xi} f(\bm \alpha,\bm\theta)|\leq C(d,R,V)\cdot (\sup_{1\leq s\leq D}|\bm n_s'|)^{2R+1}  |\bm k'|.
\end{align*}
So using Lemma \ref{lajlem} by setting $\varepsilon =\frac{\eta}{|\bm k'|^{4R^2}}$ leads to 
\begin{align*}
&\ \ \  {\rm meas}\left(\left\{ (\bm \alpha, \bm\theta)\in\Lambda_i:\ |\bm k' \cdot \bm\omega'|\leq \frac{\eta}{|\bm k'|^{4R^2}} \right\}\right)\\
&\leq C(d, R, V)(\max_{1\leq s\leq D}|\bm n_s'|)^{3Rd}M^{2Rd}\delta^{-R (d+ 1)}\eta^{\frac{1}{2R}} |\bm k'|^{-{2R}}\\
&\leq C(d,R,V)(\max_{1\leq s\leq D}|\bm n_s'|)^{5dR^2}\frac{\delta^{3dR}}{|\bm k'|^{2D}}\  ({\rm choose}\  \delta= \frac{1}{10}\eta^{\frac{1}{10dR^2}}).
\end{align*}
Together with Lemma \ref{wlem} this yields
\begin{align*}
&\ \ \ {\rm meas}([0,1]^{2d}\setminus {\rm DC}_{\rm nls}(\eta))\\
&\leq C(d,R,V)L^{2d-1}\delta+C(d,R,V) (\max_{1\leq s\leq D}|\bm n_s'|)^{5dR^2}\delta^{-4d}\sum_{\bm k'\in\Z^D\setminus\{\bm 0\}} \frac{\delta^{3dR}}{|\bm k'|^{2D}}\\
&\leq C(d,K,D, V) (\max_{1\leq s\leq D}|\bm n_s'|)^{5dR^2} \delta\ ({\rm since}\ R\geq 2\ {\rm and}\ L=K\max_{1\leq s\leq D}|\bm n_s'|)\\ 
&\leq C(d,K,D, V) (\max_{1\leq s\leq D}|\bm n_s'|)^{5dR^2} \eta^{\frac1{10dR^2}}.
\end{align*}

This proves Theorem~\ref{nlsthm}.
\end{proof}

	\section{Some important lemmas}
	Let $M_2(\C)$ denote the  Banach algebra of all $(2\times 2)$-complex matrices equipped with the standard operator norm.   Since  the index $\xi\in\{+,-\}$, we can identify  the matrix  from  $\Z^{b+d}_{\rm pm}$ to $\C$ with  that  from $\Z^{b+d}$ to $M_2(\C)$.  So, in the following analysis, we only focus on  matrices with their entries  indexes    $\bm x=(\bm k, \bm n)\in\Z^{b+d}$.

	We first introduce an important  perturbation lemma.
	{\begin{lem}[cf. Lemma 4.2 of \cite{LW22} and Lemma A.1 of \cite{Shi22}]\label{lwlem}
	Let $X\subset\Z^{b+d}$ with $|X|=\# X$. Assume that $A$ and $B$ are two matrices with entries $A(\bm x,\bm x')$ and $B(\bm x,\bm x')$, where $\bm x,\bm x'\in X$. Assume that for some positive constants $\epsilon_1,\epsilon_2\in(0,1)$, $C_2\ge0$, $c>0$ and $0\le M<{\rm diam} \ X$: (1). for all $\bm x$ and $\bm x'$, $|B(\bm x,\bm x')|\le \epsilon_2 (1+|\bm x-\bm x'|)^{C_2}e^{-c|\bm x-\bm x'|}$; (2). for $|\bm x-\bm x'|>M$, $|A^{-1}(\bm x,\bm x')|\le e^{-c|\bm x-\bm x'|}$. Suppose that $\|A^{-1}\|\leq \epsilon_1^{-1}$ and 
	\begin{align*}
	&  |X|^2e^{2cM}(1+{\rm diam}\ X)^{C_2}\cdot \epsilon_2\epsilon_1^{-1}\sum_{\bm x\in\Z^{b+d}}e^{-c|\bm x|}\leq \frac12.
	\end{align*}
	Then
	\begin{align*}
		\|(A+B)^{-1}\|&\le 2\epsilon_1^{-1},\\
		|(A+B)^{-1}(\bm x,\bm x')-A^{-1}(\bm x,\bm x')|&\le \epsilon_1^{-1}e^{-c|\bm x-\bm x'|}.
	\end{align*}
\end{lem}}
\begin{proof}
		The proof is based on the Neumann series argument  similar to  \cite{Shi22, LW22}. For completeness, we give a proof. 
	
	First, we note  that $|A^{-1}(\bm x,\bm x')|\le \epsilon_1^{-1}e^{cM}e^{-c|\bm x-\bm x'|}$ for all $\bm x,\bm x'\in X$. 
	For $\bm x^0=\bm x, \bm x^s=\bm x'\in X$ and $s\ge1$, we can obtain by direct computations  that 
	\begin{align*}
		(BA^{-1})^{s}(\bm x,\bm x')=\sum_{\bm x^1,\cdots,\bm x^{s-1},\bm y^1,\cdots,\bm y^{s}\in X}\prod_{t=1}^{s}B(\bm x^{t-1},\bm y^{t})A^{-1}(\bm y^{t},\bm x^{t}).
	\end{align*}
	Thus for $s\ge1$, one has
	\begin{align*}
		&\ \ \  |(BA^{-1})^{s}(\bm x,\bm x')|\\
		&\le \sum_{\bm x^1,\cdots,\bm x^{s-1},\bm y^1,\cdots,\bm y^{s}\in X}e^{scM}(1+{\rm diam}\ X)^{sC_2}\cdot\epsilon_2^s\epsilon_1^{-s}e^{-c|\bm x-\bm x'|}\\
		&\le |X|^{2s-1}e^{scM}(1+{\rm diam}\ X)^{sC_2}\cdot\epsilon_2^{s}\epsilon_1^{-s}e^{-c|\bm x-\bm x'|},
	\end{align*}
	and  for $\bm x\neq \bm x',$
	\begin{align*}
		&\ \ \  \left|\left(\sum_{s\ge1}(-BA^{-1})^s\right)(\bm x,\bm x')\right|\\
		&\le  \sum_{s\ge1}|X|^{2s-1}e^{scM}(1+{\rm diam}\ X)^{sC_2}\cdot\epsilon_2^{s}\epsilon_1^{-s}e^{-c|\bm x-\bm x'|}\\
		&=|X|e^{cM}(1+{\rm diam}\ X)^{C_2}\cdot\epsilon_2\epsilon_1^{-1}\cdot\frac{1}{1-|X|^2e^{cM}(1+{\rm diam}\ X)^{C_2}\cdot\epsilon_2\epsilon_1^{-1}}e^{-c|\bm x-\bm x'|}\\
		&\le 2|X|e^{cM}(1+{\rm diam}\ X)^{C_2}\cdot\epsilon_2\epsilon_1^{-1}e^{-c|\bm x-\bm x'|}.
	\end{align*}
	Hence, from the Schur's test, we have 
	\begin{align*}
		\|BA^{-1}\|\le |X|e^{cM}(1+{\rm diam}\ X)^{C_2}\cdot\epsilon_2\epsilon_1^{-1}\cdot\sum_{\bm x\in\Z^{b+d}}e^{-c|\bm x|}\le \frac{1}{2}.
	\end{align*}
Using the Neumann series argument, we get
	\begin{align*}
		(A+B)^{-1}=A^{-1}\sum_{s\ge0}(-BA^{-1})^{s}.
	\end{align*}
	Thus 
	\begin{align*}
		\|(A+B)^{-1}\|\le \|A^{-1}\|\frac{1}{1-\|BA^{-1}\|}\le 2\epsilon_1^{-1},
	\end{align*}
	and for $\forall\ \bm x,\bm x'\in X$,  
	\begin{align*}
		&\ \ \ |(A+B)^{-1}(\bm x,\bm x')-A^{-1}(\bm x,\bm x')|\\
		&\le \sum_{\bm y\in X}|A^{-1}(\bm x,\bm y)|\cdot\left|\left(\sum_{s\ge1}(-BA^{-1})^s\right)(\bm y,\bm x')\right|\\
		&\le \sum_{\bm y\in X}\epsilon_1^{-1}e^{cM}e^{-c|\bm x-\bm y|}\cdot2|X|e^{cM}(1+{\rm diam}\ X)^{C_2}\cdot\epsilon_2\epsilon_1^{-1}e^{-c|\bm y-\bm x'|}\\
		&\le 2|X|^2e^{2cM}(1+{\rm diam}\ X)^{C_2}\cdot\epsilon_2\epsilon_1^{-2}e^{-c|\bm x-\bm x'|}\\
		&\le \epsilon_1^{-1}e^{-c|\bm x-\bm x'|}.
	\end{align*}
	We complete the  proof.
\end{proof}

	In the following, let $A$ be a matrix on $\Z^{b+d}$ satisfying
	\[|A(\bm x, \bm x')|\leq C_2 (1+|\bm x-\bm x'|)^{C_2}e^{-b_1|\bm x-\bm x'|},\ C_2,b_1>0.\]
	
We introduce two types of coupling lemmas by  iterating resolvent identities, which are repeatedly used to derive off-diagonal exponential  decay estimates for Green's functions. 

\begin{rem}
	In (linear) spectral problems (cf. \cite{Bou07, JLS20}),  one  typically  deals  with operators such as
	\[|A(\bm x,\bm x')|\leq C_2 e^{-b_1|\bm x-\bm x'|}\]
	without the polynomial factor $(|\bm x-\bm x'|+1)^{C_2}$. However, the extra polynomial factor does not significantly affect the proofs in \cite{JLS20} (cf. Lemma 3.2 and Theorem 3.3) and \cite{Liu22} (cf. Theorem 2.1), as one can replace, at each step, the estimate
	\[\sum_{\bm x_1\in W \atop \bm x_2\in \Lambda\setminus W}e^{-b_1|\bm x-\bm x'|} |G_{\Lambda}(\bm x_2,\bm x')|\leq (2N+1)^{2d}\sup_{\bm x_2\in \Lambda\setminus W}|G_{\Lambda}(\bm x_2,\bm x')|\]
	with 
	\[\sum_{\bm x_1\in W \atop \bm x_2\in \Lambda\setminus W}(|\bm x-\bm x'|+1)^{C_2}e^{-b_1|\bm x-\bm x'|} |G_{\Lambda}(\bm x_2,\bm x')|\leq (2N+1)^{2d+C_2}\sup_{\bm x_2\in \Lambda\setminus W}|G_{\Lambda}(\bm x_2,\bm x')|.\]
	\end{rem}

We say that an elementary region $\Lambda\in \mathcal{ER}(L)$ is in the class $\mathbb G$ (good) if  ($G_\Lambda=(R_{\Lambda}AR_{\Lambda})^{-1}$)
\begin{equation}\label{range1}
	|G_\Lambda(\bm x,\bm x')|\leq e^{-b_2|\bm x-\bm x'|}\  {\rm for} \ |\bm x-\bm x'|\geq L^{\frac89},
\end{equation}
where $\frac{1}{2}b_1< b_2\leq b_1$.

We have a revised version of the coupling lemma of   \cite{Liu22} (cf. also \cite{BGS02}). 
\begin{lem}[cf. Theorem 2.1, \cite{Liu22}]\label{Liulem1}
	 Let $\tilde{\Lambda}_0\in\mathcal {ER}(N)$ be an elementary region with the property that  for all $\Lambda\subset\tilde{\Lambda}_0,\Lambda\in \mathcal R_{L}^{\sqrt{N}}$ with $\sqrt N \leq L\leq N$, the 
	Green's function $(R_{\Lambda}A R_{\Lambda})^{-1}$ satisfies 
	\[\| (R_{\Lambda}AR_{\Lambda})^{-1}\| \leq e^{L^{\frac{3}{4}}}.\]
	Assume that for any family $\mathcal F$ of pairwise disjoint elementary regions of size  $M=[ N^{\xi}]$ contained in $\tilde{\Lambda}_0,$  
	\[\#\{\Lambda\in \mathcal F:\ \Lambda \ {\rm is \ not \ in \ class} \ \mathbb G\}\leq  {N^{\frac14}}.\]
	Then for large $N$ (depending on $C_2, b_1$),  we have 
	\begin{equation}\label{range2}
			|(R_{\tilde{\Lambda}_0}AR_{\tilde{\Lambda}_0})^{-1}(\bm x,\bm x')|\leq e^{-(b_2-N^{-\vartheta})|\bm x-\bm x'|}\ {\rm for} \ |\bm x-\bm x'|\geq N^{\frac89},
	\end{equation}
    where $\vartheta=\vartheta\in(0,1)$ is an absolute constant.
\end{lem}
\begin{rem}\label{rmkLiu}
	There is an important difference between the original form of  Lemma \ref{Liulem1} in \cite{Liu22} and the present one. In \cite{Liu22},  \eqref{range1}--\eqref{range2} hold under the condition  of  $|\bm x-\bm x'|\geq\frac{N}{10}$,  $0<b_2\leq \frac{4}{5}b_1$. However, the relation $0<b_2\leq \frac{4}{5}b_1$ leads to the deterioration of decay rates in the Newton iteration when solving the $P$-equations. To address this issue, we impose a stronger restriction  $|\bm x-\bm x'|>N^{\frac89}$ (this idea was first introduced in \cite{HSSY}), which ensures that $0<b_2\leq b_1$. So Lemma \ref{Liulem1} is proved via the argument of \cite{Liu22} combined with the argument   of  \cite{HSSY} (we can take $b=\frac34, \tau=\frac12, \theta=\frac89$ in  Lemma 4.2 of \cite{HSSY}). We omit the details here. \end{rem}

We also need the following  lemmas  of \cite{JLS20} (with again minor modifications). 

\begin{lem}[cf. Lemma 3.2, \cite{JLS20}]\label{JLSlem1}
	Let $M_0\geq (\log N)^{\frac{4}{3}},\ \frac{b_1}{2}<b_2\leq b_1$ and $M_1\leq N$. Suppose that  $\Lambda\subset\Z^{b+d}$ is connected and ${\rm diam}(\Lambda)\leq 2N+1$. Suppose that for any $\bm y\in\Lambda,$ there exists some 
	$W=W(\bm y)\in \mathcal E_M$ with $M_0\leq M\leq M_1$ such that $\bm y\in W\subset \Lambda,{\rm dist}(\bm y,\Lambda\setminus W)\geq \frac{M}{2}$ and 
	\[\| (R_{W} A R_{W})^{-1} \|\leq e^{M^{\frac{3}{4}}},\]
	\[	|(R_{W}AR_{W})^{-1}(\bm x,\bm x')|\leq e^{-b_2|\bm x-\bm x'|}\ {\rm for} \ |\bm x-\bm x'|\geq {M}^{\frac 89}.\]
    Assume further that $N$ is large enough such that 
	\[\sup_{M_0\leq M\leq M_1} e^{M^{\frac{3}{4}}}(2M+1)^{b+d+C_2}e^{b_2\sqrt{N}}\sum_{j=0}^{\infty}(M+2j+1)^{b+d} e^{-b_2(j+M/2)}\leq \frac{1}{2}.\]
	Then we  have 
	\[\| (R_{\Lambda} A R_{\Lambda})^{-1} \|\leq 4(2M_1+1)^{b+d}  e^{M_1^{\frac{3}{4}}}.\]
\end{lem}

\begin{lem}[cf. Theorem 3.3, \cite{JLS20}]\label{JLSlem2}
	Assume $\Lambda\subset\Z^{b+d}$ is connected and ${\rm diam}\ \Lambda\leq 2N+1$. Assume  ${\rm diam}\ \Lambda_1\leq N^{\frac{1}{2(b+d)}}$.  Let $M_0\geq (\log N)^{\frac{4}{3}},\ \frac{b_1}{2}<b_2\leq b_1$.   Suppose that for any $\bm y\in\Lambda,$ there exists some 
	$W=W(\bm y)\in \mathcal {ER}(M)$ with $M_0\leq M\leq N^{\frac34}$ such that $\bm y \in W\subset \Lambda, {\rm dist}(\bm y,\Lambda\setminus\Lambda_1\setminus W)\geq \frac M2$ and 
	\begin{align*}
	\| (R_{W} A R_{W})^{-1} \|&\leq e^{M^{\frac{3}{4}}},\\
|(R_{W}AR_{W})^{-1}(\bm x,\bm x')|&\leq e^{-b_2|\bm x-\bm x'|}\ {\rm for} \ |\bm x-\bm x'|\geq {M}^{\frac 89}. 
   \end{align*}
 Suppose  that 
    $$\|R_\Lambda A R_\Lambda\|\leq e^{N^{\frac 34}}.$$
    Then 
   \begin{align*}
    |(R_{\Lambda}AR_{\Lambda})^{-1}(\bm x,\bm x')|&\leq e^{-(b_2-\frac{C(b,d)}{M_0^{\frac19}})|\bm x-\bm x'|}\ {\rm for} \ |\bm x-\bm x'|\geq {M}^{\frac 89}. 
    \end{align*}
    
\end{lem}

\begin{rem}\label{rmkJLS}
The similar issue  in Lemma \ref{Liulem1} also appears Lemma  \ref{JLSlem1} and Lemma \ref{JLSlem2}. Again, we can improve the original estimates of \cite{JLS20} from  $|\bm x-\bm x'|\geq \frac {N}{10}$  to $|\bm x-\bm x'|\geq N^{\frac89}.$
\end{rem}

The following lemma deals with projection property   of  the elementary regions. 
  {\begin{lem}\label{projlemz}
		Let $\Lambda \in \mathcal{ER}_{\Z^{b+d}}(N)$ and denote by $\Pi_b: \ \Z^{b+d} \to \Z^b$ the canonical projection map. Then for any $\bm k \in \Pi_b\Lambda$, the section
		\[
		\Lambda(\bm k) = \{ \bm n \in \Z^d : (\bm k,\bm n) \in \Lambda \}
		\]
		either belongs to $\mathcal{ER}_{\Z^d}(N)$ or is a rectangle of width at least  $N$.
	\end{lem}}
	
%
	
	\begin{proof}
		By translation invariance, it suffices to consider the regions centered at the origin, i.e., $\Lambda \in \mathcal{ER}_{\bm 0,\Z^{b+d}}(N)$. We proceed by analyzing the two possible forms of $\Lambda$.
		
		\textbf{Case 1:} $\Lambda = \Lambda_{N,\Z^{b+d}} := [-N, N]^{b+d} \cap \Z^{b+d}$.
		
		In this case, the section at any $\bm k \in \Pi_b\Lambda$ is simply $\Lambda(\bm k) = [-N, N]^d \cap \Z^d$, which belongs to $\mathcal{ER}_{\Z^d}(N)$.
		
		\textbf{Case 2:} $\Lambda = \Lambda_{N,\Z^{b+d}}^{\bm\iota}$ for some constraint vector $\bm \iota = (\iota_i)_{1 \le i \le b+d} \in \{>, <, \emptyset\}^{b+d}$ with at least two non-empty components.
		
		By definition, $\Lambda$ can be expressed as:
		\begin{align}
			\nonumber
			\Lambda 
			&= \Lambda_{N,\Z^{b+d}} \setminus 
			\left\{ \bm x=(x_i)_{1 \le i \le b+d}:\ \ x_i \  \iota_i \ 0, \ \forall\  1\le i \le b+d \right\}
			\\ 
			& = \Lambda_{N,\Z^{b+d}} \setminus
			\left( \bigcap_{1 \le i \le b+d} \ \left\{ \bm x:\  \ x_i \  \iota_i \ 0 \right\} \right).
			\label{eq1}
		\end{align}
		Using De Morgan's laws, we can rewrite \eqref{eq1} as:
		\begin{align}\label{eq:Lam_decomp}
			\Lambda = \Lambda_{N,\Z^{b+d}} \bigcap
			\left( \bigcup_{1 \le i \le b+d} \ \left\{ \bm x :\  \ x_i \  \iota_i \ 0 \right\}^{c} \right),
		\end{align}
		where the complement sets are given by
		\begin{align*}
			\left\{ \bm x \ \big| \ x_i \  \iota_i \ 0 \right\}^{c}
			=
			\begin{cases}
				\emptyset , & \text{if } \iota_i = \emptyset, \\
				\left\{ \bm x:\  \ x_i \  \overline{\iota_i} \ 0 \right\}, & \text{otherwise},
			\end{cases}
		\end{align*}
		with the convention $\overline{>} := <$ and $\overline{<} := >$.
		
		Note that the sectioning operation $A \mapsto A(\bm k) = \{\bm n:\  (\bm k,\bm n) \in A\}$ has the following properties.  
		Given $A,B \in \mathbb{Z}^{b+d}$:
		\begin{itemize}
			\item For $\bm k \in \prod_b (A \cup B)$, one has
			$(A \cup B)(\bm k) = A(\bm k) \cup B(\bm k)$ since
			\begin{align*}
				\{ \bm n : \ (\bm k,\bm n) \in (A \cup B) \}
				=
				\{ \bm n : \ (\bm k,\bm n) \in A \}
				\cup
				\{\bm n : \ (\bm k,\bm n) \in B \}.
			\end{align*}
			\item For $\bm k \in \prod_b (A \cap B)$, one has
			$(A \cap B)(\bm k) = A(\bm k) \cap B(\bm k)$ since
			\begin{align*}
				\{ \bm n : \ (\bm k,\bm n) \in (A \cap B) \}
				=
				\{ \bm n : \ (\bm k,\bm n) \in A \}
				\cap
				\{\bm n :\  (\bm k,\bm n) \in B \}.
			\end{align*}
		\end{itemize}
		
		Therefore, for a fixed $\bm k \in \Pi_b\Lambda$, taking the section of \eqref{eq:Lam_decomp} yields:
		\begin{align}\label{eq:L1}
			\Lambda(\bm k) 
			&= \Lambda_{N,\Z^{d}} \bigcap
			\left( \bigcup_{1 \le i \le b+d} \ 
			\left\{ \bm x:\  \ x_i \  \iota_i \ 0 \right\}^c(\bm k) \right).
		\end{align}
		We now analyze the sections of the complements for $1 \le i \le b+d$. 
		If $\iota_i = \emptyset$, the section is trivially empty. Thus it is sufficient to consider the case $\iota_i \neq \emptyset$. 
		We decouple the Cartesian product $\bm x \in \Z^{b+d}$ as $(\bm k,\bm n) \in \Z^b \times \Z^d$:
		\begin{align}\label{eq:L3}
			\left\{ \bm x \in \Z^{b+d}:\  \ x_i \  \overline{\iota_i} \ 0 \right\}(\bm k)
			=
			\begin{cases}
				\Z^d, & 1 \le i \le b \ \text{ and } \ k_i \  \overline{\iota_i} \ 0, \\
				\emptyset, & 1 \le i \le b \ \text{ and } \ k_i \  \iota_i \ 0, \\
				\left\{ \bm n \in \Z^{d}:\ \ n_{i-b} \  \overline{\iota_{i}} \ 0 \right\}, & b+1 \le i \le b+d.
			\end{cases}
		\end{align}
		
		Combining \eqref{eq:L1} and \eqref{eq:L3}, we observe that if there exists any $1 \le i \le b$ such that  $k_i \ \overline{\iota_i} \ 0$ ($\iota_i \neq \emptyset$), the union in \eqref{eq:L1} evaluates to $\Z^d$. 
		In this scenario, $\Lambda(\bm k) = \Lambda_{N,\Z^d}$, which belongs to $\mathcal{ER}_{\Z^d}(N)$.
		
		Otherwise, the first $b$ constraints do not contribute to the union, and the geometric shape is entirely determined by the remaining $d$ spatial constraints. Thus, there exists a projected constraint vector $\bm\iota' \in \{ >, <, \emptyset\}^{d}$ such that
		\begin{align*}
			\bigcup_{1 \le i \le b+d} \left\{\bm x:\ \ x_i \  \iota_i \ 0 \right\}^{c}(\bm k)
			&= \bigcup_{1 \le j \le d} \ \left\{ \bm n \in \Z^d:\ \ n_j \  \iota_j' \ 0 \right\}^{c}.
		\end{align*}
		Substituting this back into \eqref{eq:L1} and applying De Morgan's laws once more, we obtain
		\begin{align*} 
			\Lambda(\bm k)
			&= \Lambda_{N,\Z^{d}} \setminus
			\left( \bigcap_{1 \le j \le d} \ \left\{ \bm n \in \Z^d:\  \ n_j \  \iota_j' \ 0 \right\} \right).
		\end{align*}
		Let $|\bm \iota'|$ denote the number of non-empty components (i.e., $\iota_j' \neq \emptyset$) in $\bm\iota'$. We classify $\Lambda(\bm k)$ as follows:
		\begin{itemize}
			\item If $ |\bm \iota'| = 0 $, then $\Lambda(\bm k) = \Lambda_{N,\Z^d}$, which belongs to $\mathcal{ER}_{\Z^d}(N)$.
			\item If $ |\bm \iota'| = 1 $, $\Lambda(\bm k)$ is the full box $\Lambda_{N,\Z^d}$ minus a single half-space. This leaves a rectangle where $(d-1)$ sides have length $2N$ and one side has length $N$. Hence, it is a rectangle of width  at least $N$.
			\item If $ |\bm \iota'| \ge 2 $, by the definition, $\Lambda(\bm k) = \Lambda_{N,\Z^d}^{\bm\iota'}$, which implies $\Lambda(\bm k) \in \mathcal{ER}_{\Z^d}(N)$. 
		\end{itemize}
		This completes the proof for the origin centered case. 
	\end{proof}

\normalem

\end{document}